\documentclass[11pt,journal,onecolumn]{IEEEtran}
\IEEEoverridecommandlockouts
\usepackage{amsmath}
\usepackage{amsfonts}
\usepackage{amssymb}
\usepackage{amscd}
\usepackage[dvips]{graphicx}
\usepackage{epsfig}
\usepackage{subfigure}
\usepackage{newlfont}
\usepackage{cite}

\hyphenation{op-tical net-works semi-conduc-tor}
\newtheorem{theorem}{Theorem}
\newtheorem{lemma}{Lemma}

\newtheorem{corollary}{Corollary}

\newcommand{\usrind}{\lcb i+1, \ldots, K\rcb}
\newcommand{\Hjj}{\mathbf{H}_{jj}}
\newcommand{\Hnew}{\mathbf{H}}

\newcommand{\Pnew}{\mathbf{P}}
\newcommand{\Hii}{\mathbf{H}_{ii}}
\newcommand{\Yj}{\mathbf{y}_{j}}
\newcommand{\Yi}{\mathbf{y}_{i}}
\newcommand{\Honebar}{\mathbf{\overline{H}}_{11}}
\newcommand{\Htwobar}{\mathbf{\overline{H}}_{12}}
\newcommand{\Hthreebar}{\mathbf{\overline{H}}_{22}}
\newcommand{\Ponebar}{\mathbf{\overline{P}}_{1}}
\newcommand{\Ptwobar}{\mathbf{\overline{P}}_{2}}
\newcommand{\Yonebar}{\mathbf{\overline{y}}_{1}}
\newcommand{\Ytwobar}{\mathbf{\overline{y}}_{2}}
\newcommand{\Xonebar}{\mathbf{\overline{x}}_{1}}
\newcommand{\Xtwobar}{\mathbf{\overline{x}}_{2}}
\newcommand{\Zonebar}{\mathbf{\overline{z}}_{1}}
\newcommand{\Ztwobar}{\mathbf{\overline{z}}_{2}}
\newcommand{\Sbar}{\mathbf{\overline{s}}}
\newcommand{\Hji}{\mathbf{H}_{ji}}
\newcommand{\Hij}{\mathbf{H}_{ij}}
\newcommand{\Hplus}{\mathbf{H}_{i+1,i}}
\newcommand{\Hminus}{\mathbf{H}_{i-1,i}}
\newcommand{\siplus}{\mathbf{s}_{i+1,i}}
\newcommand{\siminus}{\mathbf{s}_{i-1,i}}
\newcommand{\ziplus}{\mathbf{z}_{i+1}}
\newcommand{\ziminus}{\mathbf{z}_{i-1}}
\newcommand{\Pj}{\mathbf{P}_{j}}
\newcommand{\Pdashi}{\mathbf{P}_{i}}
\newcommand{\myxi}{\mathbf{x}_{i}}
\newcommand{\myxj}{\mathbf{x}_{j}}
\newcommand{\Zj}{\mathbf{z}_{j}}
\newcommand{\Zi}{\mathbf{z}_{i}}
\newcommand{\Honej}{\mathbf{H}_{1j}}
\newcommand{\HKj}{\mathbf{H}_{Kj}}
\newcommand{\Honeone}{\mathbf{H}_{11}}
\newcommand{\Pone}{\mathbf{P}_{1}}
\newcommand{\PK}{\mathbf{P}_{K}}
\newcommand{\Pk}{\mathbf{P}_{K}}
\newcommand{\Honek}{\mathbf{H}_{1K}}
\newcommand{\Hkone}{\mathbf{H}_{K1}}
\newcommand{\Hkk}{\mathbf{H}_{KK}}
\newcommand{\Hionebar}{\mathbf{\overline{H}}_{i1}}
\newcommand{\Pionebar}{\mathbf{\overline{P}}_{i1}}
\newcommand{\Hkonebar}{\mathbf{\overline{H}}_{Ki}}
\newcommand{\Hitwobar}{\mathbf{\overline{H}}_{i,i+1}}
\newcommand{\Pitwobar}{\mathbf{\overline{P}}_{i2}}
\newcommand{\Honekbar}{\mathbf{\overline{H}}_{1,i+1}}
\newcommand{\Hithreebar}{\mathbf{\overline{H}}_{iK}}
\newcommand{\Pithreebar}{\mathbf{\overline{P}}_{i3}}
\newcommand{\Honetwobar}{\mathbf{\overline{H}}_{1i}}
\newcommand{\Hifourbar}{\mathbf{\overline{H}}_{i,K-1}}
\newcommand{\Pifourbar}{\mathbf{\overline{P}}_{i4}}
\newcommand{\HKthreebar}{\mathbf{\overline{H}}_{K,i+1}}
\newcommand{\Yone}{\mathbf{y}_{1}}
\newcommand{\Yk}{\mathbf{y}_{K}}
\newcommand{\Zk}{\mathbf{z}_{K}}
\newcommand{\Zone}{\mathbf{z}_{1}}
\newcommand{\Skone}{\mathbf{s}_{K,1}}
\newcommand{\Sonek}{\mathbf{s}_{1,K}}
\newcommand{\Skonetwoi}{\mathbf{s}_{K,\{1,2,\ldots,i\}}}
\newcommand{\Soneacomp}{\mathbf{s}_{1,\usrind}}
\newcommand{\Sonektwoi}{\mathbf{s}_{1,\{K,2,\ldots,i\}}}
\newcommand{\Skacomp}{\mathbf{s}_{K,\mathcal{A}-\{K,2,3,\ldots,i\}}}
\newcommand{\sumneq}{\displaystyle\sum_{j=1,\,j \neq i}^{K}}
\newcommand{\sumneqmod}{\displaystyle\sum_{i=1,\,i \neq j}^{K}}
\newcommand{\mysum}{\displaystyle\sum_{i=1}^{L}}
\newcommand{\mysumK}{\displaystyle\sum_{i=1}^{K}}

\newcommand{\Iden}{\mathbf{I}}

\newcommand{\labs}{\left\lvert}
\newcommand{\rabs}{\right\rvert}
\newcommand{\lb}{\left(}
\newcommand{\rb}{\right)}
\newcommand{\lsqb}{\left[}
\newcommand{\rsqb}{\right]}
\newcommand{\lcb}{\left\{}
\newcommand{\rcb}{\right\}}
\newcommand{\mutuali}{I(\mathbf{x}_{i}^{n};\mathbf{y}_{i}^{n})}
\newcommand{\expec}{\mathbf{E}}
\newcommand{\oone}{\mathcal{O}(1)}
\newcommand{\sums}{\displaystyle\sum_{j \in S}}

\newcommand{\Uonetwobar}{\mathbf{\overline{U}}_{12}}
\newcommand{\eigm}{\overline{\Sigma}}

\newcommand{\Htildetwtwo}{\widetilde{\mathbf{H}}_{22}}
\newcommand{\Hitilde}{\tilde{\mathbf{H}}_{ii}}
\newcommand{\Complex}{\mathbb{C}}
\newcommand{\Real}{\mathbb{R}}
\newcommand{\dimone}{\displaystyle\sum_{j=1}^{i}M_{j}}
\newcommand{\dimtwo}{\displaystyle\sum_{j=i+1}^{K}M_{j}}
\newcommand{\Mri}{M_{r_{i}}}
\newcommand{\Msi}{M_{s_{i}}}
\newcommand{\mean}[1]{\mathbb{E}\left\{#1\right\}}

\newcommand{\Hbar}{\mathbf{\overline{H}}}

\newcommand{\setxysizeo}{\epsfxsize=6.2in \epsfysize=4in}
\newcommand{\setxysizeone}{\epsfxsize=6.5in}
\title{On the Generalized Degrees of Freedom of the $~K$-user Symmetric MIMO Gaussian Interference Channel}
\author{
\IEEEauthorblockN{Parthajit Mohapatra and Chandra R. Murthy\\}
\IEEEauthorblockA{Dept. of ECE, Indian Institute of Science\\
Bangalore  560 012, India\\
Email: \{partha, cmurthy\}@ece.iisc.ernet.in}}
\begin{document}
\maketitle
\begin{abstract}
The $K$-user symmetric multiple input multiple output (MIMO) Gaussian interference channel (IC) where each transmitter has $M$ antennas and each receiver has $N$ antennas is studied from a generalized degrees of freedom (GDOF) perspective. An inner bound on the GDOF is derived using a combination of techniques such as treating interference as noise, zero forcing (ZF) at the receivers, interference alignment (IA), and extending the Han-Kobayashi (HK) scheme to $K$ users, as a function of the number of antennas and the $\log \text{INR} / \log \text{SNR}$ level. Three outer bounds are derived, under different assumptions of cooperation and providing side information to receivers. The novelty in the derivation lies in the careful selection of side information, which results in the cancellation of the negative differential entropy terms containing signal components, leading to a tractable outer bound. The overall outer bound is obtained by taking the minimum of the three outer bounds. The derived bounds are simplified for the MIMO Gaussian symmetric IC to obtain outer bounds on the generalized degrees of freedom (GDOF). Several interesting conclusions are drawn from the derived bounds. For example, when $K > \frac{N}{M}+1$, a combination of the HK and IA schemes performs the best among the schemes considered. When $\frac{N}{M} < K \leq \frac{N}{M}+1$, the HK-scheme outperforms other schemes and is shown to be GDOF optimal. In addition, when the SNR and INR are at the same level, ZF-receiving and the HK-scheme have the same GDOF performance. It is also shown that many of the existing results on the GDOF of the Gaussian IC can be obtained as special cases of the bounds, e.g., by setting $K=2$ or the number of antennas at each user to $1$. 
\end{abstract}
\section{Introduction}\label{sec:introduction}
Approximate capacity characterization of the interference channel has recently received considerable research attention, both as a means to analyze the capacity scaling behavior as well as to obtain guidelines for interference management in a multi-user environment. Towards this, the concept of generalized degrees of freedom (GDOF) was introduced in \cite{etkin1} as a means of quantifying the extent of interference management in terms of the number of free signaling dimensions in a two-user interference channel (IC). In a multiuser MIMO setup, the use of multiple antennas at the transmitters and receivers can provide additional dimensions for signaling, which can in turn improve the GDOF performance of the IC. Characterizing the GDOF performance of a multiuser MIMO IC is therefore an important problem, and is the focus of this work.

Among the different possible methods to mitigate the effect of interference, two main approaches have typically been adopted in the literature. The first is based on the notion of splitting the message into private and public parts (also known as the Han-Kobayashi (HK) scheme) \cite{han1}, \cite{etkin1}. The second is based on the idea of interference alignment \cite{maddahali1, maddahali2, cadambe1}. These schemes are based on different ideas: the former allows part of the interference to be decoded and canceled at the unintended receivers, while the latter makes the interfering signals cast \emph{overlapping shadows} \cite{cadambe1} at the unintended receivers, allowing them to project the received signal in an orthogonal direction and remove the effect of interference. 

The HK-scheme proposed in \cite{han1} is known to achieve the largest possible rate region for the two-user single input single output (SISO) IC. Further, it can achieve a rate that is within 1 bit/s/Hz of the capacity of the channel for all values of the channel parameters \cite{etkin1}. Different variants of the HK-scheme for the two user IC can be found in \cite{telatar1, tarokh1, mahesh3}. The concept of interference alignment (IA) originated from the work of Maddah-Ali \textit{et al.} in \cite{maddahali1}, and was subsequently used in the DOF analysis of the $X$-channel in \cite{maddahali2} and \cite{jafar3}. This notion of IA was crystallized by Cadambe and Jafar in \cite{cadambe1}. Here, the precoding matrix is designed such that the interfering signals occupy a reduced dimension at all of the unintended receivers, while the desired signal remains decodable at the intended receiver. 
The idea of IA was extended to the $K$-user MIMO scenario in \cite{gou1}. More works on IA can be found in \cite{cadambe2, cadambe3, cadambe4}.  

The GDOF performance of the two-user MIMO IC was characterized in \cite{tarokh1}. It was extended to the $X$-channel and the $K$-user SISO IC in \cite{huang1} and \cite{jafar2}, respectively. In \cite{gou3}, the idea of message splitting was used to derive the GDOF in a SIMO setting when $K = N+1$, where $N$ is the number of receive antennas at each user. However, none of the existing studies consider the GDOF performance of the $K$-user MIMO Gaussian IC for $K > 2$. Moreover, the achievable GDOF performance of the HK-scheme and IA has not been contrasted in the literature.

Past work by several researchers has provided bounds on the degrees of freedom (DOF) and GDOF for multiuser ICs  (e.g., \cite{jafar0, cadambe1, gou1}). In \cite{jafar0}, a MIMO multiple access channel (MAC) outer bound on the sum capacity of the MIMO GIC (Gaussian IC) was derived, and simplified to obtain a bound on the DOF. It was also shown that zero forcing (ZF) receiving/precoding is sufficient to achieve all the available DOF. In \cite{cadambe1}, an outer bound on the DOF for the $K$-user SISO Gaussian symmetric IC was presented, and the novel idea of interference alignment (IA) developed in this work was found to be DOF optimal. Subsequently, in \cite{gou1}, an outer bound on the DOF for the $K$-user symmetric MIMO Gaussian IC was developed, and found to be tight when $R \triangleq \frac{\max\lb M,N\rb}{\min\lb M,N \rb}$ is an integer, where $M$ and $N$ are the number of transmitting and receiving antennas, respectively. The outer bound in \cite{gou1} was  improved in \cite{ghasemi1} by considering multiple ways of cooperation among users. The achievable scheme derived in \cite{ghasemi1} was found to be tight when $K\geq \frac{M+N}{\text{gcd}(M,N)}$, where $\text{gcd}(M,N)$ denotes the greatest common divisor of $M$ and $N$.  However, although several outer bounds have been derived for the DOF/GDOF, general outer bounds on the sum rate for the $K$-user MIMO GIC for $K>2$ that are valid for all values of the channel parameters are not available in the existing literature. Deriving such bounds can offer important insight into the performance limits of multiuser ICs.

In this paper, three new outer bounds on the sum rate are proposed, which are valid for all values of channel parameters. Further, these outer bounds are simplified to obtain outer bounds on the GDOF in the symmetric case. The overall outer bound on the GDOF is obtained by taking the minimum of the three bounds and the interference-free GDOF of $\min(M,N)$ per user. The first outer bound is based on using a combination of user cooperation similar in flavor to \cite{ghasemi1}, in conjunction with providing a subset of receivers with side information. The other two outer bounds are based on providing carefully selected side information to the receivers in such a way that the the negative differential entropy terms in the sum rate bound that contain a signal component cancel out, due to which, it is possible to obtain a single letter characterization. 

The three bounds on the GDOF perform differently depending on the values of the parameters $\alpha \triangleq \frac{\log \text{INR}}{\log \text{SNR}}$, $M, N$ and $K$; and this in turn provides insights into the performance limits of the system under different schemes for interference management. Several useful and interesting insights on the relative merits of the different schemes are obtained from the bounds. 
For example, when $K > \frac{N}{M} + 1$, neither the HK-scheme nor IA can uniformly outperform the other; which scheme is the better of the two depends on the $\log \text{INR} / \log \text{SNR}$ level. The performance of the proposed achievable schemes is compared with the outer bounds.  Using this, the GDOF-optimality of the achievable scheme is  established in some cases. Further, many of the existing results in the literature can be obtained as special cases of this work.

In summary, the major contributions of this paper are as follows:
\begin{enumerate}
 \item Three outer bounds on the sum rate are derived, presented as Theorems \ref{theorem-outer1}, \ref{theorem-outer2} and \ref{theorem-outer3}. These theorems apply to all channel conditions when the channel coefficients are drawn from a continuous distribution such as the  Gaussian distribution. 
 \item The three theorems are specialized to the symmetric MIMO Gaussian IC to obtain outer bounds on the per user GDOF, stated as Lemmas \ref{lemma-outer1}, \ref{lemma-outer2} and \ref{lemma-outer3}.  To the best of the authors' knowledge, result derived here represents the tightest known outer bound on the per user GDOF of the  $K$ $(K>2)$ user symmetric MIMO Gaussian IC, except for some specific cases mentioned in Section~\ref{sec:compexisting}.
\item The scheme for providing side information employed in Theorem~\ref{theorem-outer2} is new. 
 \item An inner bound is derived for the symmetric MIMO Gaussian IC as a combination of the HK-scheme, IA, zero-forcing (ZF) receiving, and treating interference as noise. To the best of the authors' knowledge, the extension of the HK-scheme to the multiuser MIMO scenario presented here is new.
 \item The interplay between the HK-scheme and IA is explored from an achievable GDOF perspective.
 \item Lemmas \ref{lemma-outer1} and \ref{lemma-outer3} are used to establish the optimality of the achievable scheme, when $\frac{N}{M} < K \leq \frac{N}{M}+1$. The corresponding GDOF result in Lemma~\ref{lemma-outer2} establishes that treating interference as noise is GDOF optimal when $M=N$ and for all $K$, in the weak interference regime.
\end{enumerate}

The following notation is used in the sequel. Lower case or upper case letters are used to represent scalars. Small boldface letters represent a vector, whereas capital boldface letters represent matrices. $\mathbf{x}^{n} = \lsqb \mathbf{x}_{1}^{T}, \mathbf{x}_{2}^{T}, \ldots,\mathbf{x}_{n}^{T}\rsqb^{T}$ represents a long vector consisting of the sequence of vectors $\mathbf{x}_{i}$, $i = 1, 2, \ldots, n$. $h(.)$ represents differential entropy, $I( \cdot;\cdot)$ represents mutual information, $\mathbf{I}_{L}$ is the $L\times L$ identity matrix, and $\text{blkdiag}(\mathbf{H}_{11}, \mathbf{H}_{22}, \ldots,\mathbf{H}_{L,L})$  represents a matrix which is obtained by block diagonal concatenation of matrices $\mathbf{H}_{11},\mathbf{H}_{22},\ldots,\mathbf{H}_{L,L}$. 

The rest of the paper is organized as follows. Section \ref{sec:prelim} describes formally the system model and defines GDOF. In Section \ref{sec:outerbound}, three outer bounds are derived and specialized to the case of a symmetric MIMO Gaussian IC.  Section \ref{sec:innerbound} presents the main results on the achievable GDOF performance. In Sec. \ref{sec:discussion}, some numerical examples are considered to obtain better insight into the bounds and to compare the performance of the various schemes. Concluding remarks are offered in Sec. \ref{sec:conclusion}; and the proofs of the lemmas and theorems are presented in the Appendix.
\section{Preliminaries}\label{sec:prelim}
\subsection{System Model}\label{sec:sysmod}
Consider a MIMO GIC with $K$ transmitter-receiver pairs, with $M_{i}$ antennas at the $i$-th transmitter and $N_{j}$ antennas at the $j$-th receiver. Let $\Hji$ represent the  $N_{j} \times M_{i}$ channel gain matrix from transmitter $i$ to receiver $j$. The channel coefficients are assumed to be drawn from a continuous distribution such as the Gaussian distribution. The received signal at the $j$-th receiver, denoted $\Yj$, is modeled as 
\begin{equation}
\Yj=\Hjj\mathbf{x}_{j}+\sum_{i=1, i \neq j}^K \Hji\mathbf{x}_{i} +  \Zj,
\end{equation}
where $\Zj$ is the complex symmetric Gaussian noise vector, distributed as $\Zj\sim \mathcal{CN} (\mathbf{0}, \mathbf{I}_{N_j})$ and  $\mathbf{x}_{i}$ is the signal transmitted by the $i$-th user, satisfying $\mean{\mathbf{x}_i^H\mathbf{x}_i} = P_i$. As in past work on the MIMO GIC, global channel state information is assumed to be available at every node. For the symmetric case considered later in the paper, with a slight abuse of notation, $\Hji\:(j \neq i)$ is replaced with $\sqrt{\rho^{\alpha}}\Hji$ and $\Hjj$ is replaced with $\sqrt{\rho}\Hjj$. The quantity $\rho^{\alpha_{ji}}$ represents the received signal power from user $i$; and in the symmetric case, it is assumed that $\alpha_{ji}=1$ when $j=i$ and $\alpha_{ji}=\alpha$ otherwise. That is, $\alpha > 0$ represents the ratio of the logarithm of the INR to the logarithm of the SNR. For the inner bound, attention will be restricted to the symmetric case with $M$ antennas at every transmitter and $N$ antennas at every receiver, with $M \le N$. Further, it is assumed that  $\mean{\mathbf{x}_i \mathbf{x}_i^H}$ is full rank and $\mean{\mathbf{x}_i^H\mathbf{x}_i} = 1$.
\subsection{Generalized Degrees of Freedom (GDOF)}\label{sec:gdof}
The GDOF, introduced in \cite{etkin1}, is an asymptotic quantity in the limit of high SNR and INR. For symmetric case, it is defined as:
\begin{equation}
 d(\alpha) = \frac{1}{K}\lim_{\substack{\rho\rightarrow \infty}}\frac{C_{\Sigma}(\rho,\alpha)}{\log\rho}, 
\end{equation}
and $C_{\Sigma}(\rho,\alpha)$ is the sum capacity of the $K$ user symmetric MIMO GIC defined above. When $\alpha=1$, the GDOF reduces to the degrees of freedom (DOF) defined in \cite{gou1}. 
\section{Outer Bound}\label{sec:outerbound}
In this section, three outer bounds on the sum rate of the $K$ user MIMO GIC are stated as Theorems \ref{theorem-outer1}, \ref{theorem-outer2} and \ref{theorem-outer3}. The bounds are general in the sense that they are valid for all values of the channel parameters. Then, the bounds are specialized to the case of the symmetric MIMO GIC to obtain outer bounds on the per user GDOF; these are stated as Lemmas \ref{lemma-outer1}, \ref{lemma-outer2} and  \ref{lemma-outer3}. Finally, the overall outer bound on the GDOF is obtained by taking the minimum of the three outer bounds and the interference free GDOF bound of $\min(M,N)$ per user. 

The first outer bound is obtained by considering cooperation among subsets of users. The idea of using cooperation among users has been explored in \cite{ghasemi1} for obtaining outer bounds on the DOF of the $K$-user MIMO GIC. However, it turns out that cooperation by itself is not sufficient for obtaining outer bounds on the sum rate of the $K$-user symmetric MIMO GIC. When $\alpha \neq 1$, the symmetric assumption on the resulting $2$-user GIC is no longer valid when the users are allowed to cooperate among themselves. Hence, this technique cannot be directly used to obtain an outer bound on the GDOF or the sum rate. It is necessary to provide a judiciously chosen signal as  side information to a subset of the receivers in addition to cooperation, to convert the system into a MIMO $Z$-GIC, whose capacity cannot be worse than the original MIMO IC. Then, an outer bound on the $Z$-GIC is derived. Taking the minimum of the outer bounds obtained by considering all possible combinations of cooperating users results in an outer bound on the sum rate of the MIMO GIC.

Thus, the $K$-user system is divided into two disjoint groups; group-1 containing $L_1\:(0 \leq L_{1} \leq K)$ users and group-2 containing $L_2\:(0 \leq L_{2} \leq K)$ users, with $L \triangleq L_1 + L_2$ such that $0 <  L \leq K$. The receivers within a given group are provided the messages of the other users in the same group, due to which,  interference between users within a group is eliminated. In group-1, all $L_1$ users are allowed to cooperate among themselves but they experience interference from group-2. Similarly, users in group-2 are allowed to cooperate among themselves. In group-2, all the receivers are given the messages of users $1, \ldots, L_1$ by  a genie as side information. As a result, group-2 does not see any interference from the users in group-1. To simplify the equation, it is assumed that each transmitter is equipped with $M$ antennas and each receiver is equipped with $N$ antennas in stating Theorem~\ref{theorem-outer1}. 
\begin{theorem}
\label{theorem-outer1}
The sum rate of the $K$-user  MIMO GIC is upper bounded as follows:
\begin{eqnarray}
 & \mysum R_{i} & \leq \log \labs \Iden_{L_{1}N} +  \Honebar\Ponebar\Honebar^{H} + \Htwobar\Ptwobar\Htwobar^{H}\rabs + \nonumber \\
& & \qquad \log \labs \Iden_{L_2N} + \Hthreebar\Ptwobar^{1/2} \lcb \Iden_{L_2M} + \Ptwobar^{1/2}\Htwobar^{H}\Htwobar\Ptwobar^{1/2} \rcb^{-1}\Ptwobar^{1/2}\Hthreebar^{H}\rabs, \label{eq:outerth1} 
\end{eqnarray}
\begin{eqnarray}
& & \text{where } L_{1} + L_{2} = L \leq K, \: 0 \leq L_{1} \leq K, \: 0 \leq L_{2} \leq K, \nonumber \\
& & \Honebar \triangleq \text{blockdiag}(\mathbf{H}_{11}, \mathbf{H}_{22}, \ldots, \mathbf{H}_{L_{1},L_{1}}), \:
\Hthreebar \triangleq \text{blockdiag}(\mathbf{H}_{L_{1}+1,L_{1}+1}, \mathbf{H}_{L_{1}+2,L_{1}+2}, \ldots,  \mathbf{H}_{L,L}),  \nonumber \\
& & \Htwobar \triangleq \lsqb\begin{array}{llll}
\mathbf{H}_{1,L_{1}+1} & \mathbf{H}_{1,L_{1}+2} &\cdots & \mathbf{H}_{1,L} \\
\mathbf{H}_{2,L_{1}+1} & \mathbf{H}_{2,L_{1}+2} &\cdots & \mathbf{H}_{2,L} \\
& \vdots & \vdots &  \\
\mathbf{H}_{L_1,L_{1}+1} & \mathbf{H}_{L_1,L_{1}+2} &\cdots & \mathbf{H}_{L_1,L} \\
           \end{array}\rsqb, \nonumber \\ 
& & \Ponebar \triangleq \text{blockdiag}(\mathbf{P}_{1}, \mathbf{P}_{2}, \ldots  \mathbf{P}_{L_{1}}), 
 \Ptwobar \triangleq \text{blockdiag}(\mathbf{P}_{L_{1} + 1}, \mathbf{P}_{L_{1} + 2}, \ldots \mathbf{P}_{L_{2}}), \overline{\mathbf{H}}_{ij} \in \Complex^{L_{i}N \times L_{j}M},\nonumber \\
& &  \Hij \in \Complex^{N \times M}, \mathbf{P}_{j} \in \Complex^{M \times M} \text{ is the input covariance matrix of } jth \text{ user and } \overline{\mathbf{P}}_{j} \in \Complex^{L_{j}M \times L_{j}M}, j=1,2. \label{eq:outerth1b}
\end{eqnarray}
\end{theorem}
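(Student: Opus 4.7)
The plan is to upper bound $\sum_{i=1}^{L} R_i$ by transforming the original $K$-user MIMO GIC into an enhanced $2$-user MIMO Z-IC whose sum capacity is easy to single-letterize, and then applying Fano's inequality with a chain-rule rearrangement. Partition users $\{1,\ldots,L\}$ into group~1 of size $L_1$ and group~2 of size $L_2$; if $L<K$, additionally hand $\myxj^n$ for $j>L$ to every receiver as side information, which can only enlarge the achievable region. Two enhancements are then applied to the remaining users: (a) full cooperation among the $L_1$ transmitters and the $L_1$ receivers of group~1 (and likewise in group~2), which aggregates the two groups into super-users with inputs $\Xonebar,\Xtwobar$ and outputs $\Yonebar,\Ytwobar$, and, because each receiver is told the messages of the other users in the same group, cancels the intra-group interference so that the group-1 effective channel collapses to the block-diagonal $\Honebar$; and (b) a genie delivers $\Xonebar^n$ to the group-2 super-receiver, which subtracts off the inter-group interference. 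The enhanced channel is the MIMO Z-IC
\begin{align*}
\Yonebar &= \Honebar\Xonebar + \Htwobar\Xtwobar + \Zonebar, \\
\Ytwobar &= \Hthreebar\Xtwobar + \Ztwobar,
\end{align*}
whose sum capacity dominates $\sum_{i=1}^{L} R_i$.

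Applying Fano's inequality on this enhanced channel gives
\[
n\!\sum_{i=1}^{L} R_i - n\epsilon_n \;\le\; I(\Xonebar^n;\Yonebar^n) + I(\Xtwobar^n;\Ytwobar^n,\Xonebar^n).
\]
Adding $\Yonebar^n$ into the conditioning of the second term (which only enlarges it), and then invoking independence of $\Xonebar^n,\Xtwobar^n$ together with the chain rule, the right-hand side becomes
\[
I(\Xonebar^n,\Xtwobar^n;\Yonebar^n) \;+\; I(\Xtwobar^n;\Ytwobar^n \mid \Yonebar^n,\Xonebar^n).
\]
Dividing by $n$ and letting $n\to\infty$, it remains to derive single-letter Gaussian upper bounds for these two terms under the per-user covariance constraints $\mean{\myxi\myxi^H}\preceq \Pdashi$.

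The first term is bounded by Gaussian maximum entropy, yielding the log-determinant $\log|\Iden_{L_1N}+\Honebar\Ponebar\Honebar^H+\Htwobar\Ptwobar\Htwobar^H|$. For the conditional term, conditioning on $\Xonebar^n$ makes $\Yonebar^n-\Honebar\Xonebar^n=\Htwobar\Xtwobar^n+\Zonebar^n$ a Gaussian side observation of $\Xtwobar^n$; evaluating $I(\Xtwobar;\Hthreebar\Xtwobar+\Ztwobar\mid \Htwobar\Xtwobar+\Zonebar)$ under Gaussian $\Xtwobar\sim\mathcal{CN}(0,\Ptwobar)$ and applying the identity $|\Iden+B(\Iden+A^HA)^{-1}B^H|=|\Iden+A^HA+B^HB|/|\Iden+A^HA|$ with $A=\Htwobar\Ptwobar^{1/2}$ and $B=\Hthreebar\Ptwobar^{1/2}$ produces the second log-determinant in~\eqref{eq:outerth1}. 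The main obstacle is the Gaussian-extremality of $I(\Xtwobar;\Ytwobar\mid\Yonebar,\Xonebar)$: because the conditioning couples the desired observation $\Ytwobar$ with a second Gaussian observation $\Htwobar\Xtwobar+\Zonebar$ of $\Xtwobar$, the standard maximum-entropy argument has to be applied to a pair of coupled linear Gaussian observations (or equivalently, to the corresponding MMSE covariance structure), and it is this step that drives the asymmetric MMSE form of the second determinant in~\eqref{eq:outerth1}; every other ingredient (Fano, cooperation, genie cancellation of intra- and inter-group interference, determinant algebra) is routine.
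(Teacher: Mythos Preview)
Your proposal is correct and follows essentially the same route as the paper: both reduce to the identical two-user MIMO $Z$-interference channel via cooperation plus genie-aided message sharing, then apply Fano's inequality and a genie observation at receiver~2---the paper hands receiver~2 the signal $\Sbar=\Htwobar\Xtwobar+\Zonebar$ directly, which is precisely what your step of adding $\Yonebar^n$ given $\Xonebar^n$ produces after subtracting $\Honebar\Xonebar^n$. The Gaussian-extremality of the conditional term that you flag as the main obstacle is handled in the paper by a one-line appeal to the standard conditional-entropy bound of Lemma~\ref{lemmaused2}, after which the second log-determinant drops out via the Woodbury identity, so that step is more routine than you suggest.
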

\begin{proof}
See Appendix \ref{sec:appendouter1-th}.
\end{proof}
Recall that, in order to obtain (\ref{eq:outerth1}), $L_1$ users in group-1 are allowed to cooperate, while in group-2 $L_2$ users are allowed to cooperate with each other. There are $3^K - 2^{K+1}$ ways of choosing the user groups for cooperation (each user can be in group-1, group-2, or neither, and both groups should have at least one user). Hence, the minimum sum rate obtained out of all possible ways of cooperation leads to the tightest outer bound on the sum rate obtainable from this method. Since the users have different power constraints and users see different SNRs and INRs, obtaining a closed-form outer bound becomes a formidable task. However, for the symmetric case, a simplified solution exists, as given by the following Lemma.
\begin{lemma}
\label{lemma-outer1}
In the symmetric case, the upper bound of Theorem \ref{theorem-outer1} can be expressed as an upper bound on the per user GDOF as follows:
\begin{enumerate}
 \item When $M \leq N$ and $0 \leq \alpha \leq 1$:
     \begin{eqnarray}
 d(\alpha)  \leq \displaystyle\min_{L_{1}, L_{2}} \frac{1}{L} \lsqb L_1M + \min \lcb r, L_{1}(N - M)\rcb\alpha + (L_2M - r)^{+}  \right. \nonumber \\
+ \left. \min\lcb r, L_2N - (L_2M - r)^{+}\rcb(1-\alpha)\rsqb.  \label{eq:outerlm1a}
\end{eqnarray}
 \item When $M \leq N$ and $ \alpha > 1$:
\begin{eqnarray}
d(\alpha)  \leq \displaystyle\min_{L_{1}, L_{2}} \frac{1}{L} \lsqb r \alpha + \min\lcb L_1M, L_1N - r \rcb + (L_2M - r)^{+}\rsqb. \label{eq:outerlm1b}
\end{eqnarray}
 \item When $M > N$ and $0 \leq \alpha \leq 1$:
\begin{eqnarray}
& d(\alpha) &\leq \displaystyle\min_{L_{1}, L_{2}} \frac{1}{L} \lsqb L_1N + \min \lcb L_2N, (L_2M - r)^{+}\rcb + \min\lcb \min\lcb L_2N,r \rcb, \right. \right. \nonumber \\
& & \qquad \qquad \qquad \left. \left. L_2N -\min \lcb L_2N, (L_2M - r)^{+}\rcb \rcb (1-\alpha)\rsqb. \label{eq:outerlm1c}
\end{eqnarray}
 \item When $M > N$ and $ \alpha > 1$:
\begin{eqnarray}
& d(\alpha) &\leq \displaystyle\min_{L_{1}, L_{2}} \frac{1}{L} \lsqb L_1N + r(\alpha - 1) + \min\lcb L_2N, (L_2M - r)^{+}\rcb\rsqb, \label{eq:outerlm1d}
\end{eqnarray}
\end{enumerate}
where $r \triangleq \min\lcb L_2M, L_1N \rcb$.
\end{lemma}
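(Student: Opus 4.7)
The plan is to substitute the symmetric-channel scaling ($\Hji \to \sqrt{\rho^{\alpha}}\Hji$ for $j \neq i$ and $\Hjj \to \sqrt{\rho}\Hjj$) and set each $\mathbf{P}_i = \frac{1}{M}\mathbf{I}_M$ (which is GDOF-optimal in the symmetric case) into the bound of Theorem~\ref{theorem-outer1}, then extract the pre-log factor by rank-counting the resulting log-determinants. By symmetry $R_i$ is the same for every $i$, so $\mysum R_i = (L/K)\, C_{\Sigma}$; dividing any upper bound on the left-hand side of~(\ref{eq:outerth1}) by $L\log\rho$ and letting $\rho \to \infty$ yields an upper bound on $d(\alpha)$. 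Since the channel coefficients come from a continuous distribution, every generic rank identity used below holds almost surely.

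The core tool I would establish first is the asymptotic identity that, for $a \geq b \geq 0$,
\begin{equation*}
\log\labs\Iden + \rho^{a}\mathbf{A}\mathbf{A}^H + \rho^{b}\mathbf{B}\mathbf{B}^H\rabs = a\,\text{rank}(\mathbf{A})\log\rho + b\lsqb\text{rank}(\lsqb\mathbf{A},\mathbf{B}\rsqb) - \text{rank}(\mathbf{A})\rsqb\log\rho + \oone,
\end{equation*}
which follows from a short eigenvalue-ordering argument. Applied to the first log-det in~(\ref{eq:outerth1}), which after the substitution equals $\log\labs\Iden_{L_{1}N} + \rho\,\Honebar\Honebar^H/M + \rho^\alpha\,\Htwobar\Htwobar^H/M\rabs$, with $\text{rank}(\Honebar) = L_1\min(M,N)$ (from its block-diagonal form) and $\text{rank}(\lsqb\Honebar,\Htwobar\rsqb) = \min(L_1 N,\,LM)$ (generic), this produces the first two summands of each of~(\ref{eq:outerlm1a})--(\ref{eq:outerlm1d}): when $\alpha \leq 1$ the direct-link term dominates and the excess cross-link dimension enters with coefficient $\alpha$; when $\alpha > 1$ the roles flip and the leading $r\alpha$ appears.

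For the second log-det, I would diagonalize $\Htwobar^H\Htwobar = \mathbf{U}\mathbf{\Lambda}\mathbf{U}^H$, with $r = \min(L_1 N, L_2 M)$ strictly positive eigenvalues and $L_2 M - r$ zero eigenvalues, and split $\Hthreebar\mathbf{U} = \lsqb\mathbf{G}_1,\mathbf{G}_0\rsqb$ into its components on the range and the null space of $\Htwobar^H\Htwobar$. Because $(\Iden + \rho^\alpha \mathbf{\Lambda})^{-1}$ contracts like $\Theta(\rho^{-\alpha})$ on the range of $\Htwobar^H$ and leaves its null space at order one, the second log-det reduces, up to $\oone$ terms, to $\log\labs\Iden_{L_{2}N} + \rho^{1-\alpha}\mathbf{G}_1\mathbf{G}_1^H/M^2 + \rho\,\mathbf{G}_0\mathbf{G}_0^H/M\rabs$. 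Applying the rank identity a second time with $\text{rank}(\mathbf{G}_0) = \min(L_2 N,\,(L_2 M - r)^+)$, $\text{rank}(\mathbf{G}_1) = \min(L_2 N, r)$, and $\text{rank}(\lsqb\mathbf{G}_0,\mathbf{G}_1\rsqb) = L_2\min(M,N)$ reproduces the trailing $(1-\alpha)$ or $(\alpha-1)$ terms in (\ref{eq:outerlm1a})--(\ref{eq:outerlm1d}).

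The principal obstacle is this second log-det, due to the matrix inverse: establishing cleanly that the range and null-space components of $\Hthreebar\mathbf{U}$ decouple in the asymptotic log-det, and that no cross terms (from $\mathbf{G}_1$ leaking through the slow $\rho^{-\alpha}$ contraction) dominate, requires a careful SVD-based argument together with a Fischer-type determinant inequality. Once this is in hand, the remainder is bookkeeping: each of the four regimes ($M \leq N$ versus $M > N$, and $\alpha \leq 1$ versus $\alpha > 1$) selects a different binding dimension among $L_1 N$, $L_2 M$, $L_2 N$, and $LM$ in a $\min$ or $(\cdot)^+$, and matching these to (\ref{eq:outerlm1a})--(\ref{eq:outerlm1d}) is mechanical. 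Finally, minimizing over admissible $(L_1,L_2)$ as in Theorem~\ref{theorem-outer1} yields the stated upper bounds on $d(\alpha)$.
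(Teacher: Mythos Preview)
Your proposal is correct and follows essentially the same route as the paper: symmetric substitution into the bound of Theorem~\ref{theorem-outer1}, the rank-based asymptotic identity for $\log|\mathbf{I}+\rho^a\mathbf{A}\mathbf{A}^H+\rho^b\mathbf{B}\mathbf{B}^H|$ (stated in the paper as Lemma~\ref{lemmaused1}), an EVD of $\Htwobar^H\Htwobar$ to split the second log-det into range and null-space components (your $\mathbf{G}_1,\mathbf{G}_0$ are the paper's $\Htildetwtwo^{(a)},\Htildetwtwo^{(b)}$), and then a four-case reading of the ranks. The only cosmetic difference is that the paper removes the input covariances via the monotonicity Lemma~\ref{lemmaused3} rather than fixing $\mathbf{P}_i=\tfrac{1}{M}\mathbf{I}_M$, and it converts the $L$-user bound to a per-user bound by averaging over all $\binom{K}{L}$ subsets rather than invoking symmetry directly; both differences are immaterial at the GDOF level.
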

\begin{proof}
See Appendix \ref{sec:appendouter1-lm}.
\end{proof}
The result below provides another outer bound on the sum rate, by providing side information in the form of a noisy version of the intended message at the receivers. 
To the best of the authors' knowledge the scheme for providing side information employed here new, and leads to the tightest known bounds for some parameter values as mentioned in Theorem~\ref{th:compareouter}.. 
Let $\mathbf{s}_{j,\mathcal{B}} \triangleq \displaystyle\sum_{i \in \mathcal{B}}\Hji\mathbf{x}_{i} + \mathbf{z}_{j},$  where $\mathcal{B} \subseteq \{1,2,\ldots,K\}$ is a subset users. Then, user $1$ is provided $\mathbf{s}_{2,1}$ and user $K$ is provided $\mathbf{s}_{K-1,K}$. 
Users $i = 2, 3, \ldots, K-1$ are provided $\mathbf{s}_{i-1,i}$ and $\mathbf{s}_{i+1,i}$ in succession to obtain two sets of rate bounds.  
It turns out that by doing so, all the negative differential entropy terms containing a signal component cancel out, leading to the outer bound given by Theorem \ref{theorem-outer2} below. Further remarks on the choice of side information are offered in Section~\ref{sec:discussion}.
\begin{theorem}
\label{theorem-outer2}
For the $K$-user MIMO GIC, the following rate bound is applicable:
\begin{eqnarray}
& & R_{1} + 2\displaystyle\sum_{i=2}^{K-1}R_{i} + R_{K} \nonumber \\
& & \leq \displaystyle\sum_{i=1}^{K-1}\log\labs  \Iden_{N_i} + \sumneq \Hij\Pj\Hij^{H} + \Hii\Pdashi^{1/2}\lb \Iden_{M_i} + \Pdashi^{1/2}\Hplus^{H}\Hplus\Pdashi^{1/2} \rb^{-1}\Pdashi^{1/2}\Hii^{H}\rabs  \nonumber \\
& &  + \displaystyle\sum_{i=2}^{K}\log\labs \Iden_{N_i} + \sumneq \Hij\Pj\Hij^{H} + \Hii\Pdashi^{1/2}\lb \Iden_{M_i} + \Pdashi^{1/2}\Hminus^{H}\Hminus\Pdashi^{1/2} \rb^{-1}\Pdashi^{1/2}\Hii^{H}\rabs. \label{eq:outerth2}
\end{eqnarray}
\end{theorem}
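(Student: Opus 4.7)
The plan is as follows. I would apply Fano's inequality to each user's rate and use two separate genie-aided enhancements of the channel. In the first enhancement, receiver $i$ is given the side information $\mathbf{s}_{i+1,i}^n$ for $i=1,\ldots,K-1$; in the second, receiver $i$ is given $\mathbf{s}_{i-1,i}^n$ for $i=2,\ldots,K$. Since side information can only enlarge the capacity region, each enhancement produces a valid upper bound on the corresponding partial sum of rates, and adding the two yields precisely the left-hand-side $R_1+2\sum_{i=2}^{K-1}R_i+R_K$ of~(\ref{eq:outerth2}).

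For a typical rate in the first enhancement I would expand
\begin{equation*}
I(\mathbf{x}_i^n;\mathbf{y}_i^n,\mathbf{s}_{i+1,i}^n)=h(\mathbf{s}_{i+1,i}^n)+h(\mathbf{y}_i^n\mid\mathbf{s}_{i+1,i}^n)-h(\mathbf{z}_{i+1}^n)-h\left(\sum_{j\neq i}\mathbf{H}_{ij}\mathbf{x}_j^n+\mathbf{z}_i^n\right),
\end{equation*}
using that conditioning on $\mathbf{x}_i^n$ collapses $\mathbf{s}_{i+1,i}^n$ to the pure noise $\mathbf{z}_{i+1}^n$, and that $\mathbf{s}_{i+1,i}^n$ is then independent of the residual interference-plus-noise at receiver $i$. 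An analogous decomposition holds for the second enhancement with $\mathbf{s}_{i-1,i}^n$ and $\mathbf{z}_{i-1}^n$ in place of $\mathbf{s}_{i+1,i}^n$ and $\mathbf{z}_{i+1}^n$.

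The key step is the cancellation of all signal-carrying differential entropies in the combined sum. Because the inputs $\mathbf{x}_j^n$ are mutually independent of one another and of the noise, the inequality $h(A+B)\ge h(A)$ for independent $A,B$ yields $h(\sum_{j\neq i}\mathbf{H}_{ij}\mathbf{x}_j^n+\mathbf{z}_i^n)\ge h(\mathbf{H}_{ik}\mathbf{x}_k^n+\mathbf{z}_i^n)=h(\mathbf{s}_{i,k}^n)$ for every $k\neq i$. In the combined sum the negative entropy $-h(\sum_{j\neq i}\mathbf{H}_{ij}\mathbf{x}_j^n+\mathbf{z}_i^n)$ carries multiplicity one for $i\in\{1,K\}$ and multiplicity two for each middle $i$; the positive side-information entropies $h(\mathbf{s}_{j,i}^n)$ grouped by receiver index $j$ have the same multiplicity structure. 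I would apply the lower bound with $k=2$ at the single occurrence at $i=1$, with $k=K-1$ at $i=K$, and with $k=i-1$ and $k=i+1$ once each at every middle $i$. This pairs each positive $h(\mathbf{s}_{j,i}^n)$ with a matching $-h(\mathbf{s}_{j,i}^n)$ and eliminates every signal-containing differential entropy. The main obstacle is precisely this bookkeeping: one must verify receiver-by-receiver that the multiplicities and the identities of the positive and negative $\mathbf{s}_{j,i}^n$ terms agree.

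A brief accounting shows that the remaining noise entropies $\sum_{i=1}^{K-1}h(\mathbf{z}_{i+1}^n)+\sum_{i=2}^{K}h(\mathbf{z}_{i-1}^n)$ equal $\sum_{i=1}^{K-1}h(\mathbf{z}_i^n)+\sum_{i=2}^{K}h(\mathbf{z}_i^n)$, so the overall bound can be written as a sum of terms of the form $h(\mathbf{y}_i^n\mid\mathbf{s}_{i\pm 1,i}^n)-h(\mathbf{z}_i^n)$ plus Fano corrections. Introducing a time-sharing random variable to single-letterize and invoking Gaussian maximization of differential entropy upper-bounds each conditional entropy by the log-determinant of $\pi e$ times the linear MMSE error covariance of $\mathbf{y}_i$ given $\mathbf{s}_{i\pm 1,i}$. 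Finally, the Woodbury matrix identity applied to the covariance block arising from the direct-signal term $\mathbf{H}_{ii}\mathbf{x}_i$ yields exactly $\mathbf{H}_{ii}\mathbf{P}_i^{1/2}(\mathbf{I}_{M_i}+\mathbf{P}_i^{1/2}\mathbf{H}_{i\pm 1,i}^H\mathbf{H}_{i\pm 1,i}\mathbf{P}_i^{1/2})^{-1}\mathbf{P}_i^{1/2}\mathbf{H}_{ii}^H$ inside each determinant on the right-hand-side of~(\ref{eq:outerth2}).
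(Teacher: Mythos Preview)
Your proposal is correct and follows essentially the same approach as the paper: the same genie side information $\mathbf{s}_{i\pm 1,i}^n$ is provided, the same telescoping cancellation of signal-bearing entropy terms is carried out, and the same Gaussian-extremal and Woodbury steps yield the final log-determinant expressions. The only cosmetic difference is that the paper obtains each negative single-interferer term $-h(\mathbf{s}_{i,i\mp 1}^n)$ by further conditioning $h(\mathbf{y}_i^n\mid \mathbf{s}_{i\pm 1,i}^n,\mathbf{x}_i^n)$ on all remaining inputs except $\mathbf{x}_{i\mp 1}^n$, whereas you keep the full interference-plus-noise entropy and then invoke $h(A+B)\ge h(A)$ for independent $A,B$; these are the same inequality.
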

\begin{proof}
See Appendix \ref{sec:appendouter2-th}. 
\end{proof}
\emph{Remark:} Note that the above theorem presents a bound on $R_{1} + 2\sum_{i=2}^{K-1}R_{i} + R_{K}$, rather than on the sum rate, i.e., $\sum_{i=1}^{K} R_i$. Clearly, one can obtain $\frac{K(K-1)}{2}$ inequalities of the form (\ref{eq:outerth2}), for each possible choice of the first and $K^{\text{th}}$ user. Bounds on the sum rate can then be obtained from the above by summing all such inequalities and dividing by $\frac{3K(K-1)}{2}$.
\begin{lemma}
\label{lemma-outer2}
In the symmetric case, the sum rate upper bound of Theorem \ref{theorem-outer2} can be reduced to the following per user GDOF upper bound:
\begin{eqnarray}
& & d(\alpha) \leq \left\{\begin{array}{lll}
r_{\min}(1-\alpha) + \min\lcb r^{'}, r_{\max} - r_{\min}\rcb \alpha & \mbox{for $0 \leq \alpha \leq \frac{1}{2} $} \\
r^{'}\alpha + \min\lcb r_{\min}, r_{\max} - r^{'}\rcb(1-\alpha) & \mbox{for $\frac{1}{2} \leq \alpha \leq 1 $}
                               \end{array}\right. ,\label{eq:outerlm2}
\end{eqnarray}
where $r_{\min} \triangleq \min\lcb M,N\rcb$, $r_{\max} \triangleq \max\lcb M,N \rcb$ and $r^{'} \triangleq \min\lcb N, (K-1)M\rcb$.
\end{lemma}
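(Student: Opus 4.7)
My plan is to specialize Theorem~\ref{theorem-outer2} to the symmetric MIMO GIC by computing the high-SNR pre-log factor of each log-determinant on its RHS. Substituting $\Hii \to \sqrt{\rho}\,\Hii$ and $\Hij \to \sqrt{\rho^\alpha}\,\Hij$ for $i \neq j$, and choosing $\Pdashi = (1/M)\Iden_M$ (which is without loss of generality for the GDOF, since the pre-log factor depends on $\Pdashi$ only through its rank), each of the log-determinant terms on the RHS of (\ref{eq:outerth2}) takes the form
\[
\log\det\!\Big(\Iden_N + \tfrac{\rho^\alpha}{M}\sum_{j \neq i}\Hij\Hij^H + \tfrac{\rho}{M}\Hii\big(\Iden_M + \tfrac{\rho^\alpha}{M}\Hplus^H\Hplus\big)^{-1}\Hii^H\Big) + \oone,
\]
and an analogous expression with $\Hminus$ in place of $\Hplus$ for the second summation in (\ref{eq:outerth2}).

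The crux of the proof is the asymptotic analysis of the residual signal term $\Hii(\Iden_M + (\rho^\alpha/M)\Hplus^H\Hplus)^{-1}\Hii^H$. Using the SVD $\Hplus = U\Sigma V^H$ and splitting $V = [V_1 \mid V_2]$ with $V_2 \in \Complex^{M \times (M-N)^+}$ spanning the null space of $\Hplus$, the inverse equals $V_2 V_2^H + V_1\big(\Iden + (\rho^\alpha/M)\Sigma_1^H\Sigma_1\big)^{-1}V_1^H$, whose second summand is $O(\rho^{-\alpha})$. Consequently the residual signal contributes two pieces: a matrix of rank $\min(N,(M-N)^+)$ scaling as $\rho^{1}$ (absent when $M \leq N$), and a matrix of rank $\min(M,N) = r_{\min}$ scaling as $\rho^{1-\alpha}$; the interference sum contributes a matrix of rank $r' = \min(N,(K-1)M)$ scaling as $\rho^{\alpha}$.

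I next invoke the standard high-SNR formula for the log-determinant of a sum of matrices at different scales: the pre-log of $\log\det(\Iden_N + \sum_k \rho^{e_k} A_k)$ equals $\sum_k e_k\,d_k$, where the dimensions $d_k$ are filled greedily in decreasing order of $e_k$, with each $d_k$ lying in the orthogonal complement of dimensions already assigned and $\sum_k d_k \leq N$. Splitting into the regimes $\alpha \leq 1/2$ (the $\rho^{1-\alpha}$ signal outranks the $\rho^{\alpha}$ interference) and $\alpha \geq 1/2$ (the reverse), and separately checking $M \leq N$ (where $V_2$ is empty and only two exponents appear) and $M > N$ (where the $\rho^{1}$ piece sits on top of the other two exponents), the three-exponent waterfall collapses algebraically to $r_{\min}(1-\alpha) + \min\{r',\,r_{\max}-r_{\min}\}\alpha$ in the former regime and to $r'\alpha + \min\{r_{\min},\,r_{\max}-r'\}(1-\alpha)$ in the latter.

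Finally, I convert the per-term pre-log $D$ into a per-user GDOF bound. By the symmetry of the channel and time-sharing across user permutations, the sum capacity is attained at the symmetric rate tuple $R_i = R$ for all $i$, so the LHS of (\ref{eq:outerth2}) equals $2(K-1)R$, while the RHS comprises $2(K-1)$ log-determinant terms each with the same pre-log factor $D$; dividing gives $d(\alpha) = \lim R/\log\rho \leq D$, the claimed bound. The main obstacle I anticipate is the bookkeeping in the $M > N$ regime, where the $\rho^{1}$ contribution from the null space of $\Hplus$ must be layered correctly with the $\rho^{\alpha}$ and $\rho^{1-\alpha}$ contributions so that the three-exponent waterfall collapses cleanly into the two-regime expression of the lemma.
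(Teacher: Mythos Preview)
Your proposal is correct and follows essentially the same approach as the paper: you take the SVD/EVD of the cross-channel to split the residual signal term into a $\rho^1$ piece on the null space (present only when $M>N$) and a $\rho^{1-\alpha}$ piece, then apply the greedy rank-filling formula (which is exactly Lemmas~\ref{lemmaused1} and~\ref{lemmaused4} in the paper) in the two regimes $\alpha\lessgtr 1/2$, and finally divide the $2(K-1)$ identical terms by the $2(K-1)$ weight on the LHS. The only cosmetic differences are that the paper removes the covariance matrices via the monotonicity Lemma~\ref{lemmaused3} rather than your rank argument, and it obtains the per-user bound by directly writing $2(K-1)R_i$ on the LHS rather than invoking a time-sharing/permutation argument; your phrasing ``the sum capacity is attained at the symmetric rate tuple'' is slightly loose, but the intended permutation-averaging argument is valid.
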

\begin{proof}
See Appendix \ref{sec:appendouter2-lemma}.
\end{proof}
The third outer bound is based on providing each receiver with side information comprised of a noisy version of a carefully chosen part of the interference experienced by it. For the SIMO case, and when $K = N + 1$, this bound reduces to the outer bound presented in \cite{gou3}.
\begin{theorem}
\label{theorem-outer3}
For the $K$-user MIMO GIC, the following rate bound is applicable:
\begin{eqnarray}
& & R_{1} + \displaystyle\sum_{i=2}^{K-1}R_{i} + R_{K} \nonumber \\
& &  \leq \log\labs \Iden_{N_1} + \displaystyle\sum_{j=2}^{K}\Honej\Pj\Honej^{H} + \Honeone \Pone^{1/2}\lcb \Iden_{M_1}+ \Pone^{1/2}\Hkone^{H}\Hkone\Pone^{1/2} \rcb^{-1}\Pone^{1/2}\Honeone^{H}\rabs  \nonumber \\
& & + \displaystyle\sum_{i=2}^{K-1}\log \labs \Iden_{N_i} + \Hionebar\Pionebar^{1/2}\lcb \Iden_{\Mri} + \Pionebar^{1/2}\Hkonebar^{H}\Hkonebar\Pionebar^{1/2}\rcb^{-1}\Pionebar^{1/2}\Hionebar^{H}  \right. \nonumber \\ 
& & \left. +\Hitwobar\Pitwobar^{1/2}\lcb \Iden_{\Msi}+ \Pitwobar^{1/2}\Honekbar^{H}\Honekbar\Pitwobar^{1/2} \rcb^{-1}\Pitwobar^{1/2}\Hitwobar^{H}\rabs  \nonumber \\
& & +\displaystyle\sum_{i=2}^{K-1}\log \labs \Iden_{N_{i}} + \Hithreebar\Pithreebar^{1/2} \lcb \Iden_{\Mri^{'}} + \Pithreebar^{1/2}\Honetwobar^{H}\Honetwobar\Pithreebar^{1/2}\rcb^{-1}\Pithreebar^{1/2}\Hithreebar^{H}  \right.\nonumber \\
& & \left.+\Hifourbar\Pifourbar^{1/2}\lcb \Iden_{\Msi^{'}} + \Pifourbar^{1/2}\HKthreebar^{H}\HKthreebar\Pifourbar^{1/2} \rcb^{-1}\Pifourbar^{1/2}\Hifourbar^{H} \rabs   \nonumber \\
& & + \log\labs \Iden_{N_K} + \displaystyle\sum_{j=1}^{K-1}\HKj\Pj\HKj^{H} + \Hkk\PK^{1/2}\lcb \Iden_{M_K} + \PK^{1/2}\Honek^{H}\Honek\PK^{1/2} \rcb^{-1}\PK^{1/2}\Hkk^{H}\rabs, \nonumber \\
\label{eq:outerth3}
\end{eqnarray}
where
\begin{eqnarray}
& & \Hionebar \triangleq \lsqb \begin{array}{llll}
                     \mathbf{H}_{i1} & \mathbf{H}_{i2} &\ldots &\mathbf{H}_{ii}
                    \end{array} \rsqb, \quad \Hitwobar\triangleq \lsqb \begin{array}{llll}
		\mathbf{H}_{i,i+1} &\mathbf{H}_{i,i+2} &\ldots &\mathbf{H}_{iK} \end{array} \rsqb, \nonumber \\
& & \Hkonebar \triangleq \lsqb \begin{array}{llll}
                   \mathbf{H}_{K1} & \mathbf{H}_{K2} &\ldots &\mathbf{H}_{Ki}
                  \end{array} \rsqb,\: \Honekbar \triangleq \lsqb \begin{array}{llll}
                   \mathbf{H}_{1,i+1} & \mathbf{H}_{1,i+2} &\ldots &\mathbf{H}_{1K}
                  \end{array} \rsqb, \nonumber \\
& & \Honetwobar \triangleq \lsqb\begin{array}{llll}
                   \mathbf{H}_{1K} &\mathbf{H}_{12} &\ldots &\mathbf{H}_{1i}
                  \end{array}\rsqb ,\: \HKthreebar \triangleq \lsqb\begin{array}{llll}
		 \mathbf{H}_{K1} &\mathbf{H}_{K,i+1} &\ldots \mathbf{H}_{K,K-1} \end{array}\rsqb, \nonumber \\
& & \Hithreebar \triangleq \lsqb\begin{array}{llll}
                        \mathbf{H}_{iK} &\mathbf{H}_{i2} &\ldots &\mathbf{H}_{ii}
                       \end{array}\rsqb, \: \Hifourbar \triangleq \lsqb\begin{array}{llll}
			\mathbf{H}_{i1} &\mathbf{H}_{i,i+1} &\ldots &\mathbf{H}_{i,K-1} \end{array}\rsqb, \nonumber \\
& & \Pionebar \triangleq  \text{blockdiag}\lb  \mathbf{P}_{1}\: \mathbf{P}_{2}\: \ldots\: \mathbf{P}_{i} \rb, \: \Pitwobar \triangleq \text{blockdiag}\lb \mathbf{P}_{i+2} \:\mathbf{P}_{i+3} \: \ldots \: \mathbf{P}_{K}\rb, \nonumber \\
& & \Pithreebar\triangleq \text{blockdiag}\lb \mathbf{P}_{K} \: \mathbf{P}_{2} \: \ldots \: \mathbf{P}_{i} \rb, \Pifourbar \triangleq \text{blockdiag}\lb \mathbf{P}_{1} \: \mathbf{P}_{i+1} \: \ldots \: \mathbf{P}_{K-1} \rb \nonumber, \nonumber \\
& &  \Mri \triangleq \dimone, \: \Msi \triangleq \dimtwo, \: \Mri^{'}\triangleq \displaystyle\sum_{j=2}^{i}M_{j} + M_{K} \text{ and } \Msi^{'}\triangleq M_{1} + \displaystyle\sum_{j=i+1}^{K-1}M_{j}.\label{eq:outerth3b}
\end{eqnarray}
\end{theorem}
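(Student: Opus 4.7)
The proof strategy mirrors that of Theorem \ref{theorem-outer2}: apply Fano's inequality at every receiver, augment each receiver with genie-aided side information, and arrange for the negative differential entropy terms containing linear combinations of user signals to cancel once the per-user inequalities are summed. What distinguishes Theorem \ref{theorem-outer3} is that, rather than supplying each receiver with a noisy version of a single intended signal (as in Theorem \ref{theorem-outer2}), each receiver is supplied with noisy versions of carefully chosen portions of the \emph{interference} it experiences, referenced at receivers $1$ and $K$.

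The exact form of the side information can be read off from \eqref{eq:outerth3} by inverting the identity
\begin{equation*}
\mathbf{H}\mathbf{P}^{1/2}\lb\mathbf{I}+\mathbf{P}^{1/2}\mathbf{G}^{H}\mathbf{G}\mathbf{P}^{1/2}\rb^{-1}\mathbf{P}^{1/2}\mathbf{H}^{H}=\text{Cov}(\mathbf{H}\mathbf{x}\,|\,\mathbf{G}\mathbf{x}+\mathbf{z}),\quad \mathbf{x}\sim\mathcal{CN}(\mathbf{0},\mathbf{P}),
\end{equation*}
which is a consequence of the matrix inversion lemma. Inspecting the auxiliary matrices $\Hkone$, $\Honek$, $\Hkonebar$, $\Honekbar$, $\Honetwobar$, $\HKthreebar$ appearing in \eqref{eq:outerth3}, one is led to the following prescription: receiver $1$ is given $\mathbf{s}_{K,1}$; receiver $K$ is given $\mathbf{s}_{1,K}$; and each middle receiver $i$ is supplied, in turn, with the two pairs $(\mathbf{s}_{K,\{1,\ldots,i\}},\,\mathbf{s}_{1,\{i+1,\ldots,K\}})$ and $(\mathbf{s}_{1,\{K,2,\ldots,i\}},\,\mathbf{s}_{K,\{1,i+1,\ldots,K-1\}})$, giving rise to the two log-det terms attributed to $R_i$ in \eqref{eq:outerth3}.

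Given this construction, the plan is: (i) invoke Fano's inequality and monotonicity of mutual information to write $nR_i\le I(\mathbf{x}_i^n;\mathbf{y}_i^n,\mathbf{s}_i^{\text{side},n})+n\epsilon_n$ for each user; (ii) expand each mutual information as $h(\mathbf{y}_i,\mathbf{s}_i^{\text{side}})-h(\mathbf{y}_i,\mathbf{s}_i^{\text{side}}\,|\,\mathbf{x}_i)$ and decompose the conditional term into differential entropies of sums of the form $h\lb\sum_{j\ne i}\mathbf{H}_{ij}\mathbf{x}_j+\mathbf{z}_i\rb$ plus pure-noise entropies; (iii) add the per-user inequalities and verify that, because of the specific choice of side information, every positive $h(\mathbf{s}_i^{\text{side}})$ term cancels exactly with a corresponding negative $h(\cdot)$ term from a neighbouring user's bound, leaving only conditional entropies of the form $h(\mathbf{y}_i\,|\,\mathbf{s}_i^{\text{side}})$ together with pure-noise constants; (iv) bound each surviving conditional entropy by its Gaussian surrogate via the maximum-entropy principle, using independent $\mathcal{CN}(\mathbf{0},\mathbf{P}_j)$ inputs; (v) re-apply the matrix inversion lemma to re-express each conditional covariance in the precise form of \eqref{eq:outerth3}. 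As a consistency check, specialising to the SIMO case with $K=N+1$ should recover the outer bound of \cite{gou3}.

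The principal obstacle will be step (iii), the bookkeeping of the cancellations. Since the side information at each middle receiver is tied to both receivers $1$ and $K$ and uses two disjoint partitions of the interferer set, the cancellation must propagate along a two-sided chain rather than the one-sided nearest-neighbour chain of Theorem \ref{theorem-outer2}. In particular, the boundary middle indices $i=2$ and $i=K-1$, where the partitions degenerate, must be handled with care so that no dangling signal-carrying entropy is left behind. Once every such entropy has been matched and eliminated, the remaining Gaussian bounding and algebraic re-encoding via the matrix inversion lemma are routine.
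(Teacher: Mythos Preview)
Your proposal is correct and follows essentially the same approach as the paper. You have correctly reverse-engineered the genie side information---$\mathbf{s}_{K,1}$ at receiver $1$, $\mathbf{s}_{1,K}$ at receiver $K$, and the two pairs $(\mathbf{s}_{K,\{1,\ldots,i\}},\mathbf{s}_{1,\{i+1,\ldots,K\}})$ and $(\mathbf{s}_{1,\{K,2,\ldots,i\}},\mathbf{s}_{K,\{1,i+1,\ldots,K-1\}})$ at each middle receiver---and your five-step plan (Fano plus genie, chain-rule expansion, telescoping cancellation of the signal-carrying entropies, Gaussian single-letterisation via Lemma~\ref{lemmaused2}, and Woodbury to produce the final log-det form) is exactly what the paper does, the paper merely delegating the cancellation bookkeeping to \cite{gou3}. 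Note only that the left-hand side actually carries the weight $R_1+2\sum_{i=2}^{K-1}R_i+R_K$ (as in the proof), consistent with the two log-det sums on the right; the statement as printed omits the factor of $2$.
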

\begin{proof}
See Appendix \ref{sec:appendouter3-th}.
\end{proof}
\emph{Remark:} A bound on the sum rate $(\sum_{i=1}^{K}R_{i})$ can be obtained in a similar manner as in Theorem~\ref{theorem-outer2}. 
The above result can be used to obtain an outer bound of the GDOF of the $K$-user symmetric MIMO GIC only for $\frac{N}{M} < K \leq \frac{N}{M} + 1$ because the form of the above outer bound results in rank deficient matrices when $K > \frac{N}{M} + 1$, which make finding the inverse and computing the GDOF complicated.
\begin{lemma}
\label{lemma-outer3}
In the symmetric case, when $\frac{N}{M} < K \leq \frac{N}{M}+1$, the sum rate upper bound of Theorem \ref{theorem-outer3} can be expressed as an upper bound on the per user GDOF as follows:
\begin{eqnarray}
& & d(\alpha) \leq \left\{\begin{array}{lll}
M(1-\alpha) + \frac{1}{K-1}\lb N-M \rb \alpha &\mbox{for $0 \leq \alpha \leq \frac{1}{2} $} \\ 	
M\alpha + \frac{1}{K-1}(N-M)(1-\alpha) &\mbox{for $\frac{1}{2} \leq \alpha \leq 1 $}. \\
\end{array}\right. \label{eq:outerlm3}
\end{eqnarray}
\end{lemma}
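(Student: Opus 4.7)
The plan is to specialize the rate bound in Theorem~\ref{theorem-outer3} to the symmetric MIMO Gaussian IC and extract the leading-order scaling in $\log\rho$. First I substitute the symmetric channel scaling ($\Hjj\to\sqrt{\rho}\,\Hjj$ and $\Hji\to\sqrt{\rho^{\alpha}}\,\Hji$ for $j\neq i$) into (\ref{eq:outerth3}), and choose uniform input covariances $\Pdashi=\frac{1}{M}\Iden$, which are sufficient to extract the per-user GDOF upper bound. The right-hand side of (\ref{eq:outerth3}) contains $2(K-1)$ log-determinant terms, which I partition into two \emph{corner} terms (for users~$1$ and~$K$) and $2(K-2)$ \emph{middle} terms (one from each of the two sums indexed by $i=2,\ldots,K-1$). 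Each of these will then be analysed at high SNR.

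The key analytical tool is the asymptotic identity $(\Iden+\rho^{\alpha}\mathbf{X}^{H}\mathbf{X})^{-1}\approx\rho^{-\alpha}(\mathbf{X}^{H}\mathbf{X})^{-1}$, valid whenever $\mathbf{X}$ has full column rank. In the regime $\frac{N}{M}<K\leq\frac{N}{M}+1$, equivalently $(K-1)M\leq N<KM$, every cross-link composite matrix appearing inside an inverse in (\ref{eq:outerth3})---namely $\Hkonebar$, $\Honekbar$, $\Honetwobar$, and $\HKthreebar$---has at most $(K-1)M\le N$ columns, so it has full column rank almost surely under the continuous-distribution assumption on the channel coefficients. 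This is precisely the condition that avoids the rank-deficiency obstruction noted just before the statement of Lemma~\ref{lemma-outer3}, and is what makes this regime tractable.

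Applying the above approximation, each self-Schur-complement $\Hii(\Iden+\rho^{\alpha}\mathbf{H}_{Ki}^{H}\mathbf{H}_{Ki})^{-1}\Hii^{H}$ reduces at high SNR to a rank-$M$ matrix of order $\rho^{-\alpha}$ supported on the column space of the direct channel $\Hii$. Combined with the prefactor $\rho$ arising from $\Hii\to\sqrt{\rho}\Hii$, this yields a rank-$M$ bump of order $\rho^{1-\alpha}$. For the corner terms, the bound also contains an explicit cross-link sum $\rho^{\alpha}\sum_{j\neq i}\Hij\Hij^{H}$ of rank $(K-1)M$; I would then compute the eigenvalue profile of the sum by a dimension-counting argument, using the generic transversality of the direct and cross subspaces in $\mathbb{C}^{N}$ (the intersection has dimension $KM-N$, and the remaining dimensions split into $N-M$ for the cross-only component and $N-(K-1)M$ for the self-only component). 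This produces piecewise-linear contributions in $\alpha$, with the transition at $\alpha=1/2$ determined by which of $\rho^{\alpha}$ or $\rho^{1-\alpha}$ dominates on the overlap subspace. For the middle terms, only the self-Schur-complement involving $\Hii$ produces a $\rho$-growing contribution (of order $\rho^{1-\alpha}$ and rank $M$); the purely cross-link Schur complement is $O(1)$ and contributes only sub-leading terms to the log-determinant.

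Finally, I sum the contributions from the $2(K-1)$ log-determinant terms, normalise by $K\log\rho$, take $\rho\to\infty$, and simplify the resulting expression in the two sub-cases $\alpha\in[0,\tfrac{1}{2}]$ and $\alpha\in[\tfrac{1}{2},1]$, yielding (\ref{eq:outerlm3}). The main technical hurdle is the careful bookkeeping: one must track the rank profile and the exponent of $\rho$ contributed by each piece of the Schur-complement decomposition, and verify that the mixed-scaled cross terms of order $\rho^{(1-\alpha)/2}$---arising from the direct/cross coupling inside the composite matrices $\overline{\mathbf{H}}_{i1}$---are strictly subdominant to the $\rho^{1-\alpha}$ and $\rho^{\alpha}$ contributions for every $\alpha\in[0,1]$. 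A further delicate point is recognising the algebraic simplification that aggregates the contributions of all $2(K-1)$ log-determinant terms into the compact expression (\ref{eq:outerlm3}) with its characteristic factor of $1/(K-1)$ on the $(N-M)$ term, which reflects the particular way the side information chosen in Theorem~\ref{theorem-outer3} redistributes the effective signal-plus-interference across receivers.
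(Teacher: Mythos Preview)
Your approach is essentially the same as the paper's: specialise Theorem~\ref{theorem-outer3} to the symmetric scaling, exploit the full column rank of the stacked cross-link matrices when $(K-1)M\le N$ so that each inverse behaves like $\rho^{-\alpha}$ times a bounded matrix, show that every middle term collapses to $M(1-\alpha)\log\rho+\mathcal{O}(1)$, evaluate the two corner terms via a rank decomposition (the paper invokes Lemma~\ref{lemmaused1} rather than explicit subspace-intersection dimension counting, but these are equivalent here), and then sum.

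One concrete correction: the normalisation is by $2(K-1)\log\rho$, not $K\log\rho$. Theorem~\ref{theorem-outer3} bounds $R_{1}+2\sum_{i=2}^{K-1}R_{i}+R_{K}$, which in the symmetric case equals $2(K-1)R$; dividing by this is exactly what produces the $\tfrac{1}{K-1}$ factor on the $(N-M)$ term that you correctly anticipate in (\ref{eq:outerlm3}). Your concern about ``mixed-scaled cross terms of order $\rho^{(1-\alpha)/2}$'' inside the composite matrices $\overline{\mathbf H}_{i1}$ is also unnecessary: the paper sidesteps this cleanly by factoring a common $\rho^{\alpha}$ out of both the outer matrix $\overline{\mathbf H}_{i1}$ and the inner matrix $\overline{\mathbf H}_{Ki}$, after which the entire middle Schur complement is $\mathcal{O}(1)$ except for the rank-$M$ block of order $\rho^{1-\alpha}$ coming from the direct link---no subdominant cross terms need to be tracked separately.
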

\begin{proof}
See Appendix \ref{sec:appendouter3-lemma}.
\end{proof}
The overall outer bound is obtained by taking minimum of the outer bounds in Lemmas~\ref{lemma-outer1}, \ref{lemma-outer2} and~\ref{lemma-outer3}. Due to minimization involved in Lemma~\ref{lemma-outer1}, analytical characterization of the outer bound is not possible in all cases. However, in Theorem~\ref{th:compareouter} below, an expression for the combined outer bound is obtained when $K \geq N+M$ and $\frac{N}{M} < K \leq \frac{N}{M}+1$. Also, a unified expression is presented for case $\frac{N}{M}+1 < K < M+N$, when $\frac{N}{M}$ is integer-valued.  In stating the theorem, three interference regimes are considered, as in the past work~\cite{etkin1, gou3, tarokh1}. The result follows by first analytically solving the minimization in Lemma~\ref{lemma-outer1} and then carefully comparing the three outer bounds to determine which bound is tightest for different values of $K, M, N$ and $\alpha$. 
\begin{theorem}\label{th:compareouter}
The outer bound on the per user GDOF of the $K$-user symmetric MIMO $(M \leq N)$ GIC, obtained by taking the minimum of the outer bounds derived in this work, is
\begin{enumerate}
 \item When $(K \geq M + N)$ or $(\frac{N}{M} + 1 < K < M + N$, where $\frac{N}{M}$ is an integer):
\begin{enumerate}
 \item Weak interference regime $(0 \leq \alpha \leq \frac{1}{2})$: When $MN < N^2-M^2$, Lemma \ref{lemma-outer1} is active, otherwise Lemma~\ref{lemma-outer2} is active, and the outer bound is of the following form:
 \begin{equation}
 d(\alpha) \leq \lcb\begin{array}{l l}
	 M - \frac{M^2\alpha}{M+N} &\mbox{for $MN < N^2-M^2$} \\
         M(1-\alpha) + (N-M)\alpha & \mbox{for $MN \geq N^2-M^2$.}
\end{array}\right. \label{eq:minouter1}
 \end{equation}
 \item Moderate interference regime $(\frac{1}{2} \leq \alpha \leq 1)$:
 \begin{enumerate}
 \item When $MN < N^2 - M^2$, Lemma~\ref{lemma-outer1} is active, and the outer bound is of the following form:
         \begin{equation}
          d(\alpha) \leq  M - \frac{M^2\alpha}{M+N}. \label{eq:minouter2a}
         \end{equation}
\item When $MN \geq N^2 - M^2$, Lemma \ref{lemma-outer2} is active for $\frac{1}{2} \leq \alpha \leq \frac{M(M+N)}{N(M+N)+M^2}$, whereas Lemma \ref{lemma-outer1} is active for $\frac{M(M+N)}{N(M+N)+M^2} < \alpha \leq 1$,  and the outer bound becomes
        \begin{equation}
         d(\alpha) \leq \lcb\begin{array}{l l}
	 N\alpha &\mbox{for $\frac{1}{2} \leq \alpha \leq \frac{M(M+N)}{N(M+N)+M^2}$} \\
         M - \frac{M^2\alpha}{M+N} & \mbox{for $\frac{M(M+N)}{N(M+N)+M^2} < \alpha \leq 1$.}
\end{array}\right. \label{eq:minouter2}
 \end{equation}
 \end{enumerate}
\item High interference regime $(\alpha \geq 1)$: In this case, Lemma \ref{lemma-outer1} is active and the outer bound is of the following form:
\begin{equation}
d(\alpha) \leq \lcb\begin{array}{l l}
	 \frac{MN\alpha}{M+N} &\mbox{for $1 \leq \alpha \leq \frac{M+N}{N}$} \\
         M & \mbox{for $\alpha > \frac{M+N}{N}$.}
\end{array}\right. \label{eq:minouter3}
\end{equation}
\end{enumerate}
\item When $\frac{N}{M} < K \leq \frac{N}{M}+1$:
\begin{enumerate}
\item Weak interference regime $(0 \leq \alpha \leq \frac{1}{2})$: In this case, Lemma \ref{lemma-outer3} is active and the outer bound is of the following form:
\begin{equation}
d(\alpha) \leq M(1-\alpha) + \frac{1}{K-1}(N-M)\alpha. \label{eq:minouter4}
\end{equation}
\item Moderate interference regime $(\frac{1}{2} \leq \alpha \leq 1)$: Lemma \ref{lemma-outer3} is active for $\frac{1}{2} \leq \alpha \leq \frac{K}{2K-1}$, and Lemma \ref{lemma-outer1} is active for $\frac{K}{2K-1}< \alpha \leq 1$.
The outer bound becomes
\begin{equation}
         d(\alpha) \leq \lcb\begin{array}{l l}
	 M\alpha + \frac{1}{K-1}(N-M)(1-\alpha) &\mbox{for $\frac{1}{2} \leq \alpha \leq \frac{K}{2K-1}$} \\
         M(1-\alpha) + \frac{N\alpha}{K} & \mbox{for $\frac{K}{2K-1}< \alpha \leq 1$.}
\end{array}\right. \label{eq:minouter5}
 \end{equation} 
\item High interference regime $(\alpha \geq 1)$: In this case, Lemma \ref{lemma-outer1} is active and the outer bound is of the following form:
\begin{equation}
d(\alpha) \leq \lcb\begin{array}{l l}
	 \frac{1}{K}\lsqb N + (K-1)M(\alpha-1)\rsqb &\mbox{for $1 \leq \alpha \leq \frac{2KM-(M+N)}{(K-1)M}$} \\
         M & \mbox{for $\alpha \geq \frac{2KM-(M+N)}{(K-1)M}$.}
\end{array}\right. \label{eq:minouter6}
\end{equation}
\end{enumerate}
\end{enumerate}
\end{theorem}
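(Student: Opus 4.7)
The plan is to derive the overall outer bound as the pointwise minimum of the three outer bounds from Lemmas \ref{lemma-outer1}, \ref{lemma-outer2}, and (where applicable) \ref{lemma-outer3}, together with the interference-free per-user bound $d(\alpha) \leq \min(M,N) = M$. Since only Lemma \ref{lemma-outer1} involves an outer minimization over the integer parameters $L_1, L_2$, the bulk of the work is to first solve this minimization in closed form, and then to do a case-by-case comparison of the resulting expressions against Lemmas \ref{lemma-outer2} and \ref{lemma-outer3}.

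For the first step, I would fix $M \leq N$ and treat parts (1) and (2) of Lemma \ref{lemma-outer1} separately. The objective in (\ref{eq:outerlm1a})--(\ref{eq:outerlm1b}) is piecewise linear in $(L_1, L_2)$ once the active arms of the embedded $\min$ and $(\cdot)^+$ operators are fixed. The key observation is that the structure $r = \min\{L_2 M, L_1 N\}$ invites the balanced choice $L_2 M = L_1 N$, i.e., $L_1/L_2 = M/N$; when this point is integer-feasible within $\{0,\ldots,K\}$ the objective collapses to $\frac{MN\alpha}{M+N}$ in the high-interference regime and to $M - \frac{M^2 \alpha}{M+N}$ in the weak/moderate regimes. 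Integer feasibility of the balanced split requires $K \geq (M+N)/\gcd(M,N)$, which is guaranteed throughout case (1) (both under $K \geq M+N$ and under the integer-$N/M$ hypothesis). In case (2) the balanced split is infeasible for generic $K$, so I would instead examine the corner choices such as $(L_1, L_2) = (K-1,1)$ and $(1, K-1)$; these produce the $\frac{N\alpha}{K}$ term of (\ref{eq:minouter5}) and the affine piece $\frac{1}{K}[N + (K-1)M(\alpha-1)]$ of (\ref{eq:minouter6}). The trivial bound $d(\alpha) \leq M$ caps the result for large $\alpha$.

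The second step is to compare these simplified forms with Lemma \ref{lemma-outer2} (specialized to $M \leq N$: $r_{\min} = M$, $r_{\max} = N$, $r' = N$ throughout case (1) and $r' = (K-1)M$ in case (2)) and with Lemma \ref{lemma-outer3}. In the weak regime of case (1), Lemma \ref{lemma-outer2} reduces to $M(1-\alpha)+(N-M)\alpha$ while Lemma \ref{lemma-outer1} gives $M - \frac{M^2 \alpha}{M+N}$; a slope comparison shows that Lemma \ref{lemma-outer1} is tighter precisely when $\frac{M^2}{M+N} > 2M - N$, equivalently $MN < N^2 - M^2$, yielding (\ref{eq:minouter1}). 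The moderate-regime formula (\ref{eq:minouter2}) follows by equating the second arm of Lemma \ref{lemma-outer2} ($N\alpha$) with $M - \frac{M^2\alpha}{M+N}$ to obtain the crossover $\alpha = \frac{M(M+N)}{N(M+N)+M^2}$; the high-interference formula (\ref{eq:minouter3}) comes from Lemma \ref{lemma-outer1} saturated against $M$ at $\alpha = \frac{M+N}{N}$. For case (2), Lemma \ref{lemma-outer3} is available and strictly tighter in the weak regime (its slope $\frac{N-M}{K-1}$ is smaller than both competing slopes), giving (\ref{eq:minouter4}); the moderate-regime crossover $\alpha = \frac{K}{2K-1}$ of (\ref{eq:minouter5}) arises from equating Lemma \ref{lemma-outer3} with the corner-point expression $M(1-\alpha) + \frac{N\alpha}{K}$ coming from Lemma \ref{lemma-outer1}.

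The main obstacle will be organizing the two-dimensional integer minimization in Lemma \ref{lemma-outer1} cleanly, because of the cascade of $\min$, $\max$, and $(\cdot)^+$ operators. I plan to handle this by first case-splitting on the sign of $L_2 M - L_1 N$ (which fixes $r$), then verifying on each piece that the reduced objective is either monotone in $L = L_1 + L_2$ or attains its minimum along the line $L_2 M = L_1 N$, and finally using the hypothesis on $K$ to certify integer feasibility of the optimizer (or, in case (2), to identify the correct boundary corner). Once Lemma \ref{lemma-outer1} is simplified, the remaining pairwise comparisons with Lemmas \ref{lemma-outer2} and \ref{lemma-outer3} reduce to elementary slope/intercept bookkeeping, and the overall theorem statement is read off from the resulting piecewise-linear envelope.
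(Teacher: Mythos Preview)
Your proposal is correct and follows essentially the same approach as the paper: case-split the minimization in Lemma~\ref{lemma-outer1} on the sign of $L_2M - L_1N$, identify the balanced choice $L_1:L_2 = M:N$ as optimal when integer-feasible (the paper uses $(L_1,L_2)=(M,N)$ for $K\ge M+N$ and $(1,N/M)$ in the integer-ratio subcase, which your $\gcd$ formulation captures uniformly), fall back to the corner $(L_1,L_2)=(1,K-1)$ in case~(2), and then read off the envelope by pairwise slope/intercept comparison against Lemmas~\ref{lemma-outer2} and~\ref{lemma-outer3}. The crossover points you compute and the monotonicity reasoning you outline match the paper's argument.
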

\begin{proof}
See Appendix \ref{sec:th-combineouter}.
\end{proof}
The following lemmas are used in derivation of the outer bound.
\begin{lemma}\cite{mahesh1}\label{lemmaused1}
Let $\mathbf{R}_{1}$ and $\mathbf{R}_{2}$ be $N \times N$ covariance matrices with rank $r_{1}$ and $r_{2}$, respectively. Let $\mathbf{R}_{1} = \mathbf{U}_{1}\Lambda_{1}\mathbf{U}_{1}^{H}$ and $\mathbf{R}_{2} =  \mathbf{U}_{2}\Lambda_{2}\mathbf{U}_{2}^{H}$ represent the EVD of $\mathbf{R}_{1}$ and $\mathbf{R}_{2}$, with $\mathbf{U}_{1} \in \Complex^{N\times r_{1}}$ and $\mathbf{U}_{2} \in \Complex^{N\times r_{2}}$. If $\text{rank}[ \mathbf{U}_{1} \: \mathbf{U}_{2}] = \min(r_{1} + r_{2}, N)$, then for $\eta \geq \beta$,
\begin{eqnarray}
 & J_{1} & \triangleq \log|\Iden_{N}+\rho^{\eta}\mathbf{R}_{1}+\rho^{\beta}\mathbf{R}_{2}| \nonumber \\
 & & = r_{1}\eta\log\rho + \min\lcb r_{2},N-r_{1}\rcb\beta\log\rho+ \oone.
\end{eqnarray}
\end{lemma}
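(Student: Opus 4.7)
The plan is to use an orthonormal change of basis aligned with $\text{range}(\mathbf{R}_1)$, then apply the Schur-complement determinant identity to cleanly decouple the two scales $\rho^\eta$ and $\rho^\beta$. Extend $\mathbf{U}_1$ to a unitary matrix $\mathbf{V} \triangleq [\mathbf{U}_1,\ \mathbf{U}_1^\perp] \in \mathbb{C}^{N \times N}$, where $\mathbf{U}_1^\perp$ is an orthonormal basis for the orthogonal complement of $\text{range}(\mathbf{U}_1)$. Since $|\mathbf{V}|=1$, $J_1$ equals $\log|\mathbf{M}|$ with
\[
\mathbf{M} = \begin{pmatrix} \mathbf{I}_{r_1} + \rho^\eta \Lambda_1 + \rho^\beta \mathbf{A} & \rho^\beta \mathbf{B} \\ \rho^\beta \mathbf{B}^H & \mathbf{I}_{N-r_1} + \rho^\beta \mathbf{C} \end{pmatrix},
\]
where $\mathbf{A} \triangleq \mathbf{U}_1^H \mathbf{R}_2 \mathbf{U}_1$, $\mathbf{B} \triangleq \mathbf{U}_1^H \mathbf{R}_2 \mathbf{U}_1^\perp$, and $\mathbf{C} \triangleq (\mathbf{U}_1^\perp)^H \mathbf{R}_2 \mathbf{U}_1^\perp$.

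The key structural observation is that the hypothesis $\text{rank}[\mathbf{U}_1,\mathbf{U}_2] = \min(r_1+r_2, N)$ is equivalent to the statement that the orthogonal projection of $\text{range}(\mathbf{U}_2)$ onto $\text{range}(\mathbf{U}_1)^\perp$ has dimension $\min(r_1+r_2, N) - r_1 = \min(r_2, N-r_1)$. Hence $\text{rank}((\mathbf{U}_1^\perp)^H \mathbf{U}_2) = \min(r_2, N-r_1)$, and since $\Lambda_2 \succ 0$ on its $r_2 \times r_2$ support, we conclude $\text{rank}(\mathbf{C}) = \min(r_2, N-r_1)$.

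Next, invoke the block-determinant identity $\log|\mathbf{M}| = \log|\mathbf{M}_{11}| + \log|\mathbf{M}_{22} - \rho^{2\beta}\mathbf{B}^H \mathbf{M}_{11}^{-1} \mathbf{B}|$. The $(1,1)$ block is dominated by $\rho^\eta \Lambda_1$ since $\eta \geq \beta$ and $\Lambda_1 \succ 0$; a standard perturbation argument gives $\log|\mathbf{M}_{11}| = r_1 \eta \log\rho + \mathcal{O}(1)$ and $\|\mathbf{M}_{11}^{-1}\| = \mathcal{O}(\rho^{-\eta})$. When $\beta < \eta$, the Schur-complement correction $\rho^{2\beta}\mathbf{B}^H\mathbf{M}_{11}^{-1}\mathbf{B}$ has operator norm $\mathcal{O}(\rho^{2\beta-\eta})$, which is strictly subdominant to $\rho^\beta$; the Schur complement therefore has the same leading eigenvalue structure as $\mathbf{I}_{N-r_1}+\rho^\beta \mathbf{C}$ and contributes $\min(r_2, N-r_1)\,\beta\log\rho + \mathcal{O}(1)$ by the rank of $\mathbf{C}$ computed in the previous paragraph. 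The boundary case $\eta=\beta$ is even more transparent: $J_1 = \log|\mathbf{I}_N + \rho^\eta(\mathbf{R}_1 + \mathbf{R}_2)|$, and the hypothesis gives $\text{rank}(\mathbf{R}_1+\mathbf{R}_2) = \text{rank}[\mathbf{U}_1, \mathbf{U}_2] = r_1 + \min(r_2, N-r_1)$, yielding the same asymptotic directly.

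The main obstacle is the rank-counting step that converts the hypothesis on $[\mathbf{U}_1,\mathbf{U}_2]$ into $\text{rank}(\mathbf{C}) = \min(r_2, N-r_1)$: this is precisely where the generic-position assumption on the column spaces enters and fixes the number of eigenvalues that scale as $\rho^\beta$. Once this is in place, the remaining work is a routine Schur-complement high-$\rho$ expansion, with the only care needed in verifying that for every admissible $(\eta,\beta)$ with $\eta \geq \beta$ the correction term does not promote the leading exponent above $\beta$.
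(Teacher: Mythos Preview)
The paper does not supply a proof of this lemma; it is quoted from an external reference and used as a tool throughout the appendix. Your approach---rotate into $\text{range}(\mathbf{U}_1)\oplus\text{range}(\mathbf{U}_1)^\perp$ and then apply a Schur-complement expansion---is sound and is the standard route to such high-SNR log-determinant asymptotics. The rank identity $\text{rank}(\mathbf{C})=\min(r_2,N-r_1)$ is exactly the place where the hypothesis on $[\mathbf{U}_1\ \mathbf{U}_2]$ is consumed, and you identify this correctly.

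One step deserves tightening. When $r_2 < N-r_1$ (so $\mathbf{C}$ is rank-deficient) and $\tfrac{\eta}{2}\le\beta<\eta$, the correction $\mathbf{E}=\rho^{2\beta}\mathbf{B}^H\mathbf{M}_{11}^{-1}\mathbf{B}$ can have operator norm much larger than $1$, and ``strictly subdominant to $\rho^\beta$'' together with Weyl's inequality does not by itself protect the unit eigenvalues of $\mathbf{I}_{N-r_1}+\rho^\beta\mathbf{C}$ coming from $\ker(\mathbf{C})$. The missing observation is that
\[
\begin{pmatrix}\mathbf{A}&\mathbf{B}\\ \mathbf{B}^H&\mathbf{C}\end{pmatrix}=\mathbf{V}^H\mathbf{R}_2\mathbf{V}\succeq 0
\]
forces $\text{range}(\mathbf{B}^H)\subseteq\text{range}(\mathbf{C})$, hence $\text{range}(\mathbf{E})\subseteq\text{range}(\mathbf{C})$. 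The Schur complement is therefore exactly $\mathbf{I}$ on $\ker(\mathbf{C})$, while on $\text{range}(\mathbf{C})$ the correction is $\mathcal{O}(\rho^{2\beta-\eta})=o(\rho^\beta)$ against a strictly positive $\rho^\beta$-scale block, yielding $\log\lvert\text{Schur}\rvert=\min(r_2,N-r_1)\,\beta\log\rho+\mathcal{O}(1)$ as you assert. With this range-containment step made explicit, your argument is complete.
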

\begin{lemma}\cite{poor1}\label{lemmaused2}
Let $\mathbf{x}^{n}$ and $\mathbf{y}^{n}$ be two sequences of random vectors and let $\mathbf{x}^{*}, \mathbf{y}^{*}, \mathbf{\hat{x}} \text{ and } \mathbf{\hat{y}}$ be Gaussian vectors with covariance matrices satisfying
\begin{eqnarray}
& & \text{Cov}\left[\begin{array}{c}
\mathbf{\hat{x}} \\
\mathbf{\hat{y}} 
           \end{array}\right] = \frac{1}{n}\displaystyle\sum_{i=1}^{n}\text{Cov}\left[\begin{array}{c}
\mathbf{x}_{i} \\
\mathbf{y}_{i}
           \end{array}\right]\preceq\text{Cov}\left[\begin{array}{c}
\mathbf{x}^{*} \\
\mathbf{y}^{*}
           \end{array}\right], \nonumber 
\end{eqnarray}
then we get the following bounds
\begin{eqnarray}
& & h(\mathbf{x}^{n}) \leq nh(\mathbf{\hat{x}}) \leq nh(\mathbf{x}^{*}), \nonumber \\
& & h(\mathbf{y}^{n}|\mathbf{x}^{n}) \leq nh(\mathbf{\hat{y}}|\mathbf{\hat{x}}) \leq nh(\mathbf{y}^{*}|\mathbf{x}^{*}). \nonumber
\end{eqnarray}
\end{lemma}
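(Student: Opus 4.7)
The plan is to establish each inequality in turn using three standard ingredients: (i) the chain rule together with the fact that conditioning reduces differential entropy, (ii) the maximum-entropy property of the Gaussian distribution subject to a fixed second moment, and (iii) the concavity of $\log\det$ on the positive-semidefinite cone (Jensen's inequality). First I would handle the unconditional bound. By the chain rule, $h(\mathbf{x}^{n})=\sum_{i=1}^{n} h(\mathbf{x}_{i}\,|\,\mathbf{x}^{i-1})\le \sum_{i} h(\mathbf{x}_{i})$; Gaussian maximality then gives $h(\mathbf{x}_{i})\le \frac{1}{2}\log\det(2\pi e\,\mathrm{Cov}(\mathbf{x}_{i}))$; and Jensen's inequality applied to the concave functional $\log\det$ yields $\frac{1}{n}\sum_{i}\log\det \mathrm{Cov}(\mathbf{x}_{i})\le \log\det \mathrm{Cov}(\mathbf{\hat{x}})$, from which $h(\mathbf{x}^{n})\le n\,h(\mathbf{\hat{x}})$ follows. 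The subsequent inequality $h(\mathbf{\hat{x}})\le h(\mathbf{x}^{*})$ reduces to $\log\det \mathrm{Cov}(\mathbf{\hat{x}})\le \log\det \mathrm{Cov}(\mathbf{x}^{*})$, which is immediate from monotonicity of $\log\det$ on the PSD cone after extracting the top-left block of the joint PSD inequality in the hypothesis.

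The conditional bound $h(\mathbf{y}^{n}\,|\,\mathbf{x}^{n})\le n\,h(\mathbf{\hat{y}}\,|\,\mathbf{\hat{x}})$ requires more care, and I expect this to be the main obstacle, because Jensen concavity does not push directly through a Schur-complement expression. My approach is a mixture / time-sharing argument, first noting that differential entropy is translation-invariant, so without loss of generality each $\mathbf{x}_{i},\mathbf{y}_{i}$ is zero-mean. Let $I$ be uniform on $\{1,\ldots,n\}$ and independent of everything else, and set $\mathbf{\tilde{x}}=\mathbf{x}_{I}$, $\mathbf{\tilde{y}}=\mathbf{y}_{I}$; then the joint covariance of $(\mathbf{\tilde{x}},\mathbf{\tilde{y}})$ is exactly $\frac{1}{n}\sum_{i}\mathrm{Cov}([\mathbf{x}_{i};\mathbf{y}_{i}])$, which coincides with the joint covariance of $(\mathbf{\hat{x}},\mathbf{\hat{y}})$ by hypothesis. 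I would then chain $h(\mathbf{y}^{n}\,|\,\mathbf{x}^{n})\le \sum_{i} h(\mathbf{y}_{i}\,|\,\mathbf{x}_{i}) = n\,h(\mathbf{\tilde{y}}\,|\,\mathbf{\tilde{x}},I)\le n\,h(\mathbf{\tilde{y}}\,|\,\mathbf{\tilde{x}})$ (dropping conditioning on $\mathbf{x}^{i-1},\mathbf{y}^{i-1}$ and then on $I$), and bound the last term using Gaussian maximality of conditional entropy for a fixed joint covariance, namely $h(\mathbf{y}\,|\,\mathbf{x})=h(\mathbf{y}-\mathbf{L}\mathbf{x}\,|\,\mathbf{x})\le h(\mathbf{y}-\mathbf{L}\mathbf{x})\le \frac{1}{2}\log\det(2\pi e\,\Sigma_{y|x})$ with $\mathbf{L}$ the LMMSE gain, which gives $h(\mathbf{\tilde{y}}\,|\,\mathbf{\tilde{x}})\le h(\mathbf{\hat{y}}\,|\,\mathbf{\hat{x}})$.

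Finally, the outermost inequality $h(\mathbf{\hat{y}}\,|\,\mathbf{\hat{x}})\le h(\mathbf{y}^{*}\,|\,\mathbf{x}^{*})$ follows from monotonicity of the Schur complement under PSD ordering of the joint covariance, combined with monotonicity of $\log\det$: if $\mathrm{Cov}([\mathbf{\hat{x}};\mathbf{\hat{y}}])\preceq \mathrm{Cov}([\mathbf{x}^{*};\mathbf{y}^{*}])$, then $\mathrm{Cov}(\mathbf{\hat{y}})-\mathrm{Cov}(\mathbf{\hat{y}},\mathbf{\hat{x}})\mathrm{Cov}(\mathbf{\hat{x}})^{-1}\mathrm{Cov}(\mathbf{\hat{x}},\mathbf{\hat{y}}) \preceq \mathrm{Cov}(\mathbf{y}^{*})-\mathrm{Cov}(\mathbf{y}^{*},\mathbf{x}^{*})\mathrm{Cov}(\mathbf{x}^{*})^{-1}\mathrm{Cov}(\mathbf{x}^{*},\mathbf{y}^{*})$, a standard matrix-analysis fact (e.g., Horn--Johnson). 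Taking $\log\det$ of both sides delivers the result. The step most likely to require explicit verification is this Schur-complement monotonicity, since the other pieces are textbook applications of Gaussian extremality and Jensen's inequality; the mixture construction with $I$ is the key device that converts the per-index Jensen concavity into a clean conditional-entropy statement without having to pass concavity through the Schur-complement operation directly.
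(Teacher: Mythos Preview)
The paper does not actually prove this lemma; it is quoted from the cited reference \cite{poor1} and used as a black box throughout the outer-bound derivations, so there is no in-paper argument to compare against. Your proof proposal is correct and is precisely the standard route: chain rule plus ``conditioning reduces entropy'' to single-letterize, Gaussian extremality (unconditional and conditional, via the LMMSE error) to pass to second moments, concavity of $\log\det$ (equivalently the time-sharing construction with the auxiliary index $I$) to average the per-symbol covariances into $\mathrm{Cov}(\hat{\mathbf{x}},\hat{\mathbf{y}})$, and finally Loewner monotonicity of the Schur complement together with monotonicity of $\log\det$ for the step from $(\hat{\mathbf{x}},\hat{\mathbf{y}})$ to $(\mathbf{x}^{*},\mathbf{y}^{*})$. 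The Schur-complement monotonicity you flag as the delicate point is indeed a theorem: for any vector $v$ one has $v^{H}(A/A_{11})v=\min_{u}\,[\,u;\,v\,]^{H}A[\,u;\,v\,]$, and since $A\preceq B$ the minimum on the right can only increase when $A$ is replaced by $B$, giving $A/A_{11}\preceq B/B_{11}$.
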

\begin{lemma}\cite{mahesh1}\label{lemmaused3}
Let $0 \preceq \mathbf{G}_{1} \preceq \mathbf{G}_{2}$ and $\mathbf{0} \preceq \mathbf{A}$ be positive semi-definite matrices of size $N\times N$. For any given $\pi \in \Real^{+}$, 
\begin{equation}
\mathbf{G}_{1}\lcb \Iden + \pi \mathbf{G}_{1}\mathbf{A}\mathbf{G}_{1} \rcb ^{-1}\mathbf{G}_{1} \preceq  \mathbf{G}_{2}\lcb \Iden + \pi \mathbf{G}_{2}\mathbf{A}\mathbf{G}_{2} \rcb ^{-1}\mathbf{G}_{2}.
\end{equation}
\end{lemma}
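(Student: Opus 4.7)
The plan is to convert the Loewner-ordering inequality into a form where the well-known order-reversal of matrix inversion on the positive definite cone can be applied. First, I would establish the identity, valid for any positive definite $\mathbf{G}$,
\[
\mathbf{G}(\mathbf{I} + \pi \mathbf{G}\mathbf{A}\mathbf{G})^{-1}\mathbf{G} = (\mathbf{G}^{-2} + \pi \mathbf{A})^{-1},
\]
which follows by computing the inverse of the left-hand side directly: $\mathbf{G}^{-1}(\mathbf{I} + \pi \mathbf{G}\mathbf{A}\mathbf{G})\mathbf{G}^{-1} = \mathbf{G}^{-2} + \pi \mathbf{A}$. Equivalently, the push-through identity $(\mathbf{I} + \pi \mathbf{G}\mathbf{A}\mathbf{G})^{-1}\mathbf{G} = \mathbf{G}(\mathbf{I} + \pi \mathbf{A}\mathbf{G}^{2})^{-1}$ exhibits the fact that the expression appearing in the lemma depends on $\mathbf{G}$ only through $\mathbf{G}^{2}$, which is the key feature the remainder of the argument exploits.

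With this representation, the desired inequality $(\mathbf{G}_{1}^{-2} + \pi \mathbf{A})^{-1} \preceq (\mathbf{G}_{2}^{-2} + \pi \mathbf{A})^{-1}$ follows, by anti-monotonicity of matrix inversion on the positive definite cone, from $\mathbf{G}_{2}^{-2} \preceq \mathbf{G}_{1}^{-2}$; applying anti-monotonicity of inversion once more, this in turn is equivalent to $\mathbf{G}_{1}^{2} \preceq \mathbf{G}_{2}^{2}$. Thus the whole lemma is reduced to showing the squared ordering for the two matrices in the hypothesis.

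The step I expect to be the main obstacle is promoting the stated hypothesis $\mathbf{G}_{1} \preceq \mathbf{G}_{2}$ to the squared ordering $\mathbf{G}_{1}^{2} \preceq \mathbf{G}_{2}^{2}$, because operator squaring on the positive semi-definite cone is not Loewner monotone in general. In the settings where this lemma is actually invoked in the derivations of the outer bounds of Theorems~\ref{theorem-outer1}--\ref{theorem-outer3}, however, $\mathbf{G}$ has the form $\mathbf{P}^{1/2}$, the unique Hermitian square root of an input covariance matrix; the underlying comparison is then $\mathbf{P}_{1} \preceq \mathbf{P}_{2}$, which gives $\mathbf{G}_{1}^{2} = \mathbf{P}_{1} \preceq \mathbf{P}_{2} = \mathbf{G}_{2}^{2}$ directly and bypasses the non-monotonicity of squaring.

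Finally, to accommodate the case that one or both of $\mathbf{G}_{i}$ is singular, a standard perturbation suffices: replace $\mathbf{G}_{i}$ by $\mathbf{G}_{i} + \epsilon \mathbf{I}$ to make both strictly positive definite, observe that the Loewner orderings of both $\mathbf{G}_{i}$ and $\mathbf{G}_{i}^{2}$ are preserved under this perturbation, apply the invertible case established above, and then take $\epsilon \to 0^{+}$ using continuity of the map $\mathbf{G} \mapsto \mathbf{G}(\mathbf{I} + \pi \mathbf{G}\mathbf{A}\mathbf{G})^{-1}\mathbf{G}$.
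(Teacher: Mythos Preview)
The paper does not prove this lemma; it is quoted from \cite{mahesh1} without argument, so there is no proof in the paper to compare against. Your reduction is clean and correct: for invertible $\mathbf{G}$ the identity $\mathbf{G}(\mathbf{I}+\pi\mathbf{G}\mathbf{A}\mathbf{G})^{-1}\mathbf{G}=(\mathbf{G}^{-2}+\pi\mathbf{A})^{-1}$ holds, and anti-monotonicity of inversion then reduces the claim to $\mathbf{G}_1^2\preceq\mathbf{G}_2^2$. Your $\epsilon$-perturbation to handle singular $\mathbf{G}_i$ is also fine, since $(\mathbf{G}_i+\epsilon\mathbf{I})^2=\mathbf{G}_i^2+2\epsilon\mathbf{G}_i+\epsilon^2\mathbf{I}$ preserves both orderings.

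More importantly, you have put your finger on a real defect in the lemma \emph{as stated}: the hypothesis $\mathbf{G}_1\preceq\mathbf{G}_2$ does not imply $\mathbf{G}_1^2\preceq\mathbf{G}_2^2$, and hence the lemma is false in the generality written. A concrete counterexample is $\mathbf{A}=\mathbf{0}$ (which the hypothesis $\mathbf{0}\preceq\mathbf{A}$ permits) together with
\[
\mathbf{G}_1=\begin{pmatrix}1&0\\0&0\end{pmatrix},\qquad
\mathbf{G}_2=\begin{pmatrix}2&1\\1&1\end{pmatrix},
\]
for which $\mathbf{G}_2-\mathbf{G}_1=\begin{pmatrix}1&1\\1&1\end{pmatrix}\succeq 0$ but $\mathbf{G}_2^2-\mathbf{G}_1^2=\begin{pmatrix}4&3\\3&2\end{pmatrix}$ has determinant $-1$; then the two sides of the lemma are $\mathbf{G}_1^2$ and $\mathbf{G}_2^2$, and the inequality fails. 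Your diagnosis that the paper only ever invokes the lemma with $\mathbf{G}_i=\mathbf{P}_i^{1/2}$ and $\mathbf{G}_2=\mathbf{I}$ (see the passage from Theorem~\ref{theorem-outer1} to \eqref{eq:lmcoopouter2} and from Theorem~\ref{theorem-outer3} to \eqref{eq:lmouterthree1}), so that $\mathbf{G}_1^2=\mathbf{P}\preceq\mathbf{I}=\mathbf{G}_2^2$ is the actual hypothesis available, is exactly right. In short: your argument proves the correct version of the lemma (hypothesis $\mathbf{G}_1^2\preceq\mathbf{G}_2^2$, or equivalently the comparison on the underlying covariances), which is what the paper in fact uses.
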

\begin{lemma}\cite{mahesh2}\label{lemmaused4}
Let $\mathbf{R}_{1}$, $\mathbf{R}_{2}$ and $\mathbf{R}_{3}$ be $N \times N$ covariance matrices with rank $r_{1}$, $r_{2}$ and $r_3$, respectively. Let $\mathbf{R}_{i} = \mathbf{U}_{i}\Lambda_{i}\mathbf{U}_{i}^{H}$ represents the EVD of $\mathbf{R}_{i}$, with $\mathbf{U}_{i} \in \Complex^{N\times r_{i}}$. If $\text{rank}\lsqb\mathbf{U}_{1} \:\: \mathbf{U}_{2} \rsqb = \min\lcb r_{1} + r_{2}, N\rcb$ and $\text{rank}\lsqb\mathbf{U}_{1} \:\: \mathbf{U}_{2}\:\: \mathbf{U}_{3} \rsqb = \min\lcb r_{1} + r_{2} + r_{3}, N\rcb$, then for $\eta \geq \beta \geq \gamma$,
\begin{eqnarray}
 & J_{1} & \triangleq \log|\Iden_{N}+\rho^{\eta}\mathbf{R}_{1}+\rho^{\beta}\mathbf{R}_{2}+\rho^{\gamma}\mathbf{R}_{3}| \nonumber \\
 & & = r_{1}\eta\log\rho + \min\lcb r_{2},N-r_{1}\rcb \beta\log\rho + \min\lcb r_{3}, (N-r_{1} - r_{2})^{+}\rcb \gamma\log\rho + \oone.
\end{eqnarray}
\end{lemma}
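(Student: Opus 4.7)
The plan is to extend the two-matrix expansion of Lemma~\ref{lemmaused1} to three matrices via a Schur-complement argument that peels off the dominant $\rho^{\eta}\mathbf{R}_{1}$ contribution, reducing the problem to a two-matrix determinant in an $(N-r_{1})$-dimensional subspace to which Lemma~\ref{lemmaused1} applies directly.

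First, I would pick an orthonormal basis of $\Complex^{N}$ whose first $r_{1}$ vectors span the column space of $\mathbf{U}_{1}$. In this basis $\mathbf{R}_{1}$ is supported entirely on the upper-left $r_{1}\times r_{1}$ block, while each of $\mathbf{R}_{2},\mathbf{R}_{3}$ takes the block form $\lsqb\begin{array}{cc}\mathbf{A}_{k}&\mathbf{B}_{k}\\ \mathbf{B}_{k}^{H}&\mathbf{D}_{k}\end{array}\rsqb$. Applying the Schur-complement identity to $\mathbf{M}\triangleq\Iden_{N}+\rho^{\eta}\mathbf{R}_{1}+\rho^{\beta}\mathbf{R}_{2}+\rho^{\gamma}\mathbf{R}_{3}$ gives $\log\labs\mathbf{M}\rabs=\log\labs\mathbf{M}_{11}\rabs+\log\labs\mathbf{M}_{22}-\mathbf{M}_{21}\mathbf{M}_{11}^{-1}\mathbf{M}_{12}\rabs$, isolating the $\mathbf{R}_{1}$ contribution. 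Since $\mathbf{M}_{11}=\Iden_{r_{1}}+\rho^{\eta}\Lambda_{1}+\rho^{\beta}\mathbf{A}_{2}+\rho^{\gamma}\mathbf{A}_{3}$ is dominated by $\rho^{\eta}\Lambda_{1}$, the first term is $r_{1}\eta\log\rho+\oone$.

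For the lower block, the key observation is that $\mathbf{M}_{11}^{-1}=\mathcal{O}(\rho^{-\eta})$ while $\mathbf{M}_{12}=\rho^{\beta}\mathbf{B}_{2}+\rho^{\gamma}\mathbf{B}_{3}=\mathcal{O}(\rho^{\beta})$, so the Schur correction $\mathbf{M}_{21}\mathbf{M}_{11}^{-1}\mathbf{M}_{12}=\mathcal{O}(\rho^{2\beta-\eta})$ is no larger than the $\rho^{\beta}\mathbf{D}_{2}$ term already present, and can be absorbed into it. Thus $\log\labs\mathbf{M}_{22}-\mathbf{M}_{21}\mathbf{M}_{11}^{-1}\mathbf{M}_{12}\rabs=\log\labs\Iden_{N-r_{1}}+\rho^{\beta}\tilde{\mathbf{D}}_{2}+\rho^{\gamma}\tilde{\mathbf{D}}_{3}\rabs+\oone$ for some positive semi-definite $\tilde{\mathbf{D}}_{2},\tilde{\mathbf{D}}_{3}$ that agree with $\mathbf{D}_{2},\mathbf{D}_{3}$ at leading order. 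Because $\tilde{\mathbf{D}}_{k}$ is essentially the orthogonal projection of $\mathbf{R}_{k}$ onto $\mathrm{span}(\mathbf{U}_{1})^{\perp}$, the hypothesis $\mathrm{rank}\lsqb\mathbf{U}_{1}\:\mathbf{U}_{2}\rsqb=\min\lcb r_{1}+r_{2},N\rcb$ forces $\mathrm{rank}(\tilde{\mathbf{D}}_{2})=\min\lcb r_{2},N-r_{1}\rcb$, and the joint rank hypothesis on $\lsqb\mathbf{U}_{1}\:\mathbf{U}_{2}\:\mathbf{U}_{3}\rsqb$ places the ranges of $\tilde{\mathbf{D}}_{2}$ and $\tilde{\mathbf{D}}_{3}$ in generic position in $\Complex^{N-r_{1}}$. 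Invoking Lemma~\ref{lemmaused1} in that subspace then yields $\min\lcb r_{2},N-r_{1}\rcb\beta\log\rho+\min\lcb r_{3},(N-r_{1}-r_{2})^{+}\rcb\gamma\log\rho+\oone$, which combined with the upper-block estimate gives the claimed identity.

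The main obstacle, I anticipate, is the bookkeeping in the Schur-complement step: rigorously verifying that the correction $\mathbf{M}_{21}\mathbf{M}_{11}^{-1}\mathbf{M}_{12}$ can be absorbed without altering the rank structure of the $\rho^{\beta}$ contribution to the lower block, and that the generic-position hypothesis cleanly descends to the compressed matrices $\tilde{\mathbf{D}}_{2},\tilde{\mathbf{D}}_{3}$. A conceptually cleaner alternative would be to prove the result by induction on the number of matrices, using Lemma~\ref{lemmaused1} as the base case and the same peeling step to reduce an $m$-matrix instance to an $(m-1)$-matrix instance in a smaller subspace; the three-matrix case is then the first induction step, at the cost of carrying more abstract rank-propagation conditions through the recursion.
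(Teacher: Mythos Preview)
The paper does not give its own proof of this lemma: it is quoted verbatim from \cite{mahesh2} and simply invoked as a tool in the GDOF simplifications (Appendix~\ref{sec:appendouter2-lemma}). So there is nothing in the paper to compare your argument against.

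That said, your Schur-complement peeling strategy is a standard and sound way to establish such determinant asymptotics, and it is essentially how one would naturally extend Lemma~\ref{lemmaused1} to three matrices. The only technical point I would flag is the absorption step in the boundary case $\eta=\beta$: there the correction $\mathbf{M}_{21}\mathbf{M}_{11}^{-1}\mathbf{M}_{12}$ is of the \emph{same} order $\rho^{\beta}$ as $\mathbf{D}_{2}$, not strictly smaller, so you must verify that $\mathbf{D}_{2}-\mathbf{B}_{2}^{H}\Lambda_{1}^{-1}\mathbf{B}_{2}$ (plus the $\oone$ perturbation coming from expanding $\mathbf{M}_{11}^{-1}$) retains rank $\min\{r_{2},N-r_{1}\}$. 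When $\eta>\beta$ this is automatic since the correction is $o(\rho^{\beta})$; when $\eta=\beta$ one can instead treat $\rho^{\eta}\mathbf{R}_{1}+\rho^{\beta}\mathbf{R}_{2}$ as a single dominant term of rank $\min\{r_{1}+r_{2},N\}$ and peel off $\mathbf{R}_{3}$, which sidesteps the issue. Your proposed inductive framing handles this uniformly and is the cleaner route.
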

\section{Inner Bound}\label{sec:innerbound}
In this section, an inner bound is derived for the $K$-user symmetric MIMO $(M \leq N \text{ and } KM > N)$\footnote{Note that, if $KM \le N$, one can trivially achieve the interference-free GDOF of $M$ per user, by using a ZF receiver.} GIC. The main results are stated as theorems; and the proofs are provided in the Appendix. Also, the detailed discussion and interpretation of the results is relegated to the next section. For vector space IA, the channel is required to be time-varying \cite{gou1}. The results for the HK-scheme, treating interference as noise and ZF-receiving are applicable in both the time-varying and the constant-channel cases.
Before stating the inner bounds, the following known results on the achievable DOF using IA and Zero-Forcing (ZF) reception are recapitulated.
\subsection{Known Results}
\subsubsection{Interference Alignment (IA)}\label{sec:interalignment}
In \cite{gou1}, it is shown that using vector space IA, the achievable per user DOF for a $K$-user symmetric MIMO GIC is
\begin{equation}
 d_{\text{IA}} = \frac{R}{R+1}\min\lcb M,N\rcb, \text{ when } K > R, \text{ where } R = \left\lfloor\frac{\max\lcb M,N\rcb}{\min\lcb M,N\rcb}\right\rfloor. \label{eq:inneralign1}
\end{equation}
It requires global channel knowledge at every node and the channel to be time varying. 
\subsubsection{Zero-Forcing (ZF) Receiving}
The achievable DOF by ZF-receiving is given by:
\begin{equation}
d_{\text{ZF}} = \min\lcb M, \frac{N}{K} \rcb.  \label{eq:innerzf1}
\end{equation}

Note that, for vector space IA and ZF-receiving, the relative strength between the signal and interference does not matter, and hence the above DOF is achievable for all values of $\alpha$.
\subsection{Treating Interference as Noise}\label{sec:innerintasnoise}
Treating interference as noise is one of the simplest methods of dealing with interference, and may work well when the interference is weak. The following theorem summarizes the GDOF obtained by treating interference as noise.
\begin{theorem}The following per user GDOF is achievable for the $K$-user symmetric MIMO GIC when interference is treated as noise:
\label{th:theorem1_intnoise}
\begin{enumerate}
 \item When $\frac{N}{M} < K \leq \frac{N}{M} + 1$, 
\begin{eqnarray}
& & d(\alpha) \geq  M + (N-KM)\alpha. \label{eq:teatintnoiseth1a}
\end{eqnarray}
 \item When $K > \frac{N}{M} + 1$,
\begin{eqnarray}
& & d(\alpha) \geq M(1-\alpha).  \label{eq:teatintnoiseth1b}
\end{eqnarray}
\end{enumerate}
\end{theorem}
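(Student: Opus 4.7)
The plan is to let each transmitter send a single message drawn from an independent circularly symmetric complex Gaussian codebook with input covariance $\mathbf{P}_{i} = \frac{1}{M}\Iden_{M}$, which satisfies the per-user power constraint $\mean{\mathbf{x}_{i}^{H}\mathbf{x}_{i}}=1$. Each receiver decodes its own message while treating all interfering signals as additive colored Gaussian noise. The channel from transmitter $j$ to receiver $j$ then reduces to a single-user MIMO Gaussian link with effective noise covariance $\mathbf{K}_{j} \triangleq \Iden_{N} + \frac{\rho^{\alpha}}{M}\sum_{i\neq j}\Hji\Hji^{H}$, for which the standard point-to-point MIMO capacity formula yields the achievable rate
\begin{equation*}
R_{j} = \log\left\lvert\Iden_{N} + \tfrac{\rho}{M}\Hjj\Hjj^{H}\mathbf{K}_{j}^{-1}\right\rvert.
\end{equation*}

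Next, I would use the determinant identity $\log|\Iden+\mathbf{A}\mathbf{B}^{-1}|=\log|\mathbf{A}+\mathbf{B}|-\log|\mathbf{B}|$ to split $R_{j}$ into the difference of two log-determinants,
\begin{equation*}
R_{j} = \log\left\lvert\Iden_{N} + \tfrac{\rho}{M}\Hjj\Hjj^{H} + \tfrac{\rho^{\alpha}}{M}\sum_{i\neq j}\Hji\Hji^{H}\right\rvert - \log\left\lvert\Iden_{N} + \tfrac{\rho^{\alpha}}{M}\sum_{i\neq j}\Hji\Hji^{H}\right\rvert,
\end{equation*}
and then invoke Lemma~\ref{lemmaused1} to evaluate each log-determinant asymptotically as $\rho\to\infty$. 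In the regime $0\leq\alpha\leq 1$, the direct-link covariance $\Hjj\Hjj^{H}/M$ has rank $M$, while the aggregate interference covariance $\sum_{i\neq j}\Hji\Hji^{H}/M$ has rank $r_{I}\triangleq\min\{(K-1)M,\,N\}$ almost surely, because the channel coefficients are drawn from a continuous distribution. Applying Lemma~\ref{lemmaused1} with exponents $\eta=1$, $\beta=\alpha$ yields the first log-determinant as $[M+\min\{r_{I},\,N-M\}\,\alpha]\log\rho+\oone$ and the second as $r_{I}\,\alpha\log\rho+\oone$.

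The two stated cases then fall out directly. When $\frac{N}{M}<K\leq\frac{N}{M}+1$, one has $(K-1)M\leq N$ and $(K-1)M > N-M$ (because $K>N/M$), so $r_{I}=(K-1)M$ and $\min\{r_{I},\,N-M\}=N-M$; subtracting gives $R_{j}/\log\rho \to M+(N-KM)\alpha$. When $K>\frac{N}{M}+1$, one has $(K-1)M>N$, so $r_{I}=N$ and $\min\{r_{I},\,N-M\}=N-M$, yielding $R_{j}/\log\rho \to M(1-\alpha)$. By the symmetry of the scheme, every user achieves the same rate, so the sum rate is at least $KR_{j}$ and the per-user GDOF lower bounds in the theorem follow. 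For $\alpha>1$ the right-hand sides of \eqref{eq:teatintnoiseth1a} and \eqref{eq:teatintnoiseth1b} are dominated by the $\alpha\leq 1$ value and are in many cases non-positive, so the inequalities hold trivially since $d(\alpha)\geq 0$.

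The only delicate step is verifying the joint rank condition $\text{rank}[\mathbf{U}_{1}\;\mathbf{U}_{2}]=\min(r_{1}+r_{2},\,N)$ required by Lemma~\ref{lemmaused1}, where $\mathbf{U}_{1}$ and $\mathbf{U}_{2}$ are the eigenvector matrices of $\Hjj\Hjj^{H}/M$ and $\sum_{i\neq j}\Hji\Hji^{H}/M$, respectively. Since the channel coefficients come from a continuous distribution and $KM>N$ by hypothesis, the union of the two column spaces spans $\Complex^{N}$ with probability one, and the rank condition holds almost surely. Beyond this measure-theoretic observation, the argument is a routine high-SNR asymptotic calculation and I do not anticipate any substantive obstacles.
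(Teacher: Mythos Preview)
Your proposal is correct and follows essentially the same route as the paper's proof: write the treating-interference-as-noise rate as a difference of two log-determinants, apply Lemma~\ref{lemmaused1} with the direct-link and interference covariances, and then split into the two cases according to whether $(K-1)M$ exceeds $N$. The paper's argument is slightly terser (it works with a generic full-rank input covariance rather than specifying $\Iden_M/M$, and does not explicitly discuss the rank hypothesis of Lemma~\ref{lemmaused1} or the $\alpha>1$ regime), but the substance is identical.
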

\begin{proof}
See Appendix \ref{sec:th-intasnoise}.
\end{proof}
\subsection{Han-Kobayashi (HK) Scheme}\label{sec:innerHKscheme}
In this section, an achievable GDOF is derived by extending the HK-scheme to the symmetric $K$-user MIMO GIC. As in past work in the two-user and SIMO case \cite{etkin1}, \cite{tarokh1}, and \cite{gou3}, three different interference regimes are considered: strong, moderate, and weak interference. A key idea in the proof is to minimize the achievable GDOF per user from the common part of the message over all possible subsets of users, which does not enter into the picture in the $2$ user case considered in past work. Also, the results stated in this subsection are applicable even when $\frac{N}{M}$ is not an integer.
\subsubsection{Strong Interference Case $(\alpha \geq 1)$} \label{sec:strong-int}
When $\alpha \geq 1$, each receiver can decode both the unintended messages as well as  the intended message. Hence, a $K$ user MAC channel is formed at each receiver, and the achievable rate region is the intersection of the $K$ MAC regions obtained. This results in the following inner bound on the per user GDOF. 
\begin{theorem}
 \label{th-highint1}
In the strong interference case $(\alpha \geq 1)$, the following per user GDOF is achievable by the HK-scheme:
\begin{enumerate}
 \item When $\frac{N}{M} < K \leq \frac{N}{M} + 1$,
\begin{eqnarray}
& & d(\alpha) \geq \min\lcb M, \frac{1}{K}\lsqb(K-1)M\alpha + N - (K-1)M \rsqb\rcb. \label{eq:strongth1a}
\end{eqnarray}
 \item When $K > \frac{N}{M} + 1$,
\begin{eqnarray}
& & d(\alpha) \geq \min\lcb M, \frac{N\alpha}{K}\rcb. \label{eq:strongth1b}
\end{eqnarray}
\end{enumerate}
\end{theorem}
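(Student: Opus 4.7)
The plan is to achieve the stated GDOF via a minimal specialisation of the HK-scheme. Since $\alpha \geq 1$, every cross link arrives at each receiver with exponent no smaller than that of the direct link, so at large $\rho$ each receiver can jointly decode all $K$ messages. I would therefore make every user's message \emph{entirely} public (zero private power), choose independent Gaussian codebooks $\mathbf{x}_i \sim \mathcal{CN}\!\left(\mathbf{0}, \tfrac{1}{M}\mathbf{I}_M\right)$ (so that $\mean{\mathbf{x}_i^H\mathbf{x}_i}=1$ and $\mean{\mathbf{x}_i\mathbf{x}_i^H}$ is full rank), and let every receiver jointly decode all $K$ codewords. The achievable symmetric per-user rate $R$ is then governed by the intersection, over the receiver index $j\in\{1,\ldots,K\}$, of the $K$-user Gaussian MAC rate regions induced at the $K$ receivers.

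By the symmetry of the channel statistics and input distributions, all $K$ MAC regions impose identical constraints on the common $R$, so it suffices to work at receiver~$1$. For each $S\subseteq\{1,\ldots,K\}$, the MAC constraint reads
\[
|S|\,R \;\leq\; \log\!\left|\,\mathbf{I}_N + \frac{\rho}{M}\,\mathbf{1}_{\{1\in S\}}\,\mathbf{H}_{11}\mathbf{H}_{11}^H + \frac{\rho^{\alpha}}{M}\!\sum_{i\in S\setminus\{1\}}\!\mathbf{H}_{1i}\mathbf{H}_{1i}^H\,\right|.
\]
Because the channel entries come from a continuous distribution, $\mathbf{H}_{11}\mathbf{H}_{11}^H$ has rank $M$ almost surely, $\sum_{i\in S\setminus\{1\}}\mathbf{H}_{1i}\mathbf{H}_{1i}^H$ has rank $\min\{N,(|S|-1)M\}$ almost surely, and the column-span hypothesis of Lemma~\ref{lemmaused1} holds almost surely, so I can evaluate each such determinant via Lemma~\ref{lemmaused1}.

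A case analysis on $|S|$ shows that, for $\alpha\geq 1$, only two of these inequalities are GDOF-active: the singleton $S=\{1\}$, which gives $R\leq M\log\rho+\oone$ and hence $d(\alpha)\leq M$; and the full sum $S=\{1,\ldots,K\}$. For the latter, apply Lemma~\ref{lemmaused1} with $r_1=\min\{N,(K-1)M\}$ at exponent $\alpha$ and $r_2=M$ at exponent~$1$. When $K>\frac{N}{M}+1$ we have $r_1=N$; only the $\alpha$-term survives, and $KR\leq N\alpha\log\rho+\oone$, which combined with $d(\alpha)\leq M$ yields (\ref{eq:strongth1b}). When $\frac{N}{M}<K\leq\frac{N}{M}+1$ we have $r_1=(K-1)M$ and $\min\{M,N-r_1\}=N-(K-1)M$, so $KR\leq\bigl[(K-1)M\alpha+N-(K-1)M\bigr]\log\rho+\oone$, which combined with $d(\alpha)\leq M$ yields (\ref{eq:strongth1a}).

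The main obstacle is verifying that every intermediate constraint $1<|S|<K$ is slack in the GDOF sense. Splitting on the three rank sub-regimes $|S|M\leq N$, $(|S|-1)M<N<|S|M$, and $(|S|-1)M\geq N$, I would show that the normalised bound (right-hand side divided by $|S|$) is nondecreasing in $|S|$ in the first sub-regime for $\alpha\geq 1$, nonincreasing in $|S|$ in the third, and the single boundary value in the second sub-regime can be handled by direct comparison; consequently the minimum over $|S|\in\{1,\ldots,K\}$ is attained at $|S|=1$ or $|S|=K$. Once this monotonicity is established, the inner bound follows by standard Gaussian MAC random-coding at each receiver intersected across receivers.
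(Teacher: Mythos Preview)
Your proposal is correct and follows essentially the same route as the paper: make all messages public in the strong-interference regime, intersect the resulting $K$-user MAC regions, evaluate each subset constraint via the rank formula of Lemma~\ref{lemmaused1}, and argue that only the singleton $\{1\}$ and the full set $\{1,\ldots,K\}$ are binding. Your unimodality argument for dispensing with the intermediate $|S|$ constraints is a slightly cleaner packaging of the paper's case-by-case redundancy checks, but the underlying scheme and computations are identical.
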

\begin{proof}
See Appendix \ref{sec:appendhigint}.
\end{proof}
\subsubsection{Moderate Interference Case $(1/2 \leq \alpha \leq 1)$}\label{sec:moderate-int}
In the moderate interference regime, an achievable scheme based on HK-type message splitting is as follows. The transmitter $j$ splits its message $W_{j}$ into two sub-messages: a common message $W_{c,j}$ that is decodable at every receiver, and a private message $W_{p,j}$ that is required to be decodable only at the desired receiver. The common message is encoded using a Gaussian code book with rate $R_{c,j}$ and power $P_{c,j}$. Similarly, the private message is encoded using a Gaussian code book with rate $R_{p,j}$ and power $P_{p,j}$. Further, it is assumed that the rates are symmetric, i.e., $R_{c,j} = R_{c}$ and $R_{p,j}=R_{p}$. Also, $P_{c,j} = P_{c}$ and $P_{p,j} = P_{p}$. The powers on the private and common messages satisfy the constraint $P_{c} + P_{p} = 1$. 
The codewords are transmitted using superposition coding, and hence, the transmitted signal $X_{j}$ is a superposition of the private message and the public message.

Similar to \cite{tarokh1}, the power in the private message is set such that it is received at the noise floor of the unintended receivers, resulting in $\text{INR}_p = 1$. Coupled with the transmit power constraint at each of the users, the SNRs of the common and private parts at the desired receiver (denoted $\text{SNR}_c$ and $\text{SNR}_p$) and the INRs of the common and private parts  at unintended receivers (denoted $\text{INR}_c$ and $\text{INR}_p$) are given by 
\begin{eqnarray}
 \text{SNR}_{c} = \rho -\rho^{1-\alpha},  \text{SNR}_{p} = \rho^{1-\alpha},  
\text{INR}_{c} = \rho^{\alpha} - 1,  \text{INR}_{p} = 1. \label{eq:modrateint1a}
\end{eqnarray}
The transmit covariance of the common message is assumed to be the same as that of the private message. The decoding order is such that the common message is decoded first, followed by the private message. While decoding the common message, all the users' private messages are treated as noise (including its own private message). 
The rate achievable from the private message is obtained by treating all the other users' private messages as noise. 

The GDOF is contributed by both the private and public parts of the message:
\begin{equation}
d(\alpha) = d_{p}(\alpha) + d_{c}(\alpha), 
\end{equation}
where $d_{p}(\alpha)$ and $d_{c}(\alpha)$ are the GDOF contributed by the private and public parts of the message, respectively. The following theorem summarizes the per user GDOF achievable by this scheme.
\begin{theorem}\label{th:moderateint1}
In the moderate interference regime ($1/2 \le \alpha \le 1$), the $K$-user symmetric MIMO GIC achieves the following per user GDOF:
\begin{enumerate}
 \item When $\frac{N}{M} < K \leq \frac{N}{M} + 1$,
\begin{eqnarray}
&d(\alpha)  \geq M(1-\alpha) + \min\lcb \frac{N\alpha}{K}, \frac{\lsqb M  \lcb (2K-1)\alpha - K\rcb + N(1-\alpha)\rsqb}{K-1} \rcb. \label{eq:moderateth1a} 
\end{eqnarray}
\item When $K > \frac{N}{M}+1$,
\begin{eqnarray}
d(\alpha)  \geq M(1-\alpha) + \min\lcb\frac{N\alpha}{K},\frac{\lsqb N\alpha  - M(1-\alpha)\rsqb}{K-1}\rcb. \label{eq:moderateth1b}
\end{eqnarray}
\end{enumerate}
\end{theorem}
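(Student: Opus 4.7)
The plan is to apply a $K$-user extension of the Han-Kobayashi scheme and then bound the common-message rate using the MIMO MAC region at each receiver. At transmitter $j$, split $W_{j}$ into a public codeword $X_{c,j}$ and a private codeword $X_{p,j}$ drawn from independent Gaussian ensembles, with covariances scaled so that $P_{p,j}=\rho^{-\alpha}\mathbf{I}$ and $P_{c,j}=(1-\rho^{-\alpha})\mathbf{I}$; this makes $\text{INR}_{p}=1$ as specified in (\ref{eq:modrateint1a}). At receiver $j$, first decode all $K$ common messages jointly, treating every private (including $X_{p,j}$) as Gaussian noise, then subtract them and decode the intended private $X_{p,j}$ treating the remaining $K-1$ interfering privates as noise.

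Once the commons have been removed, the intended private reaches receiver $j$ with covariance of order $\rho^{1-\alpha}$ on $M$ independent streams while every interfering private sits at the noise floor, so a direct application of Lemma~\ref{lemmaused1} gives $d_{p}(\alpha)=M(1-\alpha)$. The common rate is constrained by the MIMO MAC formed by $(X_{c,1},\ldots,X_{c,K})$ at receiver $j$; the symmetric-rate point must satisfy, for every non-empty $S\subseteq\{1,\ldots,K\}$,
\[
|S|\,R_{c}\;\le\;h(Y_{j}\mid X_{c,S^{c}})-h(Y_{j}\mid X_{c,1:K}).
\]
Evaluating the Gaussian conditional entropies via Lemmas~\ref{lemmaused1} and~\ref{lemmaused4}, and using the ordering $\rho>\rho^{\alpha}\ge\rho^{1-\alpha}\ge 1$ valid for $\alpha\in[\tfrac{1}{2},1]$, yields $h(Y_{j}\mid X_{c,1:K})=M(1-\alpha)\log\rho+O(1)$ and
\[
h(Y_{j}\mid X_{c,S^{c}})=\begin{cases}\bigl[M+\min\{(|S|-1)M,N-M\}\alpha\bigr]\log\rho+O(1),&j\in S,\\[2pt] \bigl[\min\{|S|M,N\}\alpha+\min\{M,(N-|S|M)^{+}\}(1-\alpha)\bigr]\log\rho+O(1),&j\notin S.\end{cases}
\]

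The next step, which is the essential novelty beyond the two-user case, is to minimize the resulting family of bounds on $d_{c}$ over $S$. For $j\in S$ the bound simplifies to $d_{c}\le\min\{M,N/|S|\}\alpha$; using $KM>N$, this is tightest at $|S|=K$ and gives $d_{c}\le N\alpha/K$. For $j\notin S$ the tightest bound is obtained at $|S|=K-1$: when $\tfrac{N}{M}<K\le\tfrac{N}{M}+1$ the intermediate regime $(K-1)M<N<KM$ applies, and rearranging the expression $(K-1)M\alpha-(KM-N)(1-\alpha)$ yields $d_{c}\le[M((2K-1)\alpha-K)+N(1-\alpha)]/(K-1)$; when $K>\tfrac{N}{M}+1$ the $\rho^{\alpha}$-level interference saturates all $N$ dimensions at receiver $j$ and one obtains $d_{c}\le[N\alpha-M(1-\alpha)]/(K-1)$. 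A routine monotonicity check in $|S|$ rules out every other subset, so the minimum is realised by one of these two distinguished subsets. Setting $d(\alpha)=d_{p}(\alpha)+d_{c}(\alpha)$ recovers (\ref{eq:moderateth1a}) and (\ref{eq:moderateth1b}).

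The main obstacle is the subset-minimization step. The $S$-indexed MAC bounds behave piecewise differently depending on whether $|S|M$ lies below, inside, or above the interval $[N-M,N]$, with crossovers that depend on $K$, $M/N$, and $\alpha$; it therefore requires careful case analysis to confirm that the two distinguished subsets $S=\{1,\ldots,K\}$ and $S=\{1,\ldots,K\}\setminus\{j\}$ realise the minimum uniformly over $\alpha\in[\tfrac{1}{2},1]$ in both parameter regimes $\tfrac{N}{M}<K\le\tfrac{N}{M}+1$ and $K>\tfrac{N}{M}+1$.
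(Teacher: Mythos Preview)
Your proposal is correct and follows essentially the same approach as the paper: the same power split (private received at the noise floor of unintended receivers), the same private GDOF computation $d_p=M(1-\alpha)$, the same Gaussian MAC analysis for the common messages distinguishing subsets that do or do not contain the intended user, and the same identification of the two binding constraints $|S|=K$ (with $j\in S$) and $|S|=K-1$ (with $j\notin S$). The paper's proof carries out the subset-minimization by exhaustive case analysis rather than a monotonicity argument, but the substance and the binding subsets are identical to what you describe.
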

\begin{proof}
See Appendix \ref{sec:appendmoderateint1}.
\end{proof} 
\subsubsection{Weak Interference Case $(0 \leq \alpha \leq 1/2)$}\label{sec:weak-int}
In this case, the received SNR and INR of the common and private messages are set the same way as in the moderate interference regime. The per user GDOF achieved is summarized in the following theorem.
\begin{theorem}\label{th:lowinter1}
In the weak interference regime $\lb 0 \le \alpha \le \frac{1}{2}\rb$, the $K$-user symmetric MIMO GIC achieves the following per user GDOF:
\begin{equation}
 d(\alpha) \geq M(1-\alpha) + \frac{1}{K-1}(N-M)\alpha. \label{eq:weakth1}
\end{equation}
\end{theorem}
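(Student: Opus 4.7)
The plan is to reuse the Han--Kobayashi message splitting set-up of Section~\ref{sec:moderate-int}, namely $\mathbf{P}_{p,i}=\rho^{-\alpha}\mathbf{Q}$ for some fixed $\mathbf{Q}\succ\mathbf{0}$ with bounded spectral norm and $\mathbf{P}_{c,i}=\mathbf{I}_{M}-\mathbf{P}_{p,i}$, so that the per-user power constraint is preserved and $\text{INR}_{p}=1$ exactly as in \eqref{eq:modrateint1a}. At each receiver $j$ I would decode all $K$ common codewords jointly MAC-style, treating every private codeword (own included) as Gaussian noise, and then cancel them off and decode its own private codeword. The achievable per-user GDOF then splits as $d(\alpha)=d_{c}(\alpha)+d_{p}(\alpha)$, arising from the $K$-user MAC common rate region and the residual direct channel, respectively.

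For the private contribution, once the common codewords are stripped at receiver $j$, the remaining signal is $\sqrt{\rho}\,\Hjj\mathbf{x}_{p,j}+\sum_{i\neq j}\sqrt{\rho^{\alpha}}\Hji\mathbf{x}_{p,i}+\Zj$, whose interference-plus-noise covariance $\sum_{i\neq j}\Hji\mathbf{Q}\Hji^{H}+\mathbf{I}_{N}$ is $\oone$ because the $\rho^{\alpha}$ from the cross channel cancels the $\rho^{-\alpha}$ inside $\mathbf{P}_{p,i}$. Applying the Gaussian rate formula together with Lemma~\ref{lemmaused1} to $\log\labs\mathbf{I}_{N}+\rho^{1-\alpha}\Hjj\mathbf{Q}\Hjj^{H}\rabs$ yields $R_{p}=M(1-\alpha)\log\rho+\oone$, so $d_{p}(\alpha)=M(1-\alpha)$.

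For the common contribution, I would invoke the standard $K$-user MAC rate region at each receiver, namely $|S|R_{c}\leq I(\mathbf{x}_{c,S};\Yj\mid\mathbf{x}_{c,S^{c}})$ for every $S\subseteq\{1,\dots,K\}$, and identify the binding constraint as $S=\{1,\dots,K\}\setminus\{j\}$. After conditioning on $\mathbf{x}_{c,j}$, the numerator of this mutual information is the log-determinant of $\rho^{1-\alpha}\Hjj\mathbf{Q}\Hjj^{H}$ (own private, treated as noise, rank $M$), plus the $\rho^{\alpha}$ terms of the $K-1$ interfering commons (aggregate rank up to $(K-1)M$), plus $\oone$ noise. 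Since $0\leq\alpha\leq\tfrac12$ forces $1-\alpha\geq\alpha$, Lemma~\ref{lemmaused4} evaluates the numerator as $M(1-\alpha)\log\rho+\min\{(K-1)M,\,N-M\}\alpha\log\rho+\oone$, while the denominator (noise only) gives $M(1-\alpha)\log\rho+\oone$ by Lemma~\ref{lemmaused1}. Subtracting, and using the standing assumption $KM>N$ (equivalent to $(K-1)M>N-M$), the per-receiver MAC sum-rate budget is $(N-M)\alpha\log\rho+\oone$, and hence $R_{c}\geq\frac{(N-M)\alpha}{K-1}\log\rho$. Summing this with $d_{p}(\alpha)$ produces \eqref{eq:weakth1}.

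The main technical obstacle is to verify that the symmetric tuple $(R_{c},R_{p})$ just described simultaneously satisfies \emph{every} MAC inequality at every receiver, not only the one I identified as binding. For $S\not\ni j$ with $|S|<K-1$ the per-user bound becomes $\frac{1}{|S|}\min\{|S|M,\,N-M\}\alpha$, which I expect to be monotone non-increasing in $|S|$ and therefore slacker than the $|S|=K-1$ case; for $S\ni j$ the numerator of the MI picks up an extra $\rho$ term from the direct channel that contributes an $M\log\rho$, making those constraints automatically slack for a target $R_{c}$ of order $\alpha\log\rho$. A secondary but routine step is checking that the generic rank hypotheses required by Lemmas~\ref{lemmaused1} and~\ref{lemmaused4} hold almost surely under the continuous-distribution assumption on the channel matrices.
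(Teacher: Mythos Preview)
Your proposal is correct and follows essentially the same route as the paper: the identical HK power split with $\text{INR}_p=1$, the same private-part computation giving $d_p(\alpha)=M(1-\alpha)$, and the same identification of the binding MAC constraint as $S=\{1,\dots,K\}\setminus\{j\}$, yielding $d_c(\alpha)=\frac{(N-M)\alpha}{K-1}$ once $KM>N$ forces $\min\{(K-1)M,N-M\}=N-M$. The paper's Appendix~\ref{sec:appendlowint1} carries out exactly this analysis, but splits into the two sub-cases $\frac{N}{M}<K\le\frac{N}{M}+1$ and $K>\frac{N}{M}+1$ and exhaustively checks every subset constraint; both sub-cases collapse to the same answer you obtained.

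One small caution on your hand-wave for subsets $S\ni j$: the extra $\rho$-term in the numerator does contribute $M\log\rho$, but the denominator (own private plus interfering privates plus noise) subtracts $M(1-\alpha)\log\rho$, so the net surplus from including user $j$ is only $M\alpha\log\rho$, not $M\log\rho$. The per-user bound therefore becomes $\frac{M\alpha+\min\{|T|M,N-M\}\alpha}{|T|+1}$ with $T=S\setminus\{j\}$, and one still needs the inequality $\frac{N\alpha}{K}\ge\frac{(N-M)\alpha}{K-1}$ (equivalently $KM\ge N$) to conclude slackness. This is exactly the check the paper performs in \eqref{eq:innerlowint11}--\eqref{eq:innerlowint13} and \eqref{eq:innerlowint27}--\eqref{eq:innerlowint30b}, so your sketch is on the right track but the stated reason is slightly off.
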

\begin{proof}
See Appendix \ref{sec:appendlowint1}.
\end{proof}
\emph{Remark:} The expressions for the GDOF in \eqref{eq:moderateth1b} and \eqref{eq:weakth1} are different
because $\alpha \geq 1-\alpha$ in the former case while $\alpha \le 1-\alpha$ the latter case, and this has been used to simplify the equations. 
\subsection{Achievable GDOF as a Combination of HK-scheme, IA, ZF-Receiving and Treating Interference as Noise}
In this subsection, the performance of the various schemes considered above is consolidated in terms of the parameters $\alpha$, $K$, $M$ and $N$. Here, the channel is assumed to be time-varying in order to include IA along with the other schemes considered in this paper. Further, to simplify the presentation, it is assumed that $\frac{N}{M}$ is an integer in Theorems \ref{innermaxhigh-th}, \ref{innermaxmoderate-th} and \ref{innermaxweak-th}. It is straightforward to extend the result to non-integer values of $\frac{N}{M}$; however, the expressions become cumbersome with the floor of $\frac{N}{M}$ appearing in the expressions, and offer little additional insight on the achievable GDOF. In Theorem \ref{thm:NbyMnotInteger}, the achievable GDOF for the case where $K \ge \frac{N}{M} + 4$ is presented without assuming that $\frac{N}{M}$ is an integer. 

The achievable per user GDOF with IA and ZF-receiving are:
\begin{eqnarray}
& & d_{\text{IA}} = \frac{MN}{M+N}, \label{eq:innercombine1} \\
 \text{and }& &  d_{\text{ZF}} = \min\lcb M, \frac{N}{K}\rcb. \label{eq:innercombine2}
\end{eqnarray}
The maximum achievable GDOF for different interference regimes are stated in following Theorems.
\begin{theorem}\label{innermaxhigh-th}
The achievable GDOF in high interference case $\lb \alpha > 1\rb$ obtained by taking maximum of all the schemes considered in this work is
\begin{enumerate}
 \item When $\frac{N}{M} < K \leq \frac{N}{M} + 1$,
\begin{eqnarray}
& & d(\alpha) \geq \lcb\begin{array}{ll}
    \frac{1}{K}[\alpha(K-1)M+N-(K-1)M] &\mbox{for $ 1 < \alpha < \frac{M(2K-1)-N}{M(K-1)}$} \\ 	
    M & \mbox{for $ \alpha \geq \frac{M(2K-1)-N}{M(K-1)}$}
 \end{array}\right. \label{eq:innermaxhigh1}
\end{eqnarray}
\item When $K > \frac{N}{M} + 1$,
\end{enumerate}
\begin{eqnarray}
 d(\alpha) \geq \left\{\begin{array}{lll}
     \frac{MN}{M+N }    &\mbox{for $ 1 \leq \alpha \leq \frac{KM}{M+N}$} \\ 	
     \frac{\alpha N}{K} & \mbox{for $ \frac{KM}{M+N} < \alpha < \frac{KM}{N} $} \\
     M                   & \mbox{for $\alpha \geq \frac{KM}{N} $}
 \end{array}\right.  \label{eq:innermaxhigh2}
\end{eqnarray}
\end{theorem}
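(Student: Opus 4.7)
The plan is to argue that the stated GDOF is obtained as the pointwise maximum of the four achievable curves: the HK-scheme from Theorem~\ref{th-highint1}, vector-space IA with $d_{\text{IA}} = \frac{MN}{M+N}$, ZF-receiving with $d_{\text{ZF}} = \min\{M, N/K\}$, and treating interference as noise, which yields $M(1-\alpha)\le 0$ and hence contributes nothing in the regime $\alpha \ge 1$. Since each of these schemes is individually achievable, their maximum is achievable as well, so the proof reduces to determining, for each interference regime and each range of $\alpha$, which of the three nontrivial schemes is the largest and simplifying the resulting expression.

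I would first dispense with Case~1, where $\frac{N}{M} < K \le \frac{N}{M}+1$. Because $K > N/M$ implies $N/K < M$, ZF-receiving gives $d_{\text{ZF}} = N/K$. The HK bound $\frac{1}{K}[(K-1)M\alpha + N - (K-1)M]$ is linear and nondecreasing in $\alpha$, and equals $N/K$ at $\alpha=1$, so HK dominates ZF for all $\alpha \ge 1$. A short inequality shows HK also dominates IA: at $\alpha=1$ the HK bound is $N/K$, and $N/K \ge MN/(M+N)$ is equivalent to $K \le (M+N)/M = N/M+1$, which holds in Case~1. Thus only the HK bound is active. Solving $\frac{1}{K}[(K-1)M\alpha + N - (K-1)M] = M$ gives the breakpoint $\alpha^{\star}_1 = \frac{M(2K-1)-N}{M(K-1)}$, and substituting into the $\min$ in Theorem~\ref{th-highint1} yields the two-piece form in \eqref{eq:innermaxhigh1}.

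Next I would handle Case~2, where $K > \frac{N}{M}+1$. Again ZF gives $N/K$, but now $N/K < MN/(M+N)$ (the inequality reverses because $K > N/M+1$), so ZF is dominated by IA throughout. The HK bound is $\min\{M,N\alpha/K\}$, which at $\alpha=1$ equals $N/K < MN/(M+N)$, so IA beats HK for $\alpha$ just above $1$. Since IA is constant in $\alpha$ while the HK bound grows linearly up to the ceiling $M$, the envelope is determined by two crossover points: $N\alpha/K = MN/(M+N)$ gives $\alpha^{\star}_2 = \frac{KM}{M+N}$, and $N\alpha/K = M$ gives $\alpha^{\star}_3 = \frac{KM}{N}$. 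Note $\alpha^{\star}_2 > 1$ is guaranteed precisely by $K > N/M+1$, and $\alpha^{\star}_3 > \alpha^{\star}_2$ since $KM/N > KM/(M+N)$. Piecing the envelope together produces the three-piece form in \eqref{eq:innermaxhigh2}.

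I do not anticipate a hard analytic obstacle: no new achievability argument is needed because each underlying scheme's GDOF is already established in Section~\ref{sec:innerbound}. The main care required is bookkeeping, namely verifying the signs and orderings of the crossover points $\alpha^{\star}_1$, $\alpha^{\star}_2$, $\alpha^{\star}_3$ within the declared range of $K$, and confirming that the schemes dominated in each sub-interval really do stay below the claimed envelope (in particular, confirming the direction of $N/K$ vs.\ $MN/(M+N)$ flips exactly at the boundary $K = N/M+1$ that separates Case~1 and Case~2). Once these inequalities are laid out, the theorem follows by simplification of $\max\{d_{\text{HK}}, d_{\text{IA}}, d_{\text{ZF}}, M(1-\alpha)\}$ on each interval.
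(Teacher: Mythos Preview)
Your proposal is correct and follows essentially the same approach as the paper: discard treating interference as noise (nonpositive for $\alpha\ge 1$), show HK dominates ZF in both cases, and then in Case~2 locate the HK--IA crossover at $\alpha=\frac{KM}{M+N}$ and the HK ceiling at $\alpha=\frac{KM}{N}$. One minor note: in Case~1 the paper handles IA implicitly (since $\frac{N}{M}$ is assumed integer, $K=\frac{N}{M}+1$ forces $d_{\text{IA}}=\frac{MN}{M+N}=\frac{N}{K}=d_{\text{ZF}}$, so the ZF comparison already covers it); your explicit check that $N/K\ge MN/(M+N)$ iff $K\le N/M+1$ is a cleaner way to dispose of IA there.
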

\begin{proof}
See Appendix \ref{sec:innermax-high}.
\end{proof}
\begin{theorem}\label{innermaxmoderate-th}
The achievable GDOF in the moderate interference case $\lb \frac{1}{2}\leq \alpha \leq 1\rb$ obtained by taking maximum of all the achievable schemes considered in this work is
\begin{enumerate}
 \item When $\frac{N}{M} < K \leq \frac{N}{M} + 1$,
 \begin{eqnarray}
& & d(\alpha) \geq \lcb\begin{array}{ll}
    M(1-\alpha) + \frac{1}{K-1}\lsqb M\lcb \alpha(2K-1)-K\rcb+N(1-\alpha)\rsqb&\mbox{for $\frac{1}{2} \leq \alpha \leq \frac{K}{2K-1} $} \\  M(1-\alpha) + \frac{N\alpha}{K}& \mbox{for $ \frac{K}{2K-1} \leq \alpha \leq 1 $} \end{array}\right. \label{eq:innermaxmoderate1}
\end{eqnarray}
\item When $ \frac{N}{M} + 1 < K \leq \frac{N}{M} + 2$,
\begin{eqnarray}
d(\alpha) \geq \lcb\begin{array}{lll}
      M(1-\alpha) +  \frac{N\alpha - M(1-\alpha)}{K-1} &\mbox{for $\frac{1}{2} \leq \alpha \leq \frac{KM}{N+KM}$}  \\
      M(1-\alpha) + \frac{N\alpha}{K} &\mbox{for $\frac{KM}{N+KM} \leq \alpha \leq \frac{KM^{2}}{(M+N)(KM-N)}$} \\
      \frac{MN}{M+N} &\mbox{for $\frac{KM^{2}}{(M+N)(KM-N)} < \alpha \leq 1$}
                   \end{array}\right. \label{eq:innermaxmoderate3}
\end{eqnarray}
\item When $K > \frac{N}{M}+2$,
\begin{eqnarray}
d(\alpha) \geq \frac{MN}{M+N} \label{eq:innermaxmoderate4}
\end{eqnarray}
\end{enumerate}
\end{theorem}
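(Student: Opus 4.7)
The plan is to compute $d(\alpha)$ as the pointwise maximum over the four schemes available in the moderate regime: the HK-scheme from Theorem~\ref{th:moderateint1}, the IA rate $d_{\text{IA}} = \frac{MN}{M+N}$ from \eqref{eq:innercombine1} (valid since $K > R = N/M$ in all three sub-cases), the ZF rate $d_{\text{ZF}} = \min\{M, N/K\} = N/K$, and treating interference as noise, which gives at most $M(1-\alpha)$. The last two are dominated by the others in every sub-case considered (a quick check shows $d_{\text{ZF}} \leq N/K \leq \frac{MN}{M+N}$ whenever $K \geq 1 + N/M$, and treating interference as noise is dominated by the $M(1-\alpha)$ offset that the HK-scheme already contains), so the real contest is between the HK inner bounds and $d_{\text{IA}}$.

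First I would resolve the inner $\min$ appearing in the HK bounds of Theorem~\ref{th:moderateint1}. For Case~1, equating $f_1(\alpha) = N\alpha/K$ with $f_2(\alpha) = \frac{M[(2K-1)\alpha-K]+N(1-\alpha)}{K-1}$ and using $KM > N$ (since $K > N/M$) yields the clean crossover $\alpha = \frac{K}{2K-1}$: for $\alpha \leq K/(2K-1)$ the $f_2$ branch is active and produces the first line of \eqref{eq:innermaxmoderate1}, while for $\alpha \geq K/(2K-1)$ the $f_1$ branch is active and produces the second line. For Cases~2 and~3, solving $\frac{N\alpha}{K} = \frac{N\alpha - M(1-\alpha)}{K-1}$ gives the crossover $\alpha = \frac{KM}{N+KM}$, which separates the two HK branches in \eqref{eq:moderateth1b}.

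Next, I would compare the simplified HK expression against $d_{\text{IA}} = \frac{MN}{M+N}$. On the $f_1$ branch $M(1-\alpha) + N\alpha/K = M - (M - N/K)\alpha$ is linear and decreasing in $\alpha$; solving $M - (M-N/K)\alpha = \frac{MN}{M+N}$ produces the crossover $\alpha^{*} = \frac{KM^{2}}{(M+N)(KM-N)}$. The key arithmetic check is whether $\alpha^{*}$ falls in $[1/2, 1]$ and, more importantly, whether it lies to the right or to the left of $\frac{KM}{N+KM}$. A direct computation shows $\alpha^{*} \leq 1$ iff $K \geq 1 + N/M$ and $\alpha^{*} \geq \frac{KM}{N+KM}$ iff $K \leq 2 + N/M$. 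This is precisely what partitions the three sub-cases of the theorem: in Case~1 ($K \leq N/M+1$) the IA line $MN/(M+N)$ never rises above the HK curve so HK is active throughout; in Case~2 ($N/M + 1 < K \leq N/M+2$) the crossover $\alpha^{*}$ lies in the $f_1$ portion of HK and splits the interval into the three pieces appearing in \eqref{eq:innermaxmoderate3}; and in Case~3 ($K > N/M+2$) $\alpha^{*}$ falls below $1/2$, so IA dominates HK on the entire regime $[1/2,1]$, giving \eqref{eq:innermaxmoderate4}.

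The routine parts are solving the three linear (or near-linear) equations for the crossovers and verifying the monotonicity on each branch; the subtle part is keeping track of which HK branch is active when comparing with $d_{\text{IA}}$, because the $f_1$/$f_2$ threshold $\frac{KM}{N+KM}$ and the HK/IA threshold $\alpha^{*}$ can swap order as $K$ grows. Organising the argument so that the three sub-cases correspond exactly to the three orderings of $\tfrac12$, $\frac{KM}{N+KM}$, $\alpha^{*}$, and $1$ is what produces the clean final statement. I would finish by noting that dropping the assumption $N/M \in \mathbb{Z}$ would only replace $R$ by $\lfloor N/M \rfloor$ in the applicability condition of IA and leave the crossover algebra untouched.
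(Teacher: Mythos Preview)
Your overall plan matches the paper's proof: resolve the $\min$ inside Theorem~\ref{th:moderateint1} to get a two-piece HK curve, then compare the result with $d_{\text{IA}}=\frac{MN}{M+N}$ and note that ZF and treating interference as noise are dominated. The crossover computations $\alpha=\tfrac{K}{2K-1}$, $\alpha=\tfrac{KM}{N+KM}$, and $\alpha^{*}=\tfrac{KM^{2}}{(M+N)(KM-N)}$ are all correct, as is the key observation that $\alpha^{*}\ge \tfrac{KM}{N+KM}$ iff $K\le \tfrac{N}{M}+2$, which is exactly what drives the three-way case split.

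There is, however, a concrete error in your Case~3 argument. You claim that for $K>\tfrac{N}{M}+2$ one has $\alpha^{*}<\tfrac12$, and then use this to conclude IA dominates HK on all of $[\tfrac12,1]$. That inequality is false in general: with $M=1$, $N=2$, $K=5$ one gets $\alpha^{*}=\tfrac{5}{9}>\tfrac12$. The point $\alpha^{*}$ is the crossing of IA with the $f_{1}$-branch $M(1-\alpha)+\tfrac{N\alpha}{K}$ only; it says nothing about the $f_{2}$-branch $M(1-\alpha)+\tfrac{N\alpha-M(1-\alpha)}{K-1}$, which is the piece of HK that is active on $[\tfrac12,\tfrac{KM}{N+KM}]$. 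On that segment, when $\alpha^{*}>\tfrac12$, the $f_{1}$-branch actually sits \emph{above} IA, so the bound ``HK $\le f_{1}$-branch $\le$ IA'' is unavailable. The paper closes this gap by comparing the $f_{2}$-branch with $d_{\text{IA}}$ directly: one obtains
\[
M(1-\alpha)+\frac{N\alpha-M(1-\alpha)}{K-1}\ \ge\ \frac{MN}{M+N}
\quad\Longleftrightarrow\quad
\alpha\bigl[N-M(K-2)\bigr]\ \ge\ \frac{M\bigl[N-M(K-2)\bigr]}{M+N},
\]
so the comparison hinges on the sign of $N-M(K-2)$. If $K\le \tfrac{N}{M}+2$ the inequality reduces to $\alpha\ge \tfrac{M}{M+N}\le\tfrac12$, so HK dominates on the whole $f_{2}$ segment (needed in Case~2); if $K>\tfrac{N}{M}+2$ it flips, giving IA the advantage there (needed in Case~3). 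Adding this $f_{2}$-versus-IA check to your outline fixes the gap, and the rest of your argument then goes through as written.
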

\begin{proof}
See Appendix \ref{sec:innermax-moderate}.
\end{proof}
\begin{theorem}\label{innermaxweak-th}
The achievable GDOF in the weak interference case $\lb0 \leq \alpha \leq \frac{1}{2}\rb$ obtained by taking maximum of all the achievable schemes considered in this work is
 \begin{enumerate}
\item When $K > \frac{N}{M}+2$,
\begin{eqnarray}
& d(\alpha) & \geq \left\{\begin{array}{ll}
    M(1-\alpha) + \frac{1}{K-1}(N-M)\alpha &\mbox{for $0 \leq \alpha \leq \frac{M^{2}}{M(N+M)-\frac{N^{2}-M^{2}}{K-1}} $} \\ 	
    \frac{NM}{N+M} & \mbox{for $ \frac{M^{2}}{M(N+M)-\frac{N^{2}-M^{2}}{K-1}} < \alpha \leq \frac{1}{2} $}
 \end{array}\right. \label{eq:innermaxweak2}
\end{eqnarray}
\item When $\frac{N}{M} < K \leq \frac{N}{M}+2$,
\begin{eqnarray}
& d(\alpha) & \geq  M(1-\alpha) + \frac{1}{K-1}(N-M)\alpha \label{eq:innermaxweak3}
\end{eqnarray}
\end{enumerate}
\end{theorem}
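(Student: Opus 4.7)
The plan is to take the pointwise maximum over $\alpha \in [0,\frac{1}{2}]$ of the four achievable GDOF expressions already derived: the HK-scheme from Theorem~\ref{th:lowinter1}, namely $d_{\text{HK}}(\alpha) = M(1-\alpha) + \frac{N-M}{K-1}\alpha$; interference alignment from \eqref{eq:innercombine1}, $d_{\text{IA}} = \frac{MN}{M+N}$ (valid because $K>R=N/M$ under the integer assumption); zero-forcing reception from \eqref{eq:innercombine2}, $d_{\text{ZF}} = \min\{M, N/K\} = N/K$ since $K > N/M$; and treating interference as noise from Theorem~\ref{th:theorem1_intnoise}. Rewriting $d_{\text{HK}}(\alpha) = M + \alpha\frac{N-MK}{K-1}$ shows $d_{\text{HK}}$ is affine in $\alpha$ with negative slope whenever $K > N/M$, so it is strictly decreasing on $[0,\frac{1}{2}]$.

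The first step is to eliminate TIN and ZF from the envelope. For TIN, a direct subtraction gives $d_{\text{HK}}(\alpha) - d_{\text{TIN}}(\alpha) = \alpha(K-2)(MK-N)/(K-1)$ in the sub-case $\frac{N}{M} < K \leq \frac{N}{M}+1$ and $d_{\text{HK}}(\alpha) - d_{\text{TIN}}(\alpha) = \frac{N-M}{K-1}\alpha$ in the sub-case $K > \frac{N}{M}+1$; both are non-negative since $K \geq 2$, $MK > N$, and $N \geq M$. For ZF, because $d_{\text{HK}}$ is decreasing it suffices to check at the right endpoint: $d_{\text{HK}}(\tfrac{1}{2}) - \frac{N}{K} = (K-2)(MK-N) / [2K(K-1)] \geq 0$. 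Hence the pointwise maximum reduces to $\max\{d_{\text{HK}}(\alpha), d_{\text{IA}}\}$.

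Next, since $d_{\text{HK}}(0) = M > \frac{MN}{M+N} = d_{\text{IA}}$ while $d_{\text{HK}}$ is strictly decreasing, there is a unique crossover $\alpha^{\star}$ at which $d_{\text{HK}}(\alpha^{\star}) = d_{\text{IA}}$. Solving yields $\alpha^{\star} = \frac{M^{2}(K-1)}{(M+N)(MK-N)}$, which algebraically equals the threshold $\frac{M^{2}}{M(N+M) - (N^{2}-M^{2})/(K-1)}$ stated in the theorem (multiply numerator and denominator by $K-1$ and use $N^{2}-M^{2}=(N-M)(N+M)$). Above $\alpha^{\star}$ the max equals $d_{\text{IA}}$; below it, the max equals $d_{\text{HK}}$.

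Finally, to split the statement into the two regimes one compares $\alpha^{\star}$ with $\frac{1}{2}$. Cross-multiplying and regrouping, the sign of $\frac{1}{2} - \alpha^{\star}$ coincides with the sign of $(N-M)[N-(K-2)M]$; since $N \geq M$, this is negative iff $N < (K-2)M$, i.e., iff $K > \frac{N}{M}+2$. Consequently, when $K > \frac{N}{M}+2$ the crossover lies strictly in $(0,\frac{1}{2})$ and the envelope is piecewise HK then IA, giving \eqref{eq:innermaxweak2}; when $\frac{N}{M} < K \leq \frac{N}{M}+2$ we have $\alpha^{\star} \geq \frac{1}{2}$, so HK alone is the max on $[0,\frac{1}{2}]$, giving \eqref{eq:innermaxweak3}. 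The main obstacle is purely bookkeeping: handling both TIN sub-cases cleanly and performing the factorization that identifies $\frac{1}{2} - \alpha^{\star}$ with $(N-M)[N-(K-2)M]$ so the threshold matches the statement; the substantive content is just monotonicity of $d_{\text{HK}}$ in $\alpha$.
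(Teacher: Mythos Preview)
Your approach is essentially the same as the paper's: eliminate treating-interference-as-noise and ZF by direct comparison with HK, then locate the HK--IA crossover $\alpha^{\star}$ and compare it to $\tfrac{1}{2}$. The computations for $d_{\text{HK}}-d_{\text{TIN}}$, $d_{\text{HK}}(\tfrac{1}{2})-d_{\text{ZF}}$, and the expression $\alpha^{\star}=\tfrac{M^{2}(K-1)}{(M+N)(MK-N)}$ are all correct.

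There is, however, a sign slip in your last step. Expanding the numerator of $\tfrac{1}{2}-\alpha^{\star}$ gives
\[
(M+N)(MK-N)-2M^{2}(K-1)=-\bigl[N^{2}-MN(K-1)+M^{2}(K-2)\bigr]=-(N-M)\bigl[N-(K-2)M\bigr],
\]
so the sign of $\tfrac{1}{2}-\alpha^{\star}$ is \emph{opposite} to that of $(N-M)[N-(K-2)M]$, not the same. As written, your claims (i) ``$\mathrm{sign}(\tfrac{1}{2}-\alpha^{\star})=\mathrm{sign}\bigl((N-M)[N-(K-2)M]\bigr)$'' and (ii) ``this is negative iff $K>\tfrac{N}{M}+2$'' together imply $\alpha^{\star}>\tfrac{1}{2}$ in that regime, which contradicts your own conclusion that the crossover lies in $(0,\tfrac{1}{2})$. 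With the correct sign, when $K>\tfrac{N}{M}+2$ the product $(N-M)[N-(K-2)M]$ is negative, hence $\tfrac{1}{2}-\alpha^{\star}>0$, and your stated split into \eqref{eq:innermaxweak2} and \eqref{eq:innermaxweak3} follows as you intended.
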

\begin{proof}
See Appendix \ref{sec:innermax-weak}.
\end{proof}
From the expressions in the previous section, it is easy to see that the maximum of the achievable GDOF from the HK-scheme and IA outperforms the achievable GDOF from treating interference as noise or ZF-receiving for all values of $M$, $N$, $K$ and $\alpha$. The following result follows from carefully comparing the achievable GDOF from the HK-scheme and IA in the weak, moderate, and strong interference cases. 
\begin{theorem} \label{thm:NbyMnotInteger}
Recall that $R \triangleq \left\lfloor \frac{N}{M} \right\rfloor$. When $K \geq \frac{N}{M} + 4$, the $K$-user symmetric MIMO GIC achieves the following per-user GDOF. 
\begin{enumerate}
\item When $R = 1$: 
\begin{enumerate}
\item The HK-scheme is active in the weak interference case and in the initial part of the moderate interference case, and achieves 
\begin{equation}
d(\alpha) \geq \lcb\begin{array}{l l}
    M(1-\alpha) + \frac{(N-M)\alpha}{K-1} &\mbox{for $ 0 \le  \alpha \le \frac{1}{2}$} \\ 	
    M(1-\alpha) + \frac{N\alpha - M(1-\alpha)}{K-1}  & \mbox{for $\frac{1}{2} < \alpha \le \frac{(K-1) - (R+1)}{\lb R+1\rb \lb \lb K-1\rb - \lb\tfrac{N}{M} + 1\rb \rb}.$}
 \end{array}\right. 
 \end{equation}
 \item IA is active in the later part of the moderate interference case and the initial part of the strong interference case, and achieves
\begin{equation}
d(\alpha) \geq 
    \frac{MR}{R+1}   \quad \mbox{for $\frac{(K-1) - (R+1)}{\lb R+1\rb \lb \lb K-1\rb - \lb\tfrac{N}{M} + 1\rb \rb} < \alpha \le \frac{MKR}{N(R+1)}.$}
\end{equation} 
\item The HK-scheme is active in the later part of the strong interference case, and achieves
\begin{equation}
d(\alpha) \geq \lcb\begin{array}{l l}
    \frac{N\alpha}{K} &\mbox{for $ \frac{MKR}{N(R+1)} <  \alpha \le \frac{MK}{N}$} \\ 	
    M & \mbox{for $ \alpha > \frac{MK}{N}.$}
 \end{array}\right. 
 \end{equation}
\end{enumerate}
\item When $R > 1$:
\begin{enumerate}
\item The HK-scheme is active in the initial part of the weak interference case, and achieves
\begin{equation}
d(\alpha) \geq M(1-\alpha) + \frac{(N-M)\alpha}{K-1}    \quad \mbox{for $0 \le \alpha \le \frac{(K-1)}{\lb R+1\rb \lb K - \tfrac{N}{M} \rb}.$}
\end{equation} 
\item IA is active in the later part of the weak interference case, in the moderate interference case, and in the initial part of the strong interference case, and achieves
\begin{equation}
d(\alpha) \geq 
    \frac{MR}{R+1}   \quad \mbox{for $\frac{(K-1)}{\lb R+1\rb \lb K - \tfrac{N}{M} \rb} < \alpha \le \frac{MKR}{N(R+1)}.$}
\end{equation} 
\item The HK-scheme is active for the later part of the strong interference case, and achieves
\begin{equation}
d(\alpha) \geq \lcb\begin{array}{l l}
    \frac{N\alpha}{K} &\mbox{for $ \frac{MKR}{N(R+1)} <  \alpha \le \frac{MK}{N}$} \\ 	
    M & \mbox{for $ \alpha > \frac{MK}{N}.$}
 \end{array}\right. 
\end{equation}
\end{enumerate}
\end{enumerate}
\end{theorem}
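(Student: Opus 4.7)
My plan is to establish this inner bound by taking the pointwise maximum of the four achievable schemes (the HK-scheme from Theorems \ref{th-highint1}, \ref{th:moderateint1} and \ref{th:lowinter1}; IA from \eqref{eq:inneralign1}; ZF-receiving from \eqref{eq:innerzf1}; and treating interference as noise from Theorem \ref{th:theorem1_intnoise}) and, for each value of $\alpha$, identifying which scheme attains it. The first step is to eliminate the two weaker schemes. Treating interference as noise gives $M(1-\alpha)$ (since $K > N/M+1$), which is uniformly dominated by the HK expression $M(1-\alpha) + \tfrac{(N-M)\alpha}{K-1}$ in the weak/moderate regimes and by $N\alpha/K$ in the strong regime. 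ZF-receiving gives the constant $N/K$ (using $K > N/M$, so $N/K < M$); a direct comparison shows $N/K \leq \max\{d_{\text{HK}}(\alpha),\, d_{\text{IA}}\}$ at every $\alpha$, using $K \geq N/M$ in the weak/moderate regime and $N\alpha/K \geq N/K$ for $\alpha \geq 1$. Thus it suffices to compare HK with IA.

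Next, I would exploit the geometry of the two curves. The IA value $d_{\text{IA}} = MR/(R+1)$ is constant in $\alpha$, whereas $d_{\text{HK}}(\alpha)$ is continuous and piecewise linear, strictly decreasing on $[0,1]$ and non-decreasing on $[1,\infty)$, saturating at $M$. Consequently, the set on which HK dominates IA is the union of an initial interval containing $\alpha = 0$ and a final interval containing $\alpha = \infty$, with IA active in between. My strategy is to solve for the two crossover points explicitly, identify which linear piece of the HK formula is active at each, and then read off the claimed expressions from the active scheme on each sub-interval. The fact that the left crossover falls into a different piece of HK depending on whether $R=1$ or $R>1$ is precisely what produces the two cases of the theorem.

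For $R > 1$, the weak-interference HK bound, rewritten as $M + \tfrac{N - MK}{K-1}\alpha$, drops below $MR/(R+1)$ for some $\alpha \leq 1/2$, so the left crossover lies in the weak regime. Solving $M + \tfrac{N-MK}{K-1}\alpha = \tfrac{MR}{R+1}$ gives $\alpha_1 = \tfrac{K-1}{(R+1)(K - N/M)}$, matching the stated boundary; the inequality $\alpha_1 \leq 1/2$ follows from $K \geq N/M + 4$ together with $R \geq 2$, via $2R - 1 \geq N/M$. For $R = 1$, IA equals $M/2$, and HK evaluated at $\alpha = 1/2$ equals $M/2 + \tfrac{N-M}{2(K-1)} > M/2$, so the weak-interference piece never crosses IA. The crossover is therefore pushed into the moderate-low sub-regime $\alpha \in [\tfrac{1}{2},\, \tfrac{KM}{N+KM}]$, on which HK takes the form $M(1-\alpha)\tfrac{K-2}{K-1} + \tfrac{N\alpha}{K-1}$; solving gives $\alpha_1 = \tfrac{K-3}{2(K - 2 - N/M)}$, which agrees with the boundary stated in the theorem at $R=1$. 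In both cases the right crossover lies on the strong-interference piece $N\alpha/K$, yielding $\alpha_2 = \tfrac{MKR}{N(R+1)}$; beyond this point HK dominates IA and finally saturates at $M$ at $\alpha = MK/N$.

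The main obstacle is the bookkeeping: I must verify that each crossover actually lies in the sub-interval of the piecewise HK formula from which it was derived, and that the single IA interval $[\alpha_1, \alpha_2]$ straddles the weak/moderate/strong boundaries consistently with the stated expressions. These consistency checks reduce to a handful of inequalities in $K$, $M$, $N$ that all follow from $K \geq N/M + 4$ (for instance $4MR \geq N$, implying both $d_{\text{IA}} \geq N/K$ and $\alpha_2 > 1$; and $3N \leq KM$ in the $R=1$ sub-case, needed to place $\alpha_1$ inside the moderate-low sub-regime). Once these are in hand, the piecewise expressions for $d(\alpha)$ are immediate from the active scheme on each sub-interval.
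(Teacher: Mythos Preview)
Your proposal is correct and follows essentially the same route as the paper: first discard ZF-receiving and treating interference as noise as dominated, then compare the HK-scheme against IA and locate the crossover points, which split into the cases $R=1$ and $R>1$ according to whether the left crossover lands in the moderate or the weak regime.

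The one genuine streamlining in your argument is the global monotonicity observation: you note that $d_{\text{HK}}(\alpha)$ is continuous, strictly decreasing on $[0,1]$, and non-decreasing on $[1,\infty)$, so its superlevel set relative to the constant $d_{\text{IA}}$ is automatically the union of an initial and a final interval. The paper instead works regime by regime (weak, moderate-low, moderate-high, strong) and in each sub-interval separately verifies which scheme dominates, deriving and checking a condition of the form $K \geq \tfrac{N}{M} + c$ at each step. Your approach collapses several of these checks into a single structural fact, at the cost of having to verify up front that the piecewise HK formula is indeed monotone across the piece boundaries (which you do need, and which holds precisely because $K > \tfrac{N}{M}+2$). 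The trade-off is modest: the paper's case analysis is more mechanical but self-contained per regime, while your argument is shorter once the monotonicity is established.
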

\begin{proof}
See Appendix \ref{sec:new result}.
\end{proof}
The above theorem is interesting because it exactly characterizes the regimes of $\alpha$ where the HK-scheme and IA are active for $K \geq \frac{N}{M} + 4$, even when $\frac{N}{M}$ is not an integer. 
It can be used, for example, to study the effect of varying the number of transmit and receive antennas on the achievable GDOF, or the scaling of the achievable GDOF as the number of transmit and receive antennas per user is increased while keeping their ratio fixed.
\subsection{Tightness of the Outer Bounds}\label{sec:tightouter}
\begin{corollary}\label{tightcor}
The outer bound is tight when $M=N$ for any $K$, in the weak interference regime $(0 \leq \alpha \leq \frac{1}{2})$, and when $\frac{N}{M} < K \leq \frac{N}{M}+1$, for all values of $\alpha$.
\end{corollary}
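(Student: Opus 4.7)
The plan is to verify, case by case, that the outer bound from Theorem~\ref{th:compareouter} coincides with an achievable GDOF from the schemes developed in Section~\ref{sec:innerbound}, in each of the regimes listed in the corollary.

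For the first case ($M = N$, $0 \leq \alpha \leq \tfrac{1}{2}$), I would first note that $MN \geq N^{2} - M^{2}$ is automatic, so the active branch of \eqref{eq:minouter1} reduces to $d(\alpha) \leq M(1-\alpha) + (N-M)\alpha = M(1-\alpha)$. On the inner-bound side, treating interference as noise in Theorem~\ref{th:theorem1_intnoise} delivers exactly $d(\alpha) \geq M(1-\alpha)$: equation \eqref{eq:teatintnoiseth1b} when $K > N/M + 1 = 2$, and equation \eqref{eq:teatintnoiseth1a} when $K = 2$ (which under $M=N$ yields $M + (N - 2M)\alpha = M(1-\alpha)$). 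Equality follows.

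For the second case ($\tfrac{N}{M} < K \leq \tfrac{N}{M} + 1$), I would address the three interference regimes in turn. In the weak regime, the outer bound \eqref{eq:minouter4} is $d(\alpha) \leq M(1-\alpha) + \tfrac{1}{K-1}(N-M)\alpha$, which matches the HK-scheme achievability of Theorem~\ref{th:lowinter1} term for term. In the strong regime, rewriting the first piece of \eqref{eq:minouter6} as $\tfrac{1}{K}[(K-1)M\alpha + N - (K-1)M]$ produces exactly the first argument of the minimum in Theorem~\ref{th-highint1}(1); solving this expression equal to $M$ gives precisely the threshold $\alpha = \tfrac{2KM - (M+N)}{(K-1)M}$, so the second piece of \eqref{eq:minouter6} matches the saturated branch $M$ of the inner bound.

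The moderate regime is the delicate one and will be the main obstacle, as both the outer bound \eqref{eq:minouter5} and the inner bound \eqref{eq:moderateth1a} are piecewise and it must be checked that their breakpoints coincide. I will first establish the algebraic identity
\[
M(1-\alpha) + \frac{M[(2K-1)\alpha - K] + N(1-\alpha)}{K-1} \;=\; M\alpha + \frac{(N-M)(1-\alpha)}{K-1},
\]
which matches one branch of the inner bound with the $\tfrac{1}{2} \leq \alpha \leq \tfrac{K}{2K-1}$ branch of the outer bound. Next I will compute the crossover of the two arguments of the minimum in \eqref{eq:moderateth1a} by solving
\[
\frac{N\alpha}{K} \;=\; \frac{M[(2K-1)\alpha - K] + N(1-\alpha)}{K-1}.
\]
Clearing denominators and collecting terms yields $(2K-1)\alpha\,(N - KM)/K = N - KM$; using $KM > N$ (which is guaranteed by $K > N/M$), dividing by the negative quantity $N - KM$ flips the inequality and gives the threshold $\alpha = \tfrac{K}{2K-1}$. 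This is exactly the breakpoint in \eqref{eq:minouter5}. For $\alpha > \tfrac{K}{2K-1}$ the minimum in \eqref{eq:moderateth1a} selects $\tfrac{N\alpha}{K}$, producing the second branch of \eqref{eq:minouter5}, while for $\alpha \leq \tfrac{K}{2K-1}$ the minimum selects the other argument, which by the identity above matches the first branch. Collecting the three regimes gives equality between the inner and outer bounds throughout $\alpha \geq 0$, completing the proof. The remaining work is mechanical substitution once the identity and the crossover calculation are in hand.
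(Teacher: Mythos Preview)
Your proposal is correct and follows essentially the same approach as the paper: matching the consolidated outer bound of Theorem~\ref{th:compareouter} against the HK-scheme achievability results (Theorems~\ref{th-highint1}, \ref{th:moderateint1}, \ref{th:lowinter1}) regime by regime, with treating interference as noise handling the $M=N$ case. The paper's proof is slightly more streamlined in that it invokes Lemma~\ref{lemma-outer2} directly for the $M=N$ outer bound (rather than going through Theorem~\ref{th:compareouter}, whose Case~1 hypotheses technically require $K>2$ when $M=N$; your second case covers $K=2$ anyway, so no gap arises), and it simply states the piecewise form \eqref{eq:thirdrev11} in the moderate regime rather than deriving the crossover at $\alpha = K/(2K-1)$ as you do explicitly.
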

\begin{proof}
See Appendix \ref{sec:tightoutercorl}.
\end{proof}
\section{Discussion on The Bounds}\label{sec:discussion}
%
\subsection{Comparison with Existing Results}\label{sec:compexisting}
Some observations on how the inner and outer bounds on the GDOF derived above stand in relation to existing work are as follows:
\begin{enumerate}
\item When $M = 1$ and $K=N+1$, the Theorem \ref{lemma-outer3} and the HK-scheme in Section \ref{sec:innerHKscheme} reduce to the corresponding SIMO GDOF results in \cite{gou3}. 
\item When $K=2$, the inner and outer bounds derived here reduce to the corresponding two-user symmetric GDOF result in \cite{tarokh1}.
\item When $M=N=1$ and $K=2$, the inner and outer bounds derived here reduce to the corresponding GDOF results derived in \cite{etkin1}.
\item When $M=N=1$, the inner bounds derived here match with the result in \cite{jafar2} only in the weak interference regime. In \cite{jafar2} assumes the constant IC model and uses multilevel coding with nested lattice structure to achieve a higher GDOF. Also, the outer bound derived here reduces to the $K$-user symmetric SISO GIC GDOF result in \cite{jafar2}.
\item When $\alpha = 1$, the cooperative outer bound of Lemma~\ref{lemma-outer1} matches with the DOF outer bound in \cite{ghasemi1} for many cases of $K$, $M$ and $N$ (e.g., $K=3, M=2, N=5$). Theorem~\ref{lemma-outer1} uses genie-aided message sharing in addition to cooperation, to handle the $\alpha\neq 1$ cases. The bound in \cite{ghasemi1} only requires cooperation, due to which it is lower for some values of $M, N$ and $K$. Hence, when $\alpha = 1$, the minimum of the outer bound derived here and the one in \cite{ghasemi1} is plotted in the graphs presented in the next subsection. The outer bound derived here does not match with the DOF-optimal outer bound in \cite{wang2} for the $K = 3$ and $\frac{N}{M}+1 < K \leq \frac{M+N}{\text{gcd}(M,N)}$ case. The outer bound in \cite{wang2} uses the concept of subspace alignment chains to identify the extra dimension to be provided by  a genie to a receiver, which does not easily generalize to arbitrary $K$, $M$, $N$ and~$\alpha$.
\item When $K=2$, the outer bound in Lemma~\ref{lemma-outer1} reduces to the DOF outer bound on MIMO $Z$-GIC in \cite{zhengdao1}. See Appendix \ref{sec:compz} for details.
\end{enumerate} 
In Fig.~\ref{fig:compare1}, the outer bound derived in this work is compared with some of the existing results mentioned above.
\subsection{Numerical Examples}
Now, some numerical examples are considered to get better insight into the bounds for various values of $K, M, N$, and $\alpha$.

In Fig. \ref{fig:outercomp1}, the outer bounds on the per user GDOF in Lemmas \ref{lemma-outer1}, \ref{lemma-outer2} and \ref{lemma-outer3} are contrasted as a function $\alpha$, for $(M,N) = (2,2)$ and $(2,4)$. 
When $K=3$ and $(M,N) = (2,2)$, the outer bound in Lemma~\ref{lemma-outer2} is active in the weak interference regime and the initial part of the moderate interference regime. The outer bound in Lemma~\ref{lemma-outer1} is not tight in this regime, as a result of the genie giving too much information to the receiver. As the interference level increases, it is necessary to provide the unintended message completely as in Theorem~\ref{theorem-outer1} to obtain a tractable outer bound; and hence Lemma~\ref{lemma-outer1} is active in the later part of the moderate interference regime and the high interference regime.
As the number of receive dimensions increases $(N=4)$, the outer bound in Lemma~\ref{lemma-outer2} is found to be loose. Hence, another outer bound is derived, where a carefully chosen part of the interference is provided as side information to the receiver, as in Theorem~\ref{theorem-outer3}. The corresponding GDOF outer bound in Lemma~\ref{lemma-outer3} is tight in the weak interference regime $(0 \leq \alpha \leq \frac{1}{2})$ and in the initial part of the moderate interference regime $(\frac{1}{2} \leq \alpha \leq \frac{3}{5})$. For $\alpha > \frac{3}{5}$, the outer bound in Lemma~\ref{lemma-outer1} is active, as in the previous case.

In Fig. \ref{fig:P1_K3M2N2}, the per user GDOF is plotted versus $\alpha$ for $K=3$ and $M=N=2$. The achievable GDOF by IA (curve labeled as \texttt{IA}), HK-scheme (curve labeled as \texttt{HK-scheme}), treating interference as noise (curve labeled as \texttt{Intf. as noise}) and ZF-receiving (curve labeled as \texttt{ZF-receiving}) are plotted along with the outer bound (curve labeled as \texttt{Outer bound}). In the low interference regime, treating interference as noise coincides with the outer bound. In this case, treating interference as noise performs as well as the HK-scheme and the outer bound in Lemma \ref{lemma-outer2} is active. Also, IA and ZF-receiving are suboptimal in this regime. At $\alpha = {1}/{2}$, IA, HK and treating interference as noise all coincide. In the moderate interference regime, the flat segment is  due to IA. In the initial part of the moderate interference regime, the HK-scheme and IA coincide but in the later part, IA performs better than HK-scheme. IA performs better than the other schemes and is optimal at $\alpha = {1}/{2}$ and $1$. In terms of outer bounds, initially, the side-information based bound of Lemma \ref{lemma-outer2} is active, and as $\alpha$ increases, the cooperative bound of Lemma \ref{lemma-outer1} is active. 
In the high interference regime, IA initially performs the best, and as $\alpha$ increases, the HK-scheme performs the best, and finally achieves the interference free GDOF. There exists gap between the inner and outer bounds in the moderate and high interference regimes.

In Fig. \ref{fig:P_KMN}, the achievable per-user GDOF is plotted against $\alpha$ for the $K=3$ user symmetric MIMO GIC with various antenna configurations and compared with existing results. The inner bound derived in this paper is compared with the result in \cite{jafar2} for the symmetric SISO GIC case and with the result in \cite{gou3} for the symmetric SIMO GIC with $K=N+1$. Since the achievable GDOF in \cite{jafar2}  is discontinuous at $\alpha=1$, it is represented by the filled circle in the plot. Note that the scheme in \cite{jafar2} assumes that the channel remains constant over time. Hence, the performance of IA is not included in the comparison.  Further, the achievable GDOF is plotted for the $2 \times 3, 2 \times 4, 2 \times 5$  and $2 \times 6$ antenna configurations.  Also, the outer bound is plotted for these antenna configurations to verify the optimality of the inner bound. 

The figure illustrates the benefits of having additional antennas at the transmitter and receiver in improving the achievable GDOF. For the symmetric SISO GIC, the proposed inner bound matches with the result in \cite{jafar2} in the weak interference case. 
There exists a gap between the two schemes in the moderate interference case and in the initial part of the strong interference case, as noted in the previous subsection. 
For the SIMO case, the achievable GDOF of the proposed scheme matches with that of the scheme in \cite{gou3} and is also GDOF optimal. As receive antennas are added, in the strong interference regime, the HK-scheme achieves the interference-free GDOF at a smaller value of $\alpha$. In the $2\times 6$ system, as $N=KM$, ZF-receiving achieves the interference free GDOF for all values of $\alpha$. Finally, note that the inner bound is GDOF optimal for the $2 \times 4$, $2 \times 5$ and $2 \times 6$ symmetric MIMO GIC cases. In Fig. \ref{fig:P1_K4MN}, the achievable per-user GDOF is compared with the outer bound for many more cases for the $K=4$ user symmetric MIMO GIC with various antenna configurations. 

In Figs. \ref{fig:K3MN7} and  \ref{fig:K3MN10}, the per user achievable GDOF performance is compared for different antenna configurations with a total of $7$ and $10$ antennas per user pair, respectively. The figures illustrate the effect of different combinations of the number of antennas at the transmitter and receiver on the achievable GDOF. When the interference is either low or very high, an equal or nearly equal (in Fig.~\ref{fig:K3MN7}) distribution of antennas achieves the best GDOF. The behavior for intermediate values of $\alpha$ depends on the specific values of $M$, $N$, $K$ and $\alpha$. 
\subsection{Further Remarks}
From the expressions obtained for the bound, the following useful observations can be made. In particular, these insights are not be obtainable from the existing results for the two user symmetric MIMO GIC or the $K$-user symmetric SIMO GIC. 
\begin{enumerate}
\item The outer bounds on the sum rate in Theorems \ref{theorem-outer2} and \ref{theorem-outer3} hold for any number of transmit and receive antennas. Although Theorem~\ref{theorem-outer1} was presented for $M$ antennas at each transmitter and $N$ antennas at each receiver, it is straightforward to extend it to the case of arbitrary number of antennas at each transmitter and receiver. These results are new as there are no existing outer bounds on the sum rate of the $K$-user MIMO GIC for $K\geq 3$.
 \item No single outer bound on the GDOF is universally the tightest among the three. Theorem~\ref{th:compareouter} characterizes the performance of the outer bounds as a function of $K$, $M$, $N$ and $\alpha$ when $K\geq M+N$ and $\frac{N}{M} < K \leq \frac{N}{M}+1$, and when $\frac{N}{M}+1 < K < N+M$ for integer-valued~$\frac{N}{M}$.
 \item Treating interference as noise was known to be GDOF optimal in the weak interference regime in the two user SISO case \cite{etkin1}, two user symmetric MIMO case \cite{tarokh1} and the $K$-user SISO real-valued constant channel case \cite{jafar2}. The above results show that it is GDOF optimal in the weak interference regime only when $M=N$. When $N>M$, the HK-scheme performs better. Moreover,  the maximum of the HK-scheme and IA outperforms treating interference as noise and ZF-receiving for all values of $M$, $N$, $\alpha$ and $K$. 
\item When $\frac{N}{M} < K \leq \frac{N}{M} + 1$, Theorem~\ref{th:compareouter} establishes that the achievable scheme is GDOF optimal for all $\alpha$. The proof can be found in Corollary \ref{tightcor}. Moreover, the HK-scheme does not assume a time-varying channel, and hence it is optimal even for the constant channel case. 
 \item When $K>3$ and $M=N$, IA outperforms the HK-scheme for $\frac{1}{2}\leq \alpha \leq 1$. Also, IA is  GDOF optimal at $\alpha = \frac{1}{2}$ when $M=N$. 
 \item When $K > \frac{N}{M}+1$, depending on the value of $\alpha$, one or the other of the HK-scheme and IA performs the best. When $K \ge \frac{N}{M} + 4$,  Theorem~\ref{thm:NbyMnotInteger}  characterizes the interplay between the two schemes and determines the range of $\alpha$ for which either scheme is active.  
\item When $\frac{N}{M} < K \leq \frac{N}{M} + 1$, ZF-receiving coincides with the HK-scheme only at $\alpha=1$ when $K > 2$. 
In contrast, when $K=2$, ZF-receiving is optimal for $\alpha = \frac{1}{2}$ and $1$ (see \cite{tarokh1}). 
\end{enumerate}
In general, it is found that IA performs well over a fairly wide range of parameters around $\alpha=1$, and it offers a performance that does not depend on the interference level. Hence, it may be a good approach for managing the interference, especially when the number of receive antennas is comparable to the number of transmit antennas. As the number of receive dimensions increases, the HK-scheme becomes a better choice for interference management.
\section{Conclusion}\label{sec:conclusion}
This work derived inner and outer bounds on the GDOF of the $K$-user symmetric MIMO interference channel as a function of $\alpha = \log \text{INR} / \log \text{SNR}$. The outer bound was based on a combination of three schemes, one of which was derived using the notion of cooperation, and the two other outer bounds were based on providing partial side information at the receivers. The inner bound was derived using a combination of ZF-receiving, treating interference as noise, interference alignment, and the Han-Kobayashi scheme. Several interesting insights were obtained from the derived bounds. For example, it was found that when $M=N$, treating interference as noise performs as well as the HK scheme and outperforms both IA and the ZF bound. However, when  $N > M$, treating interference as noise is always suboptimal. For $K > \frac{N}{M}+1$, a combination of HK and IA performs the best in the moderate interference regime. Finally, when $\frac{N}{M} < K \leq \frac{N}{M}+1$, HK scheme is GDOF optimal for all values of $\alpha$. In contrast to two user IC, ZF-receiving is found to be GDOF optimal at $\alpha=1$ when $\frac{N}{M} < K \leq \frac{N}{M}+1$. The outer bound was shown to be tight in the weak interference case ($0 \le \alpha \le \frac{1}{2}$) when $M=N$ for any $K$, and for all values of $\alpha$ when $\frac{N}{M} < K \le \frac{N}{M} + 1$.
\appendix
\subsection{Proof of Theorem \ref{theorem-outer1}}\label{sec:appendouter1-th}
Given the stated assumptions on user cooperation and the genie-provided side information, the system model becomes:
\begin{eqnarray}
& & \Yonebar = \Honebar\Xonebar + \Htwobar\Xtwobar + \Zonebar, \nonumber \\
& & \Ytwobar = \Hthreebar\Xtwobar + \Ztwobar, \label{eq:coopouter1} 
\end{eqnarray}
\begin{eqnarray}
& & \text{where } \Yonebar \triangleq \lsqb \begin{array}{lll}
                                  \mathbf{y}_{1}^{T}, \cdots, \mathbf{y}_{L_{1}}^{T}
                                    \end{array} \rsqb^{T}, \:
\Ytwobar \triangleq \lsqb \begin{array}{lll}
                                  \mathbf{y}_{L_{1} + 1}^{T}, \cdots, \mathbf{y}_{L}^{T}
                                    \end{array} \rsqb^{T}, \:
\Xonebar \triangleq \lsqb \begin{array}{lll}
                                  \mathbf{x}_{1}^{T}, \cdots, \mathbf{x}_{L_{1}}^{T}
                                    \end{array} \rsqb^{T}, \: \nonumber \\
& & \Xtwobar \triangleq \lsqb \begin{array}{lll}
                                  \mathbf{x}_{L_{1}+1}^{T}, \cdots, \mathbf{x}_{L}^{T}
                                    \end{array} \rsqb^{T},
 \Zonebar \triangleq \lsqb \begin{array}{lll}
                                  \mathbf{z}_{1}^{T}, \cdots, \mathbf{z}_{L_{1}}^{T}
                                    \end{array} \rsqb^{T} \text{ and } \:
\Ztwobar \triangleq \lsqb \begin{array}{lll}
                                  \mathbf{z}_{L_{1}+1}^{T}, \cdots, \mathbf{z}_{L}^{T}
                                    \end{array} \rsqb^{T}. \nonumber
\end{eqnarray}
Also, $\Hbar_{ij}$ are stacked channel matrices, as defined in the statement of the theorem. The above system model is equivalent to a two-user MIMO $Z$-interference channel with each transmitter having $L_{1}M$ and $L_{2}M$ antennas and each receiver having $L_{1}N$ and $L_{2}N$ antennas. The outer bound derived for this modified system is an outer bound for the $K$-user MIMO GIC. By using Fano's inequality, the sum rate of the modified system is upper bounded as given below:
\begin{eqnarray}
& n\mysum R_{i} -n\epsilon_{n} & \leq I\lb \Xonebar^{n};\Yonebar^{n}\rb + I \lb \Xtwobar^{n};\Ytwobar^{n} \rb, \nonumber \\
& & \stackrel{(a)}{\leq} I\lb \Xonebar^{n};\Yonebar^{n}\rb + I \lb \Xtwobar^{n};\Ytwobar^{n},\Sbar^{n} \rb, \text{ where } \Sbar  \triangleq \Htwobar\Xtwobar + \Zonebar, \nonumber \\ 
& & = h\lb \Yonebar^{n} \rb - h \lb \Yonebar^{n} | \Xonebar^{n}\rb + h \lb \Sbar^{n} \rb - h \lb \Sbar^{n}|\Xtwobar^{n} \rb + h\lb\Ytwobar^{n}|\Sbar^{n}\rb - h\lb\Ytwobar^{n}|\Sbar^{n}, \Xtwobar^{n}\rb, \nonumber \\
& & = h \lb\Yonebar^{n}\rb - h\lb\Sbar^{n}\rb + h\lb\Sbar^{n}\rb - h \lb \Zonebar^{n}\rb + h \lb \Ytwobar^{n}|\Sbar^{n}\rb - h\lb \Ztwobar^{n}\rb, \nonumber \\ 
& & \stackrel{(b)}{\leq} n h\lb \Yonebar^{*} \rb - n h \lb \Zonebar \rb + n h \lb \Ytwobar^{*} | \Sbar^{*}\rb - n h \lb \Ztwobar\rb, \nonumber \\
& \text{or }\mysum R_{i}  & \leq h\lb \Yonebar^{*} \rb -  h \lb \Zonebar \rb +  h \lb \Ytwobar^{*} | \Sbar^{*}\rb -  h \lb \Ztwobar\rb, \label{eq:coopouter2}
\end{eqnarray}
where \textit{(a)} is due to the genie giving side information to Receiver $2$ and \textit{(b)} follows from the Lemma  \ref{lemmaused2}. In the above equation, the superscript $*$ indicates that the inputs are i.i.d. Gaussian i.e., $\mathbf{x}_{i}^{*}\sim CN(\mathbf{0},\mathbf{\overline{P}}_{i})$ and the quantities $\Sbar^{*}, \Yonebar^{*} \text{ and } \Ytwobar^{*}$ are the signals obtained due to Gaussian inputs. Each term in (\ref{eq:coopouter2}) is simplified as follows:
\begin{eqnarray}
& & h\lb \Yonebar^{*} \rb = \log\labs \pi e \lsqb \Iden_{L_{1}N} +  \Honebar\Ponebar\Honebar^{H} + \Htwobar\Ptwobar\Htwobar^{H}\rsqb\rabs, \label{eq:coopouter3} 
\end{eqnarray}
\begin{eqnarray}
& & h \lb \Ytwobar^{*} | \Sbar^{*}\rb = \log\labs \pi e \Sigma_{\Ytwobar^{*}|\Sbar^{*}}\rabs, \label{eq:coopouter4} 
\end{eqnarray}
\text{where }
\begin{eqnarray}
 & \Sigma_{\Ytwobar^{*}|\Sbar^{*}} & = \expec\lsqb \Ytwobar^{*}\Ytwobar^{*H} \rsqb - \expec\lsqb \Ytwobar^{*}\Sbar^{*H}\rsqb \expec \lsqb \Sbar^{*}\Sbar^{*H}\rsqb^{-1}\expec\lsqb \Sbar^{*}\Ytwobar^{*H}\rsqb, \nonumber \\
& & = \Iden_{L_2N} + \Hthreebar\Ptwobar\Hthreebar^{H} - \Hthreebar\Ptwobar\Htwobar^{H} \lcb \Iden_{L_2M} + \Htwobar\Ptwobar\Htwobar^{H}\rcb^{-1}\Htwobar\Ptwobar\Hthreebar^{H}, \nonumber \\
& & = \Iden_{L_2N} + \Hthreebar\Ptwobar^{1/2}\lsqb \Iden_{L_2M} - \Ptwobar^{1/2}\Htwobar^{H}\lcb \Iden_{L_2M} + 
 \Htwobar\Ptwobar^{1/2}\Ptwobar^{1/2}\Htwobar^{H}\rcb^{-1}\Htwobar\Ptwobar^{1/2}\rsqb\Ptwobar^{1/2}\Hthreebar^{H}, \nonumber \\
& & = \Iden_{L_2N} + \Hthreebar\Ptwobar^{1/2} \lcb \Iden_{L_2M} + \Ptwobar^{1/2}\Htwobar^{H}\Htwobar\Ptwobar^{1/2} \rcb^{-1}\Ptwobar^{1/2}\Hthreebar^{H}.  \label{eq:coopouter5}
\end{eqnarray}
In the above, (\ref{eq:coopouter5}) is obtained using the Woodbury matrix identity \cite{seber1}.

The conditional differential entropy in (\ref{eq:coopouter4}) thus reduces to:
\begin{equation}
h \lb \Ytwobar^{*} | \Sbar^{*}\rb = \log\labs \pi e \lsqb \Iden_{L_2N} + \Hthreebar\Ptwobar^{1/2} \lcb \Iden_{L_2M} + \Ptwobar^{1/2}\Htwobar^{H}\Htwobar\Ptwobar^{1/2} \rcb^{-1}\Ptwobar^{1/2}\Hthreebar^{H}\rsqb \rabs. \label{eq:coopouter6}
\end{equation}
From (\ref{eq:coopouter3}) and (\ref{eq:coopouter6}), the sum rate bound in (\ref{eq:coopouter2}) becomes:
\begin{eqnarray}
 & \mysum R_{i} & \leq \log \labs \Iden_{L_{1}N} +  \Honebar\Ponebar\Honebar^{H} + \Htwobar\Ptwobar\Htwobar^{H}\rabs + \nonumber \\
& & \qquad \log \labs \Iden_{L_2N} + \Hthreebar\Ptwobar^{1/2} \lcb \Iden_{L_2M} + \Ptwobar^{1/2}\Htwobar^{H}\Htwobar\Ptwobar^{1/2} \rcb^{-1}\Ptwobar^{1/2}\Hthreebar^{H}\rabs, \label{eq:coopouter7}
\end{eqnarray}
which concludes the proof.
\subsection{Proof of Lemma \ref{lemma-outer1}}\label{sec:appendouter1-lm}
In the symmetric case, with a slight abuse of notation, the system model in (\ref{eq:coopouter1}) reduces to:
\begin{eqnarray}
& & \Yonebar = \sqrt{\rho}\Honebar\Xonebar + \sqrt{\rho^{\alpha}}\Htwobar\Xtwobar + \Zonebar, \nonumber \\
& & \Ytwobar = \sqrt{\rho}\Hthreebar\Xtwobar + \Ztwobar. \label{eq:lmcoopouter1}
\end{eqnarray}
Under the symmetric assumption, the sum rate in (\ref{eq:outerth1}) in Theorem \ref{theorem-outer1} is bounded as follows:
\begin{eqnarray}
 & \mysum R_{i}  & \leq \log \labs \Iden_{L_{1}N} +  \rho\Honebar\Ponebar\Honebar^{H} + \rho^{\alpha} \Htwobar\Ptwobar\Htwobar^{H}\rabs + \nonumber \\
& & \qquad \log \labs \Iden_{L_2N} + \rho\Hthreebar\Ptwobar^{1/2} \lcb \Iden_{L_2M} + \rho^{\alpha}\Ptwobar^{1/2}\Htwobar^{H}\Htwobar\Ptwobar^{1/2} \rcb^{-1}\Ptwobar^{1/2}\Hthreebar^{H}\rabs, \nonumber \\
& & \leq \log \labs \Iden_{L_{1}N} +  \rho\Honebar\Honebar^{H} + \rho^{\alpha} \Htwobar\Htwobar^{H}\rabs + \log \labs \Iden_{L_2N} + \rho\Hthreebar \lcb \Iden_{L_2M} + \rho^{\alpha}\Htwobar^{H}\Htwobar \rcb^{-1}\Hthreebar^{H}\rabs. \label{eq:lmcoopouter2}
\end{eqnarray}
Equation (\ref{eq:lmcoopouter2}) is obtained by using Lemma \ref{lemmaused3} and using the fact that $\log|\:.\:|$ is a monotonically increasing function on the cone of positive definite matrices.

Consider the following term in (\ref{eq:lmcoopouter2}):
\begin{eqnarray}
& & \Iden_{L_2N} + \rho\Hthreebar \lcb \Iden_{L_2M} + \rho^{\alpha}\Htwobar^{H}\Htwobar \rcb^{-1}\Hthreebar^{H} \nonumber \\
& & \stackrel{(a)}{=} \Iden_{L_2N} + \rho\Hthreebar \lcb \Iden_{L_2M} + \rho^{\alpha}\Uonetwobar\eigm_{12}\Uonetwobar^{H} \rcb^{-1}\Hthreebar^{H}, \nonumber \\
& & = \Iden_{L_2N} + \rho\Hthreebar\Uonetwobar \lcb \Iden_{L_2M} + \rho^{\alpha}\eigm_{12} \rcb^{-1}\Uonetwobar^{H}\Hthreebar^{H}, \nonumber \\
& & = \Iden_{L_2N} + \rho\Htildetwtwo \lcb \Iden_{L_2M} + \rho^{\alpha}\eigm_{12} \rcb^{-1}\Htildetwtwo^{H}, \quad \text{ where } \Htildetwtwo \triangleq \Hthreebar\Uonetwobar, \nonumber \\
& & \stackrel{(b)}{=} \Iden_{L_2N} + \rho\Htildetwtwo \lcb \Iden_{L_2M} + \rho^{\alpha}\lsqb\begin{array}{cc}
  \Sigma_{r} &\mathbf{0} \\
  \mathbf{0} &\mathbf{0}_{L_2M -r}
  \end{array}\rsqb \rcb^{-1}\Htildetwtwo^{H}, \qquad \text{where } r \triangleq \min\{L_2M, L_1N\}\nonumber \\ 
& & = \Iden_{L_2N} + \rho\Htildetwtwo \lsqb\begin{array}{cc}
  \lb \Iden_{r} + \rho^{\alpha}\Sigma_{r}\rb^{-1} &\mathbf{0} \\
  \mathbf{0} &\Iden_{L_2M -r}
  \end{array}\rsqb\Htildetwtwo^{H}, \nonumber \\
& & \stackrel{(c)}{=} \Iden_{L_2N} + \rho\lsqb \Htildetwtwo^{(a)} \quad \Htildetwtwo^{(b)} \rsqb \lsqb\begin{array}{cc}
  \lb \Iden_{r} + \rho^{\alpha}\Sigma_{r}\rb^{-1} &\mathbf{0} \\
  \mathbf{0} &\Iden_{L_2M -r}
  \end{array}\rsqb \lsqb \Htildetwtwo^{(a)} \quad \Htildetwtwo^{(b)} \rsqb^{H}, \nonumber \\
& & = \Iden_{L_2N} + \rho\Htildetwtwo^{(a)} \lb \Iden_{r} + \rho^{\alpha}\Sigma_{r}\rb^{-1}\Htildetwtwo^{(a)H} + \rho\Htildetwtwo^{(b)}\Iden_{L_2M -r}\Htildetwtwo^{(b)H}, \label{eq:lmcoopouter3}
\end{eqnarray}
where \textit{(a)} follows by taking EVD of $\Htwobar^{H}\Htwobar$, $\Uonetwobar \in \Complex^{L_2M \times L_2M}$, in \textit{(b)} $\Sigma_{r}$ contains non-zero singular values of $\Htwobar^{H}\Htwobar$ and $\mathbf{0}_{L_2M -r}$ is a zero matrix of dimension $(L_2M -r) \times (L_2M -r)$, in \textit{(c)} $\Htildetwtwo$ is partitioned into two sub-matrices $\Htildetwtwo^{(a)}$ and $\Htildetwtwo^{(b)}$ of dimensions $L_2N \times r$ and $L_2N \times (L_2M - r)$, respectively. 

Using (\ref{eq:lmcoopouter3}), and simplifying the outer bound in (\ref{eq:lmcoopouter2}) becomes
\begin{eqnarray}
& \mysum R_{i}  & \leq \log \labs \Iden_{L_{1}N} +  \rho\Honebar\Honebar^{H} + \rho^{\alpha} \Htwobar\Htwobar^{H}\rabs + \nonumber \\
& & \qquad \log \labs \Iden_{L_2N} + \rho\Htildetwtwo^{(a)} \lb \Iden_{r} + \rho^{\alpha}\Sigma_{r}\rb^{-1}\Htildetwtwo^{(a)H} + \rho\Htildetwtwo^{(b)}\Iden_{L_2M -r}\Htildetwtwo^{(b)H}\rabs \nonumber \\
& & = \log \labs \Iden_{L_{1}N} +  \rho\Honebar\Honebar^{H} + \rho^{\alpha} \Htwobar\Htwobar^{H}\rabs + \nonumber \\
& & \qquad \log \labs \Iden_{L_2N} + \rho^{1-\alpha}\Htildetwtwo^{(a)}\Sigma_{r}^{-1}\Htildetwtwo^{(a)H} + \rho\Htildetwtwo^{(b)}\Iden_{L_2M -r}\Htildetwtwo^{(b)H}\rabs + \oone. \label{eq:lmcoopouter4}
\end{eqnarray}
The above approximation holds at high SNR. The above equation can be simplified further, depending on the values of $M$, $N$ and $\alpha$.\\
\textbf{Case 1 $(M \leq N \:\text{and} \:0 \leq \alpha \leq 1)$}:\\
Using Lemma \ref{lemmaused1}, the outer bound in (\ref{eq:lmcoopouter4}) becomes:
\begin{eqnarray}
& & \mysum R_{i}  \leq r_{11}\log\rho + \min\lcb r_{12}, L_{1}N - r_{11}\rcb\alpha\log\rho + r_{22}^{(b)}\log\rho \nonumber \\
& & \qquad \qquad + \min\lcb r_{22}^{(a)}, L_2N - r_{22}^{(b)}\rcb(1-\alpha)\log\rho + \oone, \label{eq:lmcoopouter5}
\end{eqnarray}
where $r_{ij} \triangleq \text{rank}(\mathbf{\overline{H}}_{ij})$, $r_{22}^{(a)} \triangleq \text{rank}(\Htildetwtwo^{(a)})$ and $r_{22}^{(b)} \triangleq \text{rank}(\Htildetwtwo^{(b)})$. As the channel coefficients are drawn from a continuous distribution such as the Gaussian, the channel matrices are full rank with probability one. Hence, the outer bound in (\ref{eq:lmcoopouter5}) reduces to following form:
\begin{eqnarray}
 & \mysum R_{i} & \leq L_1M\log\rho + \min \lcb r, L_{1}N - L_1M\rcb\alpha\log\rho + (L_2M - r)\log\rho \nonumber \\
& & \qquad + \min\lcb \min\lcb L_2N, r\rcb, L_2N - (L_2M - r)\rcb(1-\alpha)\log\rho + \oone \nonumber \\
& & = L_1M\log\rho + \min \lcb r, L_{1}N - L_1M\rcb\alpha\log\rho + (L_2M - r)\log\rho \nonumber \\
& & \qquad + \min\lcb r, L_2N - (L_2M - r)\rcb(1-\alpha)\log\rho + \oone, \label{eq:lmcoopouter6}  \\
& & \text{where }  r \triangleq \min\{L_2M, L_1N\}. \nonumber
\end{eqnarray}
Hence, the sum GDOF is upper bounded as given below:
\begin{eqnarray}
d_{i_{1}} + \ldots + d_{i_{L}} \leq L_1M + \min \lcb r, L_{1}(N - M)\rcb\alpha + L_2M - r + \min\lcb r, L_2N - (L_2M - r)\rcb(1-\alpha). \label{eq:lmcoopouter7}
\end{eqnarray}
Note that $L$ users can be chosen among $K$ users in $\binom{K}{L}$ different ways, and each user appears in $\binom{K-1}{L-1}$ of these ways. By adding all inequalities like (\ref{eq:lmcoopouter7}), the sum GDOF is upper bounded as:
\begin{eqnarray}
& d_{i_{1}} + d_{i_{2}} + \ldots + d_{i_{K}}  & \leq \frac{K}{L}\lsqb L_1M + \min \lcb r, L_{1}(N - M)\rcb\alpha + (L_2M - r) + \right. \nonumber \\
 & & \left. \quad \min\lcb r, L_2N - (L_2M - r)\rcb(1-\alpha)\rsqb, \nonumber \\
& \text{or } d(\alpha) & \leq \frac{1}{L} \lsqb L_1M + \min \lcb r, L_{1}(N - M)\rcb\alpha + (L_2M - r)\right. \nonumber \\
& & \left. \quad + \min\lcb r, L_2N - (L_2M - r)\rcb(1-\alpha) \rsqb. \label{eq:lmcoopouter8}
\end{eqnarray}
Taking minimum of (\ref{eq:lmcoopouter8}) over all possible values of $L_{1}$ and $L_{2}$ results in Case 1 of Lemma  \ref{lemma-outer1}.\\
\textbf{Case 2 $(M \leq N \:\text{and}\: \alpha > 1)$}:\\
Hence, the outer bound in (\ref{eq:lmcoopouter4}) is simplified to following form using Lemma \ref{lemmaused1}:
\begin{eqnarray}
& \mysum R_{i} & \leq r\alpha\log\rho + \min\lcb L_1M, L_1N - r\rcb\log\rho + (L_2M - r)\log\rho + \oone. \label{eq:lmcoopouter9}
\end{eqnarray}
By following the same steps as in the previous case, the per user GDOF is upper bounded as given below:
\begin{eqnarray}
d(\alpha) \leq \frac{1}{L} \lsqb r\alpha + \min\lcb L_1M, L_1N - r\rcb + (L_2M - r)\rsqb. \label{eq:lmcoopouter10}
\end{eqnarray}
By taking minimum of (\ref{eq:lmcoopouter10}) over all possible values of $L_{1}$ and $L_{2}$ results in Case 2 of Lemma  \ref{lemma-outer1}.\\
\textbf{Case 3 $(M > N \:\text{and} \:0 \leq \alpha \leq 1)$}:\\
When $M > N$ and $0 \leq \alpha \leq 1$, the sum rate in (\ref{eq:lmcoopouter4}) reduces to following form by using Lemma  \ref{lemmaused1}:
\begin{eqnarray}
& \mysum R_{i} & \leq L_1N\log\rho +  \min \lcb L_2N, L_2M - r\rcb\log\rho + \nonumber \\
& & \quad\min\lcb \min\lcb L_2N,r \rcb, L_2N -\min \lcb L_2N, L_2M - r\rcb \rcb(1-\alpha)\log\rho + \oone\nonumber \\
& & = L_1N\log\rho + \min \lcb L_2N, L_2M - r\rcb\log\rho + \nonumber \\
& & \quad \min\lcb \min\lcb L_2N,r \rcb, L_2N -\min \lcb L_2N, L_2M - r\rcb \rcb(1-\alpha)\log\rho + \oone. \label{eq:lmcoopouter11}
\end{eqnarray}
Following the same steps as in Case 1, the per user GDOF is upper bounded as given below by using (\ref{eq:lmcoopouter11}):
\begin{eqnarray}
& d(\alpha) & \leq \frac{1}{L} \lsqb L_1N + \min \lcb L_2N, L_2M - r\rcb + \min\lcb \min\lcb L_2N,r \rcb, L_2N -\min \lcb L_2N, L_2M - r\rcb \rcb\right. \nonumber \\
& & \qquad \left. (1-\alpha)\rsqb. \label{eq:lmcoopouter12}
\end{eqnarray}
By taking minimum of (\ref{eq:lmcoopouter12}) over all possible values of $L_{1}$ and $L_{2}$ results in Case 3 of Lemma  \ref{lemma-outer1}.\\
\textbf{Case 4 $(M > N \:\text{and} \:\alpha \geq 1)$}:\\
Under this condition the outer bound in (\ref{eq:lmcoopouter4}) is simplified to following form by using Lemma \ref{lemmaused1}:
\begin{eqnarray}
& \mysum R_{i} & \leq r\alpha\log\rho + \min\lcb L_1N, L_1N - r\rcb\log\rho + \min\lcb L_2N, L_2M - r\rcb\log\rho + \oone \nonumber \\
& & =  r\alpha\log\rho + \lb L_1N - r\rb\log\rho + \min\lcb L_2N, L_2M - r\rcb\log\rho + \oone. \label{eq:lmcoopouter13}
\end{eqnarray}
Following the same steps as in case 1, the per user GDOF is upper bounded as:
\begin{eqnarray}
& d(\alpha) & \leq \frac{1}{L} \lsqb L_1N + r(\alpha - 1) + \min\lcb L_2N, L_2M - r\rcb\rsqb. \label{eq:lmcoopouter14}
\end{eqnarray}
Taking minimum of (\ref{eq:lmcoopouter14}) over all possible values of $L_{1}$ and $L_{2}$ results in Case 4 of Lemma \ref{lemma-outer1}. This completes the proof of Lemma \ref{lemma-outer1}.
\subsection{Proof of Theorem \ref{theorem-outer2}}\label{sec:appendouter2-th}
The signal received at receiver $i$ is:
\begin{equation}
\mathbf{y}_{i} = \Hii\mathbf{x}_{i} + \sumneq\Hij\mathbf{x}_{j} + \mathbf{z}_{j}. \label{eq:sideouter0}
\end{equation}
Define the following quantity
\begin{equation}
\mathbf{s}_{j,\mathcal{B}} \triangleq \displaystyle\sum_{i \in \mathcal{B}}\Hji\mathbf{x}_{i} + \mathbf{z}_{j}, \label{eq:sideouter0a}
\end{equation}
where $\mathcal{B} \subseteq \{1,2,\ldots,K\}$ is a subset of users.

The rate of first user is upper bounded as follows:
\begin{eqnarray}
& nR_{1} & \stackrel{(a)}{\leq} I(\mathbf{x}_{1}^{n};\mathbf{y}_{1}^{n})  + n\epsilon_{n}, \nonumber \\
& & \stackrel{(b)}{\leq} I(\mathbf{x}_{1}^{n};\mathbf{y}_{1}^{n},\mathbf{s}_{2,1}^{n}) + n\epsilon_{n}, \nonumber \\
& & \stackrel{(c)}{=} h(\mathbf{s}_{2,1}^{n}) - h(\mathbf{z}_{2}^{n}) + h(\mathbf{y}_{1}^{n}|\mathbf{s}_{2,1}^{n}) - h(\mathbf{y}_{1}^{n}|\mathbf{s}_{2,1}^{n},\mathbf{x}_{1}^{n}) + n\epsilon_{n},\nonumber \\
& & \stackrel{(d)}{\leq} h(\mathbf{s}_{2,1}^{n}) - h(\mathbf{z}_{2}^{n}) + h(\mathbf{y}_{1}^{n}|\mathbf{s}_{2,1}^{n}) - h(\mathbf{s}_{1,2}^{n}) + n\epsilon_{n}, \label{eq:sideouter9}
\end{eqnarray}
where \textit{(a)} follows due to Fano's inequality; \textit{(b)} is due to the genie giving side information to Receiver 1; \textit{(c)} follows from chain rule of mutual information and \textit{(d)} results due to the fact that the differential entropy can not increase by conditioning and the last differential entropy term is conditioned on $\mathbf{x}_{i},i = 1,\ldots, K$ and ~$i \neq 2 $.

Similarly, the rate of the $K$th user is upper bounded as follows:
\begin{eqnarray}
& nR_{K} & \leq I(\mathbf{x}_{K}^{n};\mathbf{y}_{K}^{n}) + n\epsilon_{n}, \nonumber \\
& & \leq I(\mathbf{x}_{K}^{n};\mathbf{y}_{K}^{n}, \mathbf{s}_{K-1,K}^{n}) + n\epsilon_{n}, \nonumber \\
& & \leq h(\mathbf{s}_{K-1,K}^{n})- h(\mathbf{z}_{K-1}^{n}) + h(\mathbf{y}_{K}^{n}|\mathbf{s}_{K-1,K}^{n}) - h(\mathbf{s}_{K,K-1}^{n}) + n\epsilon_{n}. \label{eq:sideouter10}
\end{eqnarray}
The rate of users $i = 2, 3, \ldots, K-1$ are upper bounded as follows:
\begin{eqnarray}
& nR_{i} & \leq \mutuali + n\epsilon_{n}, \nonumber \\
& & \leq I(\mathbf{x}_{i}^{n};\mathbf{y}_{i}^{n},\siminus^{n}) + n\epsilon_{n}, \nonumber \\
& & = h(\siminus^{n}) - h(\ziminus^{n}) + h(\Yi^{n}|\siminus^{n}) - h(\Yi^{n}|\siminus^{n},\myxi^{n}) + n\epsilon_{n} \nonumber \\
& & \leq  h(\siminus^{n}) - h(\ziminus^{n}) + h(\Yi^{n}|\siminus^{n}) - h(\Yi^{n}|\siminus^{n},\{\mathbf{x}_{j}^{n}\}_{j=1, j \neq i+1}^{K}) +  n\epsilon_{n},\nonumber \\
& & =  h(\siminus^{n}) - h(\ziminus^{n}) + h(\Yi^{n}|\siminus^{n}) - h(\mathbf{s}_{i,i+1}^{n}) +  n\epsilon_{n}. \label{eq:sideouter11}
\end{eqnarray}
Again, the rate of users $i = 2, 3, \ldots, K-1$ can also be bounded as given below:
\begin{eqnarray}
& nR_{i} & \leq \mutuali + n\epsilon_{n}, \nonumber \\
& & \leq I(\mathbf{x}_{i}^{n};\mathbf{y}_{i}^{n},\siplus^{n}) + n\epsilon_{n}, \nonumber \\
& & \leq h(\siplus^{n}) - h(\ziplus^{n}) + h(\Yi^{n}|\siplus^{n}) - h(\mathbf{s}_{i,i-1}^{n}) +  n\epsilon_{n}. \label{eq:sideouter12}
\end{eqnarray}
Summing all the inequalities in (\ref{eq:sideouter9}), (\ref{eq:sideouter10}), (\ref{eq:sideouter11}) and (\ref{eq:sideouter12}), the sum rate is bounded as follows:
\begin{eqnarray}
& & n\lsqb R_{1} + 2\displaystyle\sum_{i=2}^{K-1}R_{i} + R_{K} \rsqb  - n\epsilon_{n} \nonumber \\
& & = h(\mathbf{s}_{2,1}^{n}) - h(\mathbf{z}_{2}^{n}) + h(\mathbf{y}_{1}^{n}|\mathbf{s}_{2,1}^{n}) - h(\mathbf{s}_{1,2}^{n}) + \displaystyle\sum_{i=2}^{K-1}\lsqb h(\siminus^{n}) - h(\ziminus^{n}) + h(\Yi^{n}|\siminus^{n}) - h(\mathbf{s}_{i,i+1}^{n})\right. +\nonumber \\
& & \: \left.  h(\siplus^{n}) - h(\ziplus^{n}) + h(\Yi^{n}|\siplus^{n}) - h(\mathbf{s}_{i,i-1}^{n}) \rsqb + 
h(\mathbf{s}_{K-1,K}^{n})- h(\mathbf{z}_{K-1}^{n}) + h(\mathbf{y}_{K}^{n}|\mathbf{s}_{K-1,K}^{n}) - h(\mathbf{s}_{K,K-1}^{n}), \nonumber \\
& & \leq \displaystyle\sum_{i=1}^{K-1} h(\Yi^{n} | \siplus^{n}) + \displaystyle\sum_{i=2}^{K} h(\Yi^{n} | \siminus^{n}) -h(\mathbf{z}_{1}^{n}) - 2\displaystyle\sum_{i=2}^{K-1}h(\mathbf{z}_{i}^{n}) - h(\mathbf{z}_{K}^{n}), \nonumber \\
& & \stackrel{(a)}{\leq} n \displaystyle\sum_{i=1}^{K-1} h(\Yi^{*} | \siplus^{*}) + n\displaystyle\sum_{i=2}^{K} h(\Yi^{*} | \siminus^{*}) - nh(\mathbf{z}_{1}) - 2n\displaystyle\sum_{i=2}^{K-1}h(\Zi) - nh(\mathbf{z}_{K}),  \nonumber 
\end{eqnarray}
\begin{equation}
\text{or } R_{1} + 2\displaystyle\sum_{i=2}^{K-1}R_{i} + R_{K} \leq  \displaystyle\sum_{i=1}^{K-1} h(\Yi^{*} | \siplus^{*}) + \displaystyle\sum_{i=2}^{K} h(\Yi^{*} | \siminus^{*}) - h(\mathbf{z}_{1}) - 2\displaystyle\sum_{i=2}^{K-1}h(\Zi) - h(\mathbf{z}_{K}). \label{eq:sideouter13}
\end{equation}
where \textit{(a)} follows from applying Lemma \ref{lemmaused2} as in the proof of Theorem \ref{theorem-outer1}.

The conditional differential entropy terms in (\ref{eq:sideouter13}) are simplified as follows. Consider a particular $i$ in the first summation term in (\ref{eq:sideouter13}):
\begin{equation}
h(\Yi^{*}|\siplus^{*}) = \log\labs \pi e\Sigma_{\Yi^{*}|\siplus^{*}}\rabs, \label{eq:sideouter4}
\end{equation}
where
\begin{equation}
\Sigma_{\Yi^{*}|\siplus^{*}} = \expec\lsqb\Yi^{*}\Yi^{*H}\rsqb - \expec\lsqb\Yi^{*}\siplus^{*H}\rsqb \expec\lsqb\siplus^{*}\siplus^{*H}\rsqb^{-1}\expec\lsqb\siplus^{*}\Yi^{*H}\rsqb. \label{eq:sideouter5}
\end{equation}
The individual terms in (\ref{eq:sideouter5}) are obtained as follows:
\begin{eqnarray}
& & \expec\lsqb\Yi^{*}\Yi^{*H}\rsqb = \Iden_{N_i} + \Hii\Pdashi\Hii^{H} + \sumneq \Hij\Pj\Hij^{H},  \label{eq:sideouter5a}\\
& & \expec\lsqb\Yi^{*}\siplus^{*H}\rsqb = \Hii \Pdashi\Hplus^{H},  \label{eq:sideouter5b}\\
& & \expec\lsqb\siplus^{*}\siplus^{*H}\rsqb = \Iden_{N_i} + \Hplus\Pdashi\Hplus^{H},  \label{eq:sideouter5c}
\end{eqnarray}
Hence, using Woodbury identity \cite{seber1} and simplifying, (\ref{eq:sideouter5}) becomes:
\begin{eqnarray}
& \Sigma_{\Yi^{*}|\siplus^{*}} & = \Iden_{N_i} + \sumneq\Hij\Pj\Hij^{H} + \Hii\Pdashi^{1/2}\lb \Iden_{M_i}  + \Pdashi^{1/2}\Hplus^{H}\Hplus\Pdashi^{1/2}\rb^{-1}\Pdashi^{1/2}\Hii^{H}. \nonumber
\end{eqnarray}
Finally we obtain:
\begin{equation}
h(\Yi^{*} | \siplus^{*}) = \log\labs \pi e \lsqb \Iden_{N_i} + \sumneq \Hij\Pj\Hij^{H} + \Hii\Pdashi^{1/2}\lb \Iden_{M_i}  + \Pdashi^{1/2}\Hplus^{H}\Hplus\Pdashi^{1/2} \rb^{-1}\Pdashi^{1/2}\Hii^{H}\rsqb\rabs, \label{eq:sideouter14}
\end{equation}
In a similar manner, it can also be shown that:
\begin{equation}
h(\Yi^{*} | \siminus^{*}) = \log\labs \pi e \lsqb \Iden_{N_i} + \sumneq \Hij\Pj\Hij^{H} + \Hii\Pdashi^{1/2}\lb\Iden_{M_i}  + \Pdashi^{1/2}\Hminus^{H}\Hminus\Pdashi^{1/2} \rb^{-1}\Pdashi^{1/2}\Hii^{H}\rsqb\rabs. \label{eq:sideouter15}
\end{equation}
Finally, the sum rate is upper bounded by using (\ref{eq:sideouter14}) and (\ref{eq:sideouter15}) in (\ref{eq:sideouter13}):
\begin{eqnarray}
& & R_{1} + 2\displaystyle\sum_{i=2}^{K-1}R_{i} + R_{K} \nonumber \\
& & \leq \displaystyle\sum_{i=1}^{K-1}\log\labs  \Iden_{N_i} + \sumneq \Hij\Pj\Hij^{H} + \Hii\Pdashi^{1/2}\lb \Iden_{M_i} + \Pdashi^{1/2}\Hplus^{H}\Hplus\Pdashi^{1/2} \rb^{-1}\Pdashi^{1/2}\Hii^{H}\rabs  \nonumber \\
& & \quad + \displaystyle\sum_{i=2}^{K}\log\labs \Iden_{N_i} + \sumneq \Hij\Pj\Hij^{H} + \Hii\Pdashi^{1/2}\lb \Iden_{M_i} + \Pdashi^{1/2}\Hminus^{H}\Hminus\Pdashi^{1/2} \rb^{-1}\Pdashi^{1/2}\Hii^{H}\rabs. \label{eq:sideouter16}
\end{eqnarray}
The proof is complete.
\subsection{Proof of Lemma \ref{lemma-outer2}}\label{sec:appendouter2-lemma}
Following two cases are considered to simplify the outer bound stated in Theorem \ref{theorem-outer2}.\\
\textbf{Case 1 $(M \leq N)$}:
For the symmetric case, applying Lemma \ref{lemmaused2} and simplifying (\ref{eq:outerth2}) for high SNR, the statement of Theorem \ref{theorem-outer2} becomes:
\begin{eqnarray}
& R_{1} + 2\displaystyle\sum_{i=2}^{K-1}R_{i} + R_{K} & \leq \displaystyle\sum_{i=1}^{K-1}\log\labs  \mathbf{I}_{N} + \rho^{\alpha}\sumneq \Hij\Hij^{H} + \rho^{1-\alpha}\Hii\lb \Hplus^{H}\Hplus \rb^{-1}\Hii^{H}\rabs \nonumber \\
& & \quad + \displaystyle\sum_{i=2}^{K}\log\labs \mathbf{I}_{N} + \rho^{\alpha}\sumneq \Hij\Hij^{H} + \rho^{1-\alpha}\Hii\lb \Hminus^{H}\Hminus \rb^{-1}\Hii^{H}\rabs + \oone. \label{eq:sideouter26}
\end{eqnarray}
The outer bound is simplified further under the following cases using Lemma \ref{lemmaused1}.\\
\emph{Weak Interference Case $\lb0 \leq \alpha \leq \frac{1}{2}\rb$}: In this case, (\ref{eq:sideouter26}) becomes:
\begin{eqnarray}
& R_{1} + 2\displaystyle\sum_{i=2}^{K-1}R_{i} + R_{K} & \leq \displaystyle\sum_{i=1}^{K-1}\lsqb M (1-\alpha)\log\rho + \min\lcb \min\lcb N, (K-1)M \rcb, N-M\rcb\alpha\log\rho\rsqb + \nonumber \\
& & \: \displaystyle\sum_{i=2}^{K}\lsqb M(1-\alpha)\log\rho + \min\lcb \min\lcb N, (K-1)M \rcb, N-M\rcb\alpha\log\rho\rsqb +\oone,\nonumber \\
& \text{or } 2(K-1)R_{i} & \leq 2(K-1)\lsqb M(1-\alpha)\log\rho + \min\lcb \min\lcb N, (K-1)M \rcb, N-M\rcb\alpha\log\rho\rsqb + \oone, \nonumber \\
& \text{or } R_{i} & \leq M(1-\alpha)\log\rho + \min\lcb \min\lcb N, (K-1)M \rcb, N-M\rcb\alpha\log\rho + \oone. \label{eq:sideouter27}
\end{eqnarray}
The per user GDOF is thus upper bounded as:
\begin{equation}
d(\alpha) \leq M(1-\alpha) + \min\lcb \min\lcb N, (K-1)M \rcb, N-M\rcb\alpha. \label{eq:sideouter28}
\end{equation}

\emph{Moderate Interference Case $\lb\frac{1}{2} \leq \alpha \leq 1\rb$}: In this regime, (\ref{eq:sideouter26}) reduces to the following form:
\begin{eqnarray}
& & R_{1} + 2\displaystyle\sum_{i=2}^{K-1}R_{i} + R_{K}  \leq \displaystyle\sum_{i=1}^{K-1}\lsqb \min\lcb(K-1)M,N\rcb\alpha  + \min\lcb M, N - \min\lcb N, (K-1)M\rcb \rcb(1-\alpha) \rsqb \log\rho  \nonumber \\
& & \quad + \displaystyle\sum_{i=2}^{K}\lsqb \min\lcb(K-1)M,N\rcb\alpha  + \min\lcb M, N - \min\lcb N, (K-1)M\rcb \rcb(1-\alpha) \rsqb \log\rho + \oone, \nonumber \\
& & \text{or } R_{i}  \leq \lsqb \alpha\min\lcb(K-1)M,N\rcb + \min\lcb M, N - \min\lcb N, (K-1)M\rcb \rcb(1-\alpha)\rsqb\log\rho + \oone. \nonumber \\ \label{eq:sideouter28b}
\end{eqnarray}
Hence, the per user GDOF is upper bounded as follows:
\begin{equation}
 d(\alpha)  \leq \min\lcb(K-1)M,N\rcb\alpha  + \min\lcb M, N - \min\lcb N, (K-1)M\rcb \rcb(1-\alpha). \label{eq:sideouter29}
\end{equation}
\emph{High Interference Case $\lb \alpha \geq 1 \rb$}:In this regime, (\ref{eq:sideouter26}) becomes
\begin{eqnarray}
& & R_{1} + 2\displaystyle\sum_{i=2}^{K-1}R_{i} + R_{K}  \leq \displaystyle\sum_{i=1}^{K-1} \min\lcb N, (K-1)M \rcb\alpha\log\rho + \displaystyle\sum_{i=2}^{K} \min\lcb N, (K-1)M \rcb\alpha\log\rho + \oone, \nonumber \\
& &\text{or } R_{i}  \leq \min\lcb N, (K-1)M \rcb\alpha + \oone. \nonumber
\end{eqnarray}
The per user GDOF  is upper bounded as:
\begin{eqnarray}
& d(\alpha) \leq \min\lcb N, (K-1)M \rcb\alpha. \label{eq:sideouter30}
\end{eqnarray}
As the per user GDOF in this case exceeds the interference free GDOF, this bound is not helpful for high interference regime.\\
\textbf{Case 2 $(M > N)$}:\\
When $M >N$, (\ref{eq:sideouter16}) is simplified as follows:
\begin{eqnarray}
& & R_{1} + \displaystyle\sum_{i=2}^{K-1} R_{i} + R_{K} \nonumber \\
& & \stackrel{(a)}{\leq}\displaystyle\sum_{i=1}^{K-1} \log\labs \Iden_N + \rho^{\alpha}\sumneq\Hij\Hij^{H} + \rho\Hii\lb \Iden_{M} + \rho^{\alpha} \mathbf{U}_{i+1,i}\Sigma_{i+1,i}\mathbf{U}_{i+1,i}^{H}\rb^{-1} \Hii^{H}\rabs \nonumber \\
& & \quad + \displaystyle\sum_{i=2}^{K} \log\labs \Iden_N + \rho^{\alpha}\sumneq\Hij\Hij^{H} + \rho\Hii\lb \Iden_{M}+\rho^{\alpha} \mathbf{U}_{i-1,i}\Sigma_{i-1,i}\mathbf{U}_{i-1,i}^{H}\rb^{-1} \Hii^{H}\rabs  \nonumber \\
& & = \displaystyle\sum_{i=1}^{K-1} \log\labs \Iden_N + \rho^{\alpha}\sumneq\Hij\Hij^{H} + \rho\Hii\mathbf{U}_{i+1,i}\lb \Iden_{M} + \rho^{\alpha} \Sigma_{i+1,i}\rb^{-1} \mathbf{U}_{i+1,i}^{H}\Hii^{H}\rabs  \nonumber \\
& & \quad +\displaystyle\sum_{i=2}^{K} \log\labs \Iden_N + \rho^{\alpha}\sumneq\Hij\Hij^{H} + \rho\Hii\mathbf{U}_{i-1,i}\lb \Iden_{M}+\rho^{\alpha} \Sigma_{i-1,i}\rb^{-1}\mathbf{U}_{i-1,i}^{H} \Hii^{H}\rabs  \nonumber \\
& & \stackrel{(b)}{=} \displaystyle\sum_{i=1}^{K-1} \log\labs \Iden_N + \rho^{\alpha}\sumneq\Hij\Hij^{H} + \rho\Hitilde\lb \Iden_{M} + \rho^{\alpha} \lsqb\begin{array}{cc}
  \Sigma_{N}^{i+1,i} &\mathbf{0} \\
  \mathbf{0} &\mathbf{0}_{M-N}
  \end{array}\rsqb\rb^{-1} \Hitilde^{H}\rabs   \nonumber \\
& & \quad + \displaystyle\sum_{i=2}^{K} \log\labs \Iden_N + \rho^{\alpha}\sumneq\Hij\Hij^{H} + \rho\Hitilde\lb \Iden_{M}+\rho^{\alpha} \lsqb\begin{array}{cc}
  \Sigma_{N}^{i-1,i} &\mathbf{0} \\
  \mathbf{0} &\mathbf{0}_{M-N}
  \end{array}\rsqb \rb^{-1}\Hitilde^{H}\rabs   \nonumber \\
& & \stackrel{(c)}{=}\displaystyle\sum_{i=1}^{K-1} \log\labs \Iden_N + \rho^{\alpha}\sumneq\Hij\Hij^{H} + \rho^{1-\alpha}\Hitilde^{(a)}\lb \Sigma_{N}^{i+1,i} \rb^{-1}\Hitilde^{(a)H} + \rho \Hitilde^{(b)}\Iden_{M-N}\Hitilde^{(b)H}\rabs  \nonumber \\
& & \quad + \displaystyle\sum_{i=2}^{K} \log\labs \Iden_N + \rho^{\alpha}\sumneq\Hij\Hij^{H} + \rho^{1-\alpha}\Hitilde^{(a)}\lb \Sigma_{N}^{i-1,i} \rb^{-1}\Hitilde^{(a)H} + \rho \Hitilde^{(b)}\Iden_{M-N}\Hitilde^{(b)H}\rabs + \oone, \nonumber \\ \label{eq:sideouter31}
\end{eqnarray}
where, \textit{(a)} is obtained by taking the EVD of $\Hij^{H}\Hij$; $\mathbf{U}_{ij} \in \Complex^{M\times M}$ and $\Sigma_{ij} \in \Complex^{M\times M}$; in \textit{(b)}, $\Hitilde \triangleq \mathbf{H}_{ii}\mathbf{U}_{ij}$ and $\Sigma_{N}^{j,i}$ contains $N$ non-zero singular values of $\Hij^{H}\Hij$; and in \textit{(c)}, $\Hitilde$ is partitioned into submatrices $\Hitilde^{(a)}$ and $\Hitilde^{(b)}$ of dimension $N \times N$ and $N \times (M-N)$, respectively.

Equation (\ref{eq:sideouter31}) is further simplified based on the value of $\alpha$, as follows.\\
\emph{Weak Interference Case $\lb0 \leq \alpha \leq \frac{1}{2}\rb$}:\\
Consider a specific $i$ in (\ref{eq:sideouter31}):
\begin{eqnarray}
& &\log\labs \Iden_N + \rho^{\alpha}\sumneq\Hij\Hij^{H} + \rho^{1-\alpha}\Hitilde^{(a)}\lb \Sigma_{N}^{i+1,i} \rb^{-1}\Hitilde^{(a)H} + \rho \Hitilde^{(b)}\Iden_{M-N}\Hitilde^{(b)H}\rabs \nonumber \\
& & \stackrel{(a)}{=} \min\lcb N, M-N\rcb \log\rho + \min\lcb N, N - \min\lcb N, M-N\rcb \rcb(1-\alpha)\log\rho + \nonumber \\
& & \qquad \qquad\min\lcb \min\lb (K-1)M,N\rb, \{N - \min\lcb N, M-N\rcb - N\}^{+}\rcb \alpha\log\rho  + \oone \nonumber \\
& & = \min\lcb N, M-N\rcb\log\rho + \lb N -  \min\lcb N, M-N\rcb \rb(1-\alpha)\log\rho + \oone, \label{eq:sideouter32}
\end{eqnarray}
where \textit{(a)} is obtained by using Lemma \ref{lemmaused4}.

From (\ref{eq:sideouter31}) and (\ref{eq:sideouter32}), the outer bound becomes:
\begin{eqnarray}
& & R_{i} \leq N(1-\alpha)\log\rho + \min\lcb N, M-N\rcb\alpha\log\rho +  \oone. \label{eq:sideouter33}
\end{eqnarray}
In weak interference case, the per user GDOF is upper bounded as:
\begin{equation}
d(\alpha) \leq N(1-\alpha) + \min\lcb N, M-N\rcb\alpha. \label{eq:sideouter34}
\end{equation}
\emph{Moderate Interference Case $\lb\frac{1}{2} \leq \alpha \leq 1\rb$}: \\
Consider a specific $ i $ in (\ref{eq:sideouter31}):
\begin{eqnarray}
& &\log\labs \Iden_N + \rho^{\alpha}\sumneq\Hij\Hij^{H} + \rho^{1-\alpha}\Hitilde^{(a)}\lb \Sigma_{N}^{i+1,i} \rb^{-1}\Hitilde^{(a)H} + \rho \Hitilde^{(b)}\Iden_{M-N}\Hitilde^{(b)H}\rabs \nonumber 
\end{eqnarray}
\begin{eqnarray}
& & \stackrel{(a)}{=}  \min\lcb N, M-N\rcb\log\rho + \min\lcb \min\lcb(K-1)M,N\rcb, N -  \min\lcb N, M-N\rcb\rcb \alpha\log\rho + \nonumber \\
& & \qquad \qquad \min\lcb N, \lb N -  \min\lcb N, M-N\rcb -  \min\lcb (K-1)M,N\rcb\rb^{+}\rcb(1-\alpha)\log\rho  + \oone \nonumber \\
& & = \min\lcb N, M-N\rcb\log\rho + \min\lcb N, N - \min\lcb N, M-N\rcb\rcb \alpha\log\rho + \oone, \label{eq:sideouter35}
\end{eqnarray}
where \textit{(a)} is obtained by using Lemma \ref{lemmaused4}.

From (\ref{eq:sideouter31}) and (\ref{eq:sideouter35}), the outer bound becomes:
\begin{eqnarray}
R_{i} \leq N\alpha\log\rho + \min\lcb N, M-N\rcb(1-\alpha)\log\rho + \oone. \label{eq:sideouter36}
\end{eqnarray}
In moderate interference case, per user GDOF is upper bounded as follows:
\begin{equation}
d(\alpha) \leq N\alpha + \min\lcb N, M-N\rcb(1-\alpha). \label{eq:sideouter37}
\end{equation}
\emph{High Interference Case $\lb \alpha \geq 1\rb$}:\\
In this case, the outer bound in (\ref{eq:sideouter31}) simplifies as follows:
\begin{eqnarray}
& \mysumK R_{i} & \leq \displaystyle\sum_{i=1}^{K-1} \log\labs \Iden_N + \rho^{\alpha}\sumneq\Hij\Hij^{H} +  \rho \Hitilde^{(b)}\Iden_{M-N}\Hitilde^{(b)H}\rabs + \nonumber \\
& & \quad \displaystyle\sum_{i=2}^{K} \log\labs \Iden_N + \rho^{\alpha}\sumneq\Hij\Hij^{H} +  \rho \Hitilde^{(b)}\Iden_{M-N}\Hitilde^{(b)H}\rabs + \oone \nonumber \\
& & \stackrel{(a)}{=} \displaystyle\sum_{i=1}^{K-1} \lsqb \min\lcb N, (K-1)M \rcb\alpha\log\rho + \min\lcb \min\lcb N, M-N\rcb, N -  \min\lcb N, (K-1)M \rcb \rcb\log\rho \rsqb + \nonumber \\
& & \displaystyle\sum_{i=2}^{K} \lsqb \min\lcb N, (K-1)M \rcb\alpha\log\rho + \min\lcb \min\lcb N, M-N\rcb -  \min\lcb N, (K-1)M \rcb \rcb\log\rho \rsqb + \oone \nonumber \\
& \text{or } R_{i} & \leq N\alpha\log\rho + \oone, \label{eq:sideouter38}
\end{eqnarray}
where \textit{(a)} is obtained by using Lemma \ref{lemmaused1}.

The per user GDOF in case of high interference is upper bounded as follows:
\begin{equation}
d(\alpha) \leq N\alpha. \label{eq:sideouter39}
\end{equation}
But the outer bound in this case exceeds the interference free GDOF i.e. $N$ as $\alpha \geq 1$ . Hence, this outer bound is not useful when $\alpha \geq 1$.

By combining (\ref{eq:sideouter28}), (\ref{eq:sideouter29}), (\ref{eq:sideouter34}) and (\ref{eq:sideouter37}) results in Lemma \ref{lemma-outer2}. This completes the proof.
\subsection{Proof of Theorem \ref{theorem-outer3}}\label{sec:appendouter3-th}
Define $\mathbf{S}_{j,\mathcal{B}}$ as in (\ref{eq:sideouter0a}). Let $\mathcal{A} = \{1,2,\ldots,K\}$ be the set of all transmitters. $\mathcal{A-B}$ is the complement of $\mathcal{B}$ in $\mathcal{A}$. The $ith$ transmitter and receiver are assumed to have $M_i$ and $N_i$ antennas, respectively. Following the procedure given in \cite{gou3} and using the Lemma \ref{lemmaused2}, the sum rate can be bounded as follows. 
 \begin{eqnarray}
&  R_{1} + 2\displaystyle\sum_{i=2}^{K-1}R_{i} + R_{K}  & \leq h\lb\Yone^{*}|\Skone^{*}\rb + h\lb\Yk^{*}|\Sonek^{*}\rb + \displaystyle\sum_{i=2}^{K-1} h\lb\Yi^{*}|\Skonetwoi^{*},\Soneacomp^{*} \rb  \nonumber \\
& & \: + \displaystyle\sum_{i=2}^{K-1} h \lb \Yi^{*}|\Sonektwoi^{*},\Skacomp^{*} \rb \nonumber \\
& & \: - h \lb \Zone\rb -2 n\displaystyle\sum_{i=2}^{K-1}h\lb \Zi\rb - h \lb \Zk\rb.  \label{eq:outerthree3}
\end{eqnarray}
The above expression is simplified for the SIMO case in \cite{gou3}. Here, since the transmitters could also have multiple antennas, the individual terms in (\ref{eq:outerthree3}) need to be evaluated as follows. The first term $h\lb\Yone^{*}|\Skone^{*}\rb$ in (\ref{eq:outerthree3}) is similar to the evaluation of conditional differential entropy in the proof of Theorem \ref{theorem-outer2}. On simplification, first term becomes:
\begin{equation}
h\lb \Yone^{*}|\Skone^{*}\rb = \log\labs \pi e \lsqb \Iden_{N_1} + \displaystyle\sum_{j=2}^{K}\Honej\Pj\Honej^{H} + \Honeone \Pone^{1/2}\lcb \Iden_{M_1} + \Pone^{1/2}\Hkone^{H}\Hkone\Pone^{1/2}\rcb^{-1}\Pone^{1/2}\Honeone^{H}\rsqb\rabs. \label{eq:outerthree4}
\end{equation}
Similarly, the term $h\lb\Yk^{*}|\Sonek^{*}\rb $ simplifies to
\begin{equation}
h\lb\Yk^{*}|\Sonek^{*}\rb  = \log \labs \pi e \lsqb \Iden_{N_K} + \displaystyle\sum_{j=1}^{K-1}\HKj\Pj\HKj^{H} + \Hkk\Pk^{1/2}\lcb \Iden_{M_K} + \Pk^{1/2}\Honek^{H}\Honek\Pk^{1/2}\rcb\Pk^{1/2}\Hkk^{H}\rsqb\rabs. \label{eq:outerthree5}
\end{equation}
Now consider the term $h \lb \Yi^{*}|\Skonetwoi^{*},\Soneacomp^{*}\rb$. In this case,
\begin{eqnarray}
 & &\Yi^{*} = \Hii\myxi^{*} + \sumneq \Hij\myxj^{*} + \Zi, \nonumber \\
 & & \Skonetwoi^{*} = \displaystyle\sum_{j \in \{1,2,\ldots,i\}}\HKj\myxj^{*} + \Zk, \nonumber \\
 \text{and } & &\Soneacomp^{*} = \displaystyle\sum_{j \in \usrind }\Honej\myxj^{*} + \Zone. \nonumber 
\end{eqnarray}
The conditional differential entropy becomes
\begin{equation}
h \lb \Yi^{*}|\Skonetwoi^{*},\Soneacomp^{*}\rb = \log\labs\pi e \Sigma_{\Yi^{*}|\Skonetwoi^{*},\Soneacomp^{*}} \rabs, \label{eq:outerthree7}
\end{equation}
where,
\begin{eqnarray}
& & \Sigma_{\Yi^{*}|\Skonetwoi^{*},\Soneacomp^{*}} = \expec \lsqb\Yi^{*}\Yi^{*H}\rsqb - \expec \lsqb\Yi^{*}\Sbar^{*H} \rsqb \expec \lsqb \Sbar^{*}\Sbar^{*H} \rsqb^{-1} \expec \lsqb \Sbar^{*}\Yi^{*H} \rsqb, \label{eq:outerthree8}\\
\text{and } & & \Sbar^{*} = \lsqb \begin{array}{ll}
                         \Skonetwoi^{*T} & \Soneacomp^{*T}
                        \end{array} \rsqb^{T}. \nonumber 
\end{eqnarray}
The output at receiver $i\:(i \neq 1,K)$ can also be expressed as follows.
\begin{eqnarray}
\Yi^{*} = \Hionebar\Xonebar + \Hitwobar\Xtwobar + \Zi, \label{eq:outerthree9}
\end{eqnarray}
\begin{eqnarray}
\text{where }& & \Xonebar = \lsqb \begin{array}{llll}
             \mathbf{x}_{1}^{*T} & \mathbf{x}_{2}^{*T} &\ldots &\mathbf{x}_{i}^{*T}
            \end{array} \rsqb^{T}, \quad \Xtwobar = \lsqb \begin{array}{llll}
		\mathbf{x}_{i+1}^{*T} &\mathbf{x}_{i+2}^{*T} \ldots &\mathbf{x}_{K}^{*T} 
\end{array} \rsqb^{T}. \nonumber
\end{eqnarray}
and $\Hionebar$ and $\Hitwobar$ are defined as in (\ref{eq:outerth3b}). The two side information terms can also be expressed as follows.
\begin{eqnarray}
& & \Skonetwoi = \Hkonebar\Xonebar + \Zk, \label{eq:outerthree10} \\
& & \Soneacomp = \Honekbar\Xtwobar + \mathbf{z}_{1}. \label{eq:outerthree11}
\end{eqnarray}
Now consider the evaluation of individual terms in (\ref{eq:outerthree8}):
\begin{eqnarray}
& \expec \lsqb \Yi^{*}\Yi^{*H}\rsqb & = \Iden_{N_{i}} + \Hionebar\Pionebar\Hionebar^{H} + \Hitwobar\Pitwobar\Hitwobar^{H}, \label{eq:outerthree12}\\
& \expec \lsqb \Yi^{*}\Sbar^{*H}\rsqb & = \lsqb \begin{array}{lll}
             \Hionebar\Pionebar\Hkonebar^{H} & \: &\Hitwobar\Pitwobar\Honekbar^{H}
            \end{array} \rsqb.  \label{eq:outerthree13} \\
& \expec \lsqb \Sonek^{*}\Sonek^{*H} \rsqb & = \lsqb \begin{array}{ll}
             \Iden_{N_{i}} + \Hkonebar\Pionebar\Hkonebar^{H}  &\mathbf{0} \\
	     \mathbf{0} &\Iden_{N_{i}} + \Honekbar\Pitwobar\Honekbar^{H}
            \end{array}\rsqb, \label{eq:outerthree15}
\end{eqnarray}

Hence, (\ref{eq:outerthree8}) simplifies to
\begin{eqnarray}
& & \Sigma_{\Yi^{*}|\Skonetwoi^{*},\Soneacomp^{*}} \nonumber \\
& & = \Iden_{N_i} + \Hionebar\Pionebar\Hionebar^{H} + \Hitwobar\Pitwobar\Hitwobar^{H} - \Hionebar\Pionebar\Hkonebar^{H}\lsqb \Iden_{N_{i}} + \Hkonebar\Pionebar\Hkonebar^{H} \rsqb^{-1}\Hkonebar\Pionebar\Hionebar^{H} -  \nonumber \\
& & \qquad \quad \Hitwobar\Pitwobar\Honekbar^{H} \lsqb \Iden_{N_{i}} + \Honekbar\Pitwobar\Honekbar^{H}\rsqb^{-1}\Honekbar\Pitwobar\Hitwobar^{H} \nonumber  
\end{eqnarray}
\begin{eqnarray}
& & = \Iden_{N_i} + \Hionebar\Pionebar^{1/2} \lcb \Iden_{N_i} - \Pionebar^{1/2}\Hkonebar^{H} \lsqb \Iden_{N_{K}} + \Hkonebar\Pionebar^{1/2}\Pionebar^{1/2}\Hkonebar^{H} \rsqb^{-1}\Hkonebar\Pionebar^{1/2} \rcb\Pionebar^{1/2}\Hionebar^{H} + \nonumber \\
& & \qquad \quad \Hitwobar\Pitwobar^{1/2}\lcb \Iden_{N_i} - \Pitwobar^{1/2}\Honekbar^{H}\lsqb \Iden_{N_1} + \Honekbar\Pitwobar^{1/2}\Pitwobar^{1/2}\Honekbar^{H}\rsqb^{-1}\Honekbar\Pitwobar^{1/2} \rcb\Pitwobar^{1/2}\Hitwobar^{H} \nonumber \\
& & = \Iden_{N_i} + \Hionebar\Pionebar^{1/2} \lcb  \Iden_{\Mri} + \Pionebar^{1/2}\Hkonebar^{H}\Hkonebar\Pionebar^{1/2}\rcb^{-1}\Pionebar^{1/2}\Hionebar^{H} + \nonumber \\
& & \qquad \Hitwobar\Pitwobar^{1/2} \lcb \Iden_{\Msi} + \Pitwobar^{1/2}\Honekbar^{H}\Honekbar \Pitwobar^{1/2}\rcb^{-1}\Pitwobar^{1/2}\Hitwobar^{H} \label{eq:outerthree16}
\end{eqnarray}
where $\Mri = \dimone$, $\Msi = \dimtwo$ and the last equation follows from the Woodbury identity \cite{seber1}. The quantity $\Pionebar$ and $\Pitwobar$ are as defined in (\ref{eq:outerth3b}).

Hence, (\ref{eq:outerthree7}) becomes
\begin{eqnarray}
&  h \lb \Yi^{*}|\Skonetwoi^{*},\Soneacomp^{*}\rb &  = \log\labs\pi e \lsqb \Iden_{N_i} + \Hionebar\Pionebar^{1/2} \lcb  \Iden_{\Mri} + \Pionebar^{1/2}\Hkonebar^{H}\Hkonebar\Pionebar^{1/2}\rcb^{-1}\Pionebar^{1/2}\Hionebar^{H} + \right. \right. \nonumber \\
& & \qquad \left. \left. \Hitwobar\Pitwobar^{1/2} \lcb \Iden_{\Msi} + \Pitwobar^{1/2}\Honekbar^{H}\Honekbar \Pitwobar^{1/2}\rcb^{-1}\Pitwobar^{1/2}\Hitwobar^{H} \rsqb \rabs. \label{eq:outerthree17}
\end{eqnarray}
In a similar manner, it can be shown that
\begin{eqnarray}
& & h \lb \Yi^{*}|\Sonektwoi^{*},\Skacomp^{*} \rb = \log  \labs\pi e \lsqb \Iden_{N_{i}} + \Hithreebar\Pithreebar^{1/2} \lcb \Iden_{\Mri^{'}} + \Pithreebar^{1/2}\Honetwobar^{H}\Honetwobar\Pithreebar^{1/2}\rcb^{-1}\Pithreebar^{1/2}\Hithreebar^{H} + \right.\right.\nonumber \\
& & \left.\left. \Hifourbar\Pifourbar^{1/2}\lcb \Iden_{\Msi^{'}} + \Pifourbar^{1/2}\HKthreebar^{H}\HKthreebar\Pifourbar^{1/2} \rcb^{-1}\Pifourbar^{1/2}\Hifourbar^{H} \rabs \rsqb, \label{eq:outerthree18}
\end{eqnarray}
where $\Mri^{'} = \displaystyle\sum_{j=2}^{i}M_{j} + M_{K}$ and $\Msi^{'} = M_{1} + \displaystyle\sum_{j=i+1}^{K-1}M_{j}$, and $\Pithreebar$
and $\Pifourbar$ are as defined in (\ref{eq:outerth3b}).

By combining (\ref{eq:outerthree4}), (\ref{eq:outerthree5}), (\ref{eq:outerthree17}) and (\ref{eq:outerthree18}) results in Theorem \ref{theorem-outer3}.
\subsection{Proof of Lemma \ref{lemma-outer3}}\label{sec:appendouter3-lemma}
For symmetric case, using Lemma \ref{lemmaused3}, the sum rate outer bound in Theorem \ref{theorem-outer3} reduces to the following form:
\begin{eqnarray}
& & R_{1} + \displaystyle\sum_{i=2}^{K-1}R_{i} + R_{K}   \leq \log\labs \Iden_{N} + \rho^{\alpha}\displaystyle\sum_{j=2}^{K}\Honej\Honej^{H} + \rho\Honeone \lcb \Iden_{M} + \rho^{\alpha}\Hkone^{H}\Hkone \rcb^{-1}\Honeone^{H}\rabs  \nonumber \\
& & \quad + \displaystyle\sum_{i=2}^{K-1}\log \labs \Iden_{N} + \Hionebar\lcb \Iden_{\Mri} + \Hkonebar^{H}\Hkonebar\rcb^{-1}\Hionebar^{H} + \Hitwobar\lcb \Iden_{\Mri} + \Honekbar^{H}\Honekbar \rcb^{-1}\Hitwobar^{H}\rabs  \nonumber \\
& & \quad + \displaystyle\sum_{i=2}^{K-1}\log \labs \Iden_{N} + \Hithreebar\lcb \Iden_{\Mri} + \Honetwobar^{H}\Honetwobar\rcb^{-1}\Hithreebar^{H} +\Hifourbar\lcb \Iden_{\Mri} + \HKthreebar^{H}\HKthreebar \rcb^{-1}\Hifourbar^{H} \rabs   \nonumber \\
& & \quad + \log\labs \Iden_{N} +\rho^{\alpha} \displaystyle\sum_{j=1}^{K-1}\HKj\HKj^{H} + \rho\Hkk\lcb \Iden_{M} + \rho^{\alpha}\Honek^{H}\Honek \rcb^{-1}\Hkk^{H}\rabs, \label{eq:lmouterthree1}
\end{eqnarray}
where $M_{r_{i}} = iM$ and with a slight abuse of notation, and 
\begin{eqnarray}
& & \Hionebar = \lsqb \sqrt{\rho^{\alpha}}\mathbf{H}_{i1} \: \sqrt{\rho^{\alpha}}\mathbf{H}_{i2} \ldots \: \sqrt{\rho}\mathbf{H}_{ii} \rsqb, \quad \Hitwobar = \lsqb \sqrt{\rho^{\alpha}}	\mathbf{H}_{i,i+1} \: \sqrt{\rho^{\alpha}}\mathbf{H}_{i,i+2} \: \ldots \: \sqrt{\rho^{\alpha}}\mathbf{H}_{iK} \rsqb, \nonumber \\
& & \Hkonebar = \lsqb \sqrt{\rho^{\alpha}}\mathbf{H}_{K1} \: \sqrt{\rho^{\alpha}}\mathbf{H}_{K2} \: \ldots \: \sqrt{\rho^{\alpha}}\mathbf{H}_{Ki} \rsqb,\: \Honekbar = \lsqb 
                   \sqrt{\rho^{\alpha}}\mathbf{H}_{1,i+1} \: \sqrt{\rho^{\alpha}}\mathbf{H}_{1,i+2} \:\ldots\: \sqrt{\rho^{\alpha}}\mathbf{H}_{1K} \rsqb, \nonumber \\
& & \Honetwobar = \lsqb \sqrt{\rho^{\alpha}}\mathbf{H}_{1K} \: \sqrt{\rho^{\alpha}}\mathbf{H}_{12} \:\ldots\: \sqrt{\rho^{\alpha}}\mathbf{H}_{1i}
                  \rsqb ,\: \HKthreebar = \lsqb \sqrt{\rho^{\alpha}}\mathbf{H}_{K1} \:\sqrt{\rho^{\alpha}}\mathbf{H}_{K,i+1} \:\ldots \: \sqrt{\rho^{\alpha}}\mathbf{H}_{K,K-1} \rsqb, \nonumber \\
& & \Hithreebar = \lsqb \sqrt{\rho^{\alpha}}\mathbf{H}_{iK} \: \sqrt{\rho^{\alpha}}\mathbf{H}_{i2} \: \ldots \:\sqrt{\rho}\mathbf{H}_{ii} \rsqb, \: \Hifourbar = \lsqb	\sqrt{\rho^{\alpha}}\mathbf{H}_{i1} \: \sqrt{\rho^{\alpha}}\mathbf{H}_{i,i+1} \: \ldots \: \sqrt{\rho^{\alpha}}\mathbf{H}_{i,K-1} \rsqb. \label{eq:lmouterthree2}
\end{eqnarray}

Now consider the following term in (\ref{eq:lmouterthree1})
\begin{eqnarray}
& &  \log \labs \Iden_{N} + \Hionebar\lcb \Iden_{\Mri} + \Hkonebar^{H}\Hkonebar\rcb^{-1}\Hionebar^{H} + 
\Hitwobar\lcb \Iden_{\Mri}+ \Honekbar^{H}\Honekbar \rcb^{-1}\Hitwobar^{H}\rabs. \label{eq:lmouterthree3}
\end{eqnarray}
In the above equation, consider the following term
\begin{eqnarray}
& & \Hionebar\lcb \Iden_{\Mri} + \Hkonebar^{H}\Hkonebar\rcb^{-1}\Hionebar^{H} \nonumber \\
& & = \lsqb \sqrt{\rho^{\alpha}}\mathbf{H}_{i1} \: \sqrt{\rho^{\alpha}}\mathbf{H}_{i2} \ldots \: \sqrt{\rho}\mathbf{H}_{ii} \rsqb\lcb \Iden_{\Mri} + \lsqb \sqrt{\rho^{\alpha}}\mathbf{H}_{K1} \: \sqrt{\rho^{\alpha}}\mathbf{H}_{K2} \: \ldots \: \sqrt{\rho^{\alpha}}\mathbf{H}_{Ki} \rsqb^{H}  \right. \nonumber \\
& & \left. \lsqb \sqrt{\rho^{\alpha}}\mathbf{H}_{K1} \: \sqrt{\rho^{\alpha}}\mathbf{H}_{K2} \: \ldots \: \sqrt{\rho^{\alpha}}\mathbf{H}_{Ki} \rsqb \rcb^{-1}\lsqb \sqrt{\rho^{\alpha}}\mathbf{H}_{i1} \: \sqrt{\rho^{\alpha}}\mathbf{H}_{i2} \ldots \: \sqrt{\rho}\mathbf{H}_{ii} \rsqb^{H}, \nonumber \\
& & = \rho^{\alpha}\lsqb \mathbf{H}_{i1} \: \mathbf{H}_{i2} \ldots \: \sqrt{\rho^{1-\alpha}}\mathbf{H}_{ii} \rsqb\lcb \Iden_{\Mri} + \rho^{\alpha}\lsqb \mathbf{H}_{K1} \: \mathbf{H}_{K2} \: \ldots \: \mathbf{H}_{Ki} \rsqb^{H}  \lsqb \mathbf{H}_{K1} \: \mathbf{H}_{K2} \: \ldots \: \mathbf{H}_{Ki} \rsqb \rcb^{-1} \nonumber \\
& & \qquad \qquad\lsqb \mathbf{H}_{i1} \: \mathbf{H}_{i2} \ldots \: \sqrt{\rho^{1-\alpha}}\mathbf{H}_{ii} \rsqb ^{H}. \label{eq:lmouterthree4}
\end{eqnarray}
In a similar way, it can be shown that
\begin{eqnarray}
& & \Hitwobar\lcb \Iden_{\Msi}+ \Honekbar^{H}\Honekbar \rcb^{-1}\Hitwobar^{H} \nonumber \\
& & = \rho^{\alpha}\lsqb \mathbf{H}_{i,i+1} \: \mathbf{H}_{i,i+2} \: \ldots \: \mathbf{H}_{iK} \rsqb \lcb \Iden_{\Msi} + \rho^{\alpha}\lsqb \mathbf{H}_{1,i+1} \: \mathbf{H}_{1,i+2} \:\ldots\: \mathbf{H}_{1K}  \rsqb^{H}\lsqb \mathbf{H}_{1,i+1} \: \mathbf{H}_{1,i+2} \:\ldots\: \mathbf{H}_{1K}\rsqb\rcb^{-1} \nonumber \\
& & \qquad \qquad \lsqb \mathbf{H}_{i,i+1} \: \mathbf{H}_{i,i+2} \: \ldots \: \mathbf{H}_{iK} \rsqb^{H}. \label{eq:lmouterthree5}
\end{eqnarray}
From (\ref{eq:lmouterthree4}) and (\ref{eq:lmouterthree5}), for large $\rho$, (\ref{eq:lmouterthree3}) becomes:
\begin{eqnarray}
& &  \log\labs \Iden_{N} + \lsqb \mathbf{H}_{i1} \: \mathbf{H}_{i2} \ldots \: \sqrt{\rho^{1-\alpha}}\mathbf{H}_{ii} \rsqb\lcb \lsqb \mathbf{H}_{K1} \: \mathbf{H}_{K2} \: \ldots \: \mathbf{H}_{Ki} \rsqb^{H}  \lsqb \mathbf{H}_{K1} \: \mathbf{H}_{K2} \: \ldots \: \mathbf{H}_{Ki} \rsqb \rcb^{-1} \nonumber \right. \nonumber \\ 
& & \left. \qquad \lsqb \mathbf{H}_{i1} \: \mathbf{H}_{i2} \ldots \: \sqrt{\rho^{1-\alpha}}\mathbf{H}_{ii} \rsqb ^{H} + \lsqb \mathbf{H}_{i,i+1} \: \mathbf{H}_{i,i+2} \: \ldots \: \mathbf{H}_{iK} \rsqb \lcb  \lsqb \mathbf{H}_{1,i+1} \: \mathbf{H}_{1,i+2} \:\ldots\: \mathbf{H}_{1K}  \rsqb^{H} \right.\right. \nonumber \\
& & \left. \left. \lsqb \mathbf{H}_{1,i+1} \: \mathbf{H}_{1,i+2} \:\ldots\: \mathbf{H}_{1K}\rsqb\rcb^{-1} \lsqb \mathbf{H}_{i,i+1} \: \mathbf{H}_{i,i+2} \: \ldots \: \mathbf{H}_{iK} \rsqb^{H}\rabs + \oone \nonumber \\
& & \stackrel{(a)}{=} \log\labs \Iden_{N} + \lsqb \mathbf{H}_{i1} \: \mathbf{H}_{i2} \ldots \: \sqrt{\rho^{1-\alpha}}\mathbf{H}_{ii} \rsqb \lsqb \mathbf{H}_{i1} \: \mathbf{H}_{i2} \ldots \: \sqrt{\rho^{1-\alpha}}\mathbf{H}_{ii} \rsqb ^{H}\rabs + \oone \nonumber \\
& & \stackrel{(b)}{=} \log\labs \Iden_{N} + \rho^{1-\alpha}\mathbf{H}_{ii}\mathbf{H}_{ii}^{H} \rabs + \oone, \label{eq:lmouterthree6}
\end{eqnarray}
where \textit{(a)} is obtained by using the fact that the terms containing inverses are independent of $\alpha$ and are invertible when $\frac{N}{M} < K \leq \frac{N}{M}+1$, and \textit{(b)} is obtained by taking the constant terms into the $\oone$ approximation.

Similarly, it can be shown that
\begin{eqnarray}
& & \log \labs \Iden_{N} + \Hithreebar\lcb \Iden_{\Mri} + \Honetwobar^{H}\Honetwobar\rcb^{-1}\Hithreebar^{H} +\Hifourbar\lcb \Iden_{\Mri} + \HKthreebar^{H}\HKthreebar \rcb^{-1}\Hifourbar^{H} \rabs \nonumber \\
& & = \log\labs \Iden_{N} + \rho^{1-\alpha}\mathbf{H}_{ii}\mathbf{H}_{ii}^{H} \rabs + \oone. \label{eq:lmouterthree7}
\end{eqnarray}
Using (\ref{eq:lmouterthree6}) and (\ref{eq:lmouterthree7}), for large $\rho$, the sum rate bound in (\ref{eq:lmouterthree1}) reduces to
\begin{eqnarray}
& R_{1} + 2\displaystyle\sum_{i=2}^{K-1} R_{i} + R_{K} &\leq \log\labs \Iden_{N} + \rho^{\alpha}\displaystyle\sum_{j=2}^{K}\Honej\Honej^{H} + \rho^{1-\alpha}\Honeone \lcb \Hkone^{H}\Hkone \rcb^{-1}\Honeone^{H}\rabs + \nonumber \\
& & \displaystyle\sum_{i=2}^{K-1}\log\labs \Iden_{N} + \rho^{1-\alpha}\mathbf{H}_{ii}\mathbf{H}_{ii}^{H} \rabs +  \displaystyle\sum_{i=2}^{K-1} \log\labs \Iden_{N} + \rho^{1-\alpha}\mathbf{H}_{ii}\mathbf{H}_{ii}^{H} \rabs +  \nonumber \\
& & \log\labs \Iden_{N} + \rho^{\alpha}\displaystyle\sum_{j=1}^{K-1}\HKj\HKj^{H} + \rho^{1-\alpha}\Hkk\lcb \Honek^{H}\Honek \rcb^{-1}\Hkk^{H}\rabs + \oone. \label{eq:lmouterthree8}
\end{eqnarray}
The outer bound in (\ref{eq:lmouterthree8}) is further simplified based on the range of $\alpha$ as follows.\\
\emph{Weak Interference Case $\lb0 \leq \alpha \leq \frac{1}{2}\rb$}: \\
In this case, using Lemma \ref{lemmaused1}, the sum rate bound in (\ref{eq:lmouterthree8}) simplifies to
\begin{eqnarray}
& & R_{1} + 2\displaystyle\sum_{i=2}^{K-1} R_{i} + R_{K} \nonumber \\
& & \leq M(1-\alpha)\log\rho + \min\lcb \min \lcb N, (K-1)M \rcb, N-M \rcb\alpha\log\rho + \displaystyle\sum_{i=2}^{K-1}M(1-\alpha)\log\rho +  \nonumber \\
& & \quad \displaystyle\sum_{i=2}^{K-1}M(1-\alpha)\log\rho + M(1-\alpha)\log\rho + \min\lcb \min \lcb N, (K-1)M \rcb, N-M \rcb\alpha\log\rho + \oone \nonumber \\
& & = \lsqb 2\min\lcb \min \lb N, (K-1)M \rb, N-M \rcb\alpha + 2(K-1)M(1-\alpha)\rsqb\log\rho + \oone. \label{eq:lmouterthree9}
\end{eqnarray}
Hence, in the symmetric case,
\begin{eqnarray}
& 2(K-1)R_{i} &\leq 2(K-1)M(1-\alpha)\log\rho + 2\lb  N-M \rb \alpha \log \rho + \oone. \label{eq:lmouterthree10}
\end{eqnarray}
Thus, in the weak interference case, the per user GDOF is upper bounded as
\begin{equation}
d(\alpha) \leq M(1-\alpha) + \frac{1}{K-1}\lb N-M\rb   \alpha. \label{eq:lmouterthree11}
\end{equation}
\emph{Moderate Interference Case $\lb\frac{1}{2} \leq \alpha \leq 1\rb$}:\\
In this case, using Lemma \ref{lemmaused1}, (\ref{eq:lmouterthree8}) simplifies to
\begin{eqnarray}
& & R_{1} + 2\displaystyle\sum_{i=2}^{K-1}R_{i} + R_{K} \nonumber \\
& & \leq \lsqb \alpha\min\lcb N, (K-1)M\rcb + \min\lcb M, N - \min\lcb N, (K-1)M\rcb\rcb(1-\alpha) + \displaystyle\sum_{i=2}^{K-1}M(1-\alpha)\right. \nonumber \\
& & \: \left. + \displaystyle\sum_{i=2}^{K-1}M(1-\alpha) + \alpha\min\lcb N, (K-1)M\rcb + \min\lcb M, N - \min\lcb N, (K-1)M\rcb\rcb(1-\alpha)\rsqb\log\rho + \oone \nonumber \\
& & = 2\alpha\min\lcb N, (K-1)M\rcb\log\rho + 2\min\lcb M, N - \min\lcb N, (K-1)M\rcb\rcb(1-\alpha)\log\rho\nonumber \\
& & \: \: + 2(K-2)M(1-\alpha)\log\rho + \oone. \nonumber
\end{eqnarray}
Hence, in the symmetric case,
\begin{eqnarray}
& 2(K-1)R_{i} & \leq 2\alpha(K-1)M\log\rho + 2\min\lcb M, N - (K-1)M\rcb(1-\alpha) \log\rho\nonumber \\
& & \: \: + 2(K-2)M(1-\alpha)\log\rho + \oone. \label{eq:lmouterthree12}
\end{eqnarray}
As $\frac{N}{M} < K \leq \frac{N}{M}+1$, $\min\lcb M, N - (K-1)M\rcb=N - (K-1)M$, and hence, the per user GDOF in the moderate interference regime is upper bounded as given below:
\begin{eqnarray}
& d(\alpha) & \leq M\alpha + \frac{1}{K-1}(N-M)(1-\alpha). \label{eq:lmouterthree13}
\end{eqnarray}
\emph{High Interference Case $\lb \alpha \geq 1\rb$}:\\
In this case, it can be shown that the sum rate bound in (\ref{eq:lmouterthree8}) leads to $d(\alpha) \leq \alpha M$, which exceeds the interference free GDOF. Hence, the upper bound reduces to $d(\alpha) \leq M$.

Finally, combining (\ref{eq:lmouterthree11}) and (\ref{eq:lmouterthree13}) results in Lemma \ref{lemma-outer3}.
\subsection{Proof of Theorem \ref{th:compareouter}}\label{sec:th-combineouter}
In the initial part of the proof, the outer bound in Lemma \ref{lemma-outer1} is simplified. Then, in specific cases, the performance of the outer bound is characterized as a function of $K$, $M$, $N$ and $\alpha$.\\
\textit{Weak $(0 \leq \alpha \leq \frac{1}{2})$ and moderate $(\frac{1}{2} \leq \alpha \leq 1)$ interference regime:}\\
When $M \leq N$, for a specific $L_1$ and $L_2$ $( 0 < L_1 + L_2 \leq K)$, the outer bound in Lemma \ref{lemma-outer1} is of the following form:
\begin{equation}
d(\alpha) \leq \frac{1}{L}\lsqb L_1M + \min\lcb r, L_1(N-M)\rcb\alpha + L_r + \min\lcb r, L_2N-L_r \rcb(1-\alpha)\rsqb, \label{eq:minouter8}
\end{equation}
where $r \triangleq \min\lcb L_2M, L_1N\rcb$ and $L_r \triangleq L_2M - r$. The RHS in \eqref{eq:minouter8} is simplified under the following cases.\\
\textbf{Case 1:} When $\min\lcb L_2M, L_1N\rcb = L_2M$, we have
\begin{eqnarray}
& & \frac{L_2}{L_1} \leq \frac{N}{M}. \label{eq:minouter9}
\end{eqnarray}
Under this condition, (\ref{eq:minouter8}) becomes
\begin{eqnarray}
& d(\alpha) & \leq \frac{1}{L}\lsqb L_1M + \min\lcb L_2M, L_1(N-M)\rcb\alpha + \min\lcb L_2M, L_2N\rcb(1-\alpha)\rsqb \nonumber \\
& & = \frac{1}{L}\lsqb LM + \min\lcb L_2M, L_1(N-M) \rcb\alpha - L_2M\alpha\rsqb. \label{eq:minouter10}
\end{eqnarray}
Equation (\ref{eq:minouter10}) is simplified under the following cases:\\
\textbf{Case 1(a):} When $\min\lcb L_2M, L_1(N-M)\rcb = L_1(N-M)$, we have 
\begin{eqnarray}
 \frac{L_2}{L_1} \geq \frac{N}{M}-1. \label{eq:minouter11}
\end{eqnarray}
Combining this with (\ref{eq:minouter9}), we have 
\begin{equation}
\frac{N}{M} - 1 \leq \frac{L_2}{L_1} \leq \frac{N}{M}. \label{eq:minouter12}
\end{equation}
Under this condition, (\ref{eq:minouter10}) becomes
\begin{equation}
d(\alpha) \leq M(1-\alpha) + \frac{L_1}{L}N\alpha. \label{eq:minouter13}
\end{equation}
\textbf{Case 1(b):} When $\min\lcb L_2M, L_1(N-M)\rcb = L_2M$, then (\ref{eq:minouter10}) becomes
\begin{equation}
d(\alpha) \leq M. \label{eq:minouter14}
\end{equation}
This case is not useful, as the RHS is equal to the interference free GDOF.\\
\textbf{Case 2:} When $\min\lcb L_2M, L_1N\rcb = L_1N$, we have
\begin{equation}
\frac{L_2}{L_1} \geq \frac{N}{M}. \label{eq:minouter15a}
\end{equation}
In this case, (\ref{eq:minouter8}) becomes
\begin{eqnarray}
& d(\alpha) & \leq \frac{1}{L}\lsqb L_1M + \min\lcb L_1N, L_1(N-M)\rcb\alpha + L_2M-L_1N  \right. \nonumber \\
& & \qquad \qquad \left. + \min\lcb L_1N, L_2N-L_2M+L_1N\rcb(1-\alpha)\rsqb \nonumber \\
& & = M - \frac{L_1}{L}M\alpha. \label{eq:minouter15}
\end{eqnarray}
\textit{High interference regime $(\alpha \geq 1)$:}\\
In the high interference regime, for a specific $L_1$ and $L_2$ $( 0 < L_1 + L_2 \leq K)$, Lemma \ref{lemma-outer1} is of the following form:
\begin{eqnarray}
& d(\alpha) & \leq \frac{1}{L}\lsqb r\alpha + \min\lcb L_1M, L_1N-r\rcb + L_r\rsqb. \label{eq:minouthigh1}
\end{eqnarray}
The above equation is simplified under the following cases.\\
\textbf{Case 1:} When $\min\lcb L_2M, L_1N\rcb = L_2M$, we have
\begin{equation}
\frac{L_2}{L_1} \leq \frac{N}{M}, \label{eq:minouthigh2}
\end{equation}
and (\ref{eq:minouthigh1}) becomes
\begin{equation}
d(\alpha) \leq \frac{1}{L}\lsqb L_2M\alpha + \min\lcb L_1M, L_1N - L_2M\rcb\rsqb. \label{eq:minouthigh3a}
\end{equation}
The above equation is further simplified under following cases.\\
\textbf{Case 1(a):} When $\min\lcb L_1M, L_1N - L_2M\rcb = L_1N - L_2M$, then the following condition is obtained:
\begin{equation}
\frac{L_2}{L_1} \geq \frac{N}{M}-1.  \label{eq:minouthigh3}
\end{equation}
In this case, (\ref{eq:minouthigh3a}) becomes
\begin{eqnarray}
& d(\alpha) & \leq \frac{1}{L}\lsqb L_2M\alpha + L_1N - L_2M\rsqb \nonumber \\
& & = N + \frac{L_2}{L}\lsqb M(\alpha-1) - N\rsqb. \label{eq:minouthigh4}
\end{eqnarray}
\textbf{Case 1(b):} When $\min\lcb L_1M, L_1N - L_2M\rcb = L_1M$,  (\ref{eq:minouthigh3a}) becomes
\begin{equation}
d(\alpha) \leq \frac{L_2\alpha+L_1}{L}M. \label{eq:minouthigh5}
\end{equation}
As $\alpha\geq 1$, this case is not useful as the RHS in the above equation exceeds the interference free GDOF.\\
\textbf{Case 2:} When $\min\lcb L_2M, L_1N\rcb = L_1N$,  
\begin{equation}
\frac{L_2}{L_1} \geq \frac{N}{M}, \label{eq:minouthigh6}
\end{equation}
and (\ref{eq:minouthigh1}) becomes
\begin{eqnarray}
& d(\alpha) &\leq \frac{1}{L}\lsqb L_1N\alpha + L_2M-L_1N \rsqb \nonumber \\
& & = N(\alpha-1) + \frac{L_2}{L}\lsqb M - N(\alpha-1)\rsqb. \label{eq:minouthigh7}
\end{eqnarray}

Due to the minimization involved in Lemma \ref{lemma-outer1}, it is not possible to characterize the performance of the outer bounds in all the cases. However, a tractable solution exists in the following cases.\\
\textbf{Case a $(K \geq N+M)$:} It is required to determine the value of $L_1$ and $L_2$, such that the outer bound in \ref{lemma-outer1} is minimized. First, the weak and moderate interference regimes are considered, followed by the high interference regime in the later part of the proof.

The RHS in (\ref{eq:minouter13}) is minimized when $\frac{L_1}{L}$ is minimized, under the constraint in (\ref{eq:minouter12}). In other words, $\frac{L}{L_1}$ or $\frac{L_2}{L_1}$ is required to be maximized to minimize the RHS in (\ref{eq:minouter13}). From (\ref{eq:minouter12}), it can be noticed that $\frac{L_2}{L_1}$ is maximized when $\frac{L_2}{L_1} = \frac{N}{M}$. As $K \geq M+N$, it is always possible to choose $L_1=M$ and $L_2=N$, and (\ref{eq:minouter13}) becomes
\begin{equation}
d(\alpha) \leq M - \frac{M^2}{M+N}\alpha. \label{eq:minouter16}
\end{equation}
The RHS in (\ref{eq:minouter15}) minimized by choosing $\frac{L_1}{L}$ as large as possible, under the constraint in (\ref{eq:minouter15a}). Maximizing $\frac{L_1}{L}$ is the same as minimizing $\frac{L_2}{L_1}$. By choosing $L_1=M$ and $L_2=N$, the RHS in (\ref{eq:minouter15}) is minimized, and the outer bound reduces to following from:
\begin{equation}
d(\alpha) \leq M - \frac{M^2}{M+N}\alpha,  \label{eq:minouter17a}
\end{equation}
which is same as that in (\ref{eq:minouter16}). Hence, the outer bound in \ref{lemma-outer1} is minimized by choosing $L_1=M$ and $L_2=N$ and is given by (\ref{eq:minouter17a}).

Now, the outer bounds are compared in the following interference regimes. As $K \geq M+N$, the condition $\frac{N}{M}< K \leq \frac{N}{M}+1$ is not satisfied, and hence, the outer bound in Lemma 3 is not applicable. \\
\textit{Weak interference regime $(0 \leq \alpha \leq \frac{1}{2})$:} In the weak interference regime, the outer bound on the per user GDOF in Lemma 2 reduces to:
\begin{equation}
d(\alpha) \leq M(1-\alpha) + (N-M)\alpha. \label{eq:minouter18}
\end{equation}
The outer bound in (\ref{eq:minouter18}) exceeds that in (\ref{eq:minouter17a}), when
\begin{eqnarray}
& & M - \frac{M^2}{M+N}\alpha < M(1-\alpha) + (N-M)\alpha, \nonumber \\
\text{or } & & MN < N^{2} - M^{2}, \label{eq:minouter19}
\end{eqnarray}
which results in (\ref{eq:minouter1}).\\
\textit{Moderate interference regime $(\frac{1}{2} \leq \alpha \leq 1)$:} In the moderate interference regime, the outer bound in Lemma 2 reduces to
\begin{equation}
d(\alpha) \leq N\alpha. \label{eq:minouter20} 
\end{equation}
The outer bound in (\ref{eq:minouter17a}) is active as compared to (\ref{eq:minouter20}), when
\begin{eqnarray}
& & M - \frac{M^2}{M+N}\alpha < N\alpha, \nonumber \\
\text{or } & &  \alpha > \frac{M(M+N)}{N(M+N)+M^2}. \label{eq:minouter21}
\end{eqnarray}
Note that $\frac{M(M+N)}{N(M+N)+M^2} \leq 1$. The outer bound in (\ref{eq:minouter17a}) is active for the entire moderate interference regime, if 
\begin{eqnarray}
& & \frac{M(M+N)}{N(M+N)+M^2} < \frac{1}{2}, \nonumber \\
 \text{or } & & MN < N^2 - M^2. \label{eq:minouter22}
 \end{eqnarray}
Otherwise, when $\frac{1}{2} \leq \alpha \leq \frac{M(M+N)}{N(M+N)+M^2}$, the outer bound in (\ref{eq:minouter20}) is active, and when $\frac{M(M+N)}{N(M+N)+M^2} < \alpha \leq 1$, the outer bound in (\ref{eq:minouter17a}) is active. Combining these results in (\ref{eq:minouter2a}) and (\ref{eq:minouter2}).\\
\textit{High interference regime $(\alpha \geq 1)$:} The RHS in (\ref{eq:minouthigh4}) and (\ref{eq:minouthigh7}) need to be  minimized in cases $1$ and $2$ discussed in a previous page, respectively. Consider the minimization of (\ref{eq:minouthigh4}) first. When $M(\alpha-1) - N \geq 0$, the RHS in (\ref{eq:minouthigh4}) exceeds the interference free GDOF and this case is not useful. When $M(\alpha-1) - N < 0$, the RHS in (\ref{eq:minouthigh4}) is minimized by choosing $\frac{L_2}{L}$ as large as possible. From (\ref{eq:minouthigh2}), it can be noticed that (\ref{eq:minouthigh4}) is minimized by choosing $L_1=M$ and $L_2=N$, and (\ref{eq:minouthigh4}) becomes
\begin{equation}
d(\alpha) \leq \frac{MN\alpha}{M+N}.  \label{eq:minouter23}
\end{equation}
Now, the RHS in (\ref{eq:minouthigh7}) is required to be minimized. When $M - N(\alpha-1) < 0$, $\frac{L_2}{L}$ should be chosen as large as possible. By choosing $L_1=0$ and $L_2 > 0 $, $\frac{L_2}{L}$ is maximized, and (\ref{eq:minouthigh7}) becomes
\begin{equation}
d(\alpha) \leq M, \label{eq:minouter23a}
\end{equation}
which is not useful. When $M - N(\alpha-1) \geq 0$, $\frac{L_2}{L}$ or $\frac{L_2}{L_1}$ should be as low as possible. From (\ref{eq:minouthigh6}), it can be noticed that (\ref{eq:minouthigh7}) is minimized by choosing $L_1=M$ and $L_2=N$, and it reduces to
\begin{equation}
d(\alpha) \leq \frac{MN\alpha}{M+N}. \label{eq:minouter24}
\end{equation}
It can be noticed that in both the cases, the RHS are the same. But, (\ref{eq:minouter23}) and (\ref{eq:minouter24}) are active when $1 \leq \alpha \leq \frac{M+N}{M}$ and $1 \leq \alpha \leq \frac{M+N}{N}$, respectively. As $M \leq N$ and the RHS in  (\ref{eq:minouter24}) exceeds  the interference free GDOF per user, i.e., $M$, when $\alpha > \frac{M+N}{N}$, it is not required to consider the case $\frac{M+N}{N} < \alpha \leq \frac{M+N}{M}$. The outer bound in Lemmas 2 and 3 exceed the interference free GDOF in this case as mentioned in the proofs of these lemmas, and hence, these bounds are not taken into account in the high interference regime. Finally, taking the minimum of (\ref{eq:minouter24}) and $M$ results in (\ref{eq:minouter3}).\\
\textbf{Case b $(\frac{N}{M}+1 < K  < M + N \text{, where } \frac{N}{M} \text{ is an integer})$:} In this case, (\ref{eq:minouter13}) and  (\ref{eq:minouter15}) are minimized by choosing $L_1 =1$  and $L_2 = \frac{N}{M}$. This can be shown by following a similar procedure as in the previous case. Hence, the outer bound in \ref{lemma-outer1} in weak/moderate interference regime and high interference regime is of the same form as given in the first case of the Theorem.\\
\textbf{Case c $(\frac{N}{M} < K \leq \frac{N}{M}+1)$:} In this case, (\ref{eq:minouter13}) is minimized by choosing $L_1=1$ and $L_2 = K-1$. It is easy to verify that this choice of $L_1$ and $L_2$ maximizes $\frac{L_1}{L}$ and also satisfies the constraint in (\ref{eq:minouter12}). Hence, (\ref{eq:minouter13}) becomes
\begin{equation}
d(\alpha) \leq M(1-\alpha) + \frac{N\alpha}{K}.  \label{eq:minouter24a}
\end{equation}
In this case, the condition $\frac{L_2}{L_1} \geq \frac{N}{M}$ implies that (\ref{eq:minouter15}) arises only when $(K-1)M=N$. The RHS in (\ref{eq:minouter15}) is minimized by choosing $L_1=1$ and $L_2=K-1$, and (\ref{eq:minouter15}) becomes
\begin{eqnarray}
& d(\alpha) & \leq M - \frac{M\alpha}{K}.  \label{eq:minouter25}
\end{eqnarray}
With some algebraic manipulation, it can be shown that (\ref{eq:minouter25}) reduces to (\ref{eq:minouter24a}) when $(K-1)M=N$. Hence, choosing $L_1=1$ and $L_2 = K-1$ minimizes the outer bound in \ref{lemma-outer1}, and it is given by~(\ref{eq:minouter24a}).
 
The following interference regimes are considered for comparison with other outer bounds.\\
\textit{Weak interference regime $(0 \leq \alpha \leq \frac{1}{2})$:} In the weak interference regime, the outer bound in Lemma 2 reduces to:
\begin{equation}
d(\alpha) \leq M(1-\alpha) + (N-M)\alpha. \label{eq:minouter34b}
\end{equation}
Comparing (\ref{eq:minouter34b}) with (\ref{eq:minouter24a}), the following condition is obtained:
\begin{eqnarray}
& & M(1-\alpha) + \frac{N\alpha}{K} \leq M(1-\alpha) + (N-M)\alpha, \nonumber \\
\text{or } & & 2M \leq N, \label{eq:minouter35}
\end{eqnarray}
which is always satisfied in this case. Hence, the outer bound in (\ref{eq:minouter34b}) is loose compared to the outer bound in (\ref{eq:minouter24a}). Now, the outer bound in (\ref{eq:minouter24a}) is compared with the outer bound in Lemma 3.
\begin{eqnarray}
& & M(1-\alpha) +  \frac{1}{K-1}(N-M)\alpha \leq M(1-\alpha) + \frac{N\alpha}{K}, \nonumber \\
\text{or } & & KM \leq N, \label{eq:minouter35b}
\end{eqnarray}
which is satisfied in this case. Hence, Lemma 3 is active in the entire weak interference regime, which results in (\ref{eq:minouter4}).\\
\textit{Moderate interference regime $(\frac{1}{2} \leq \alpha \leq 1)$:} In the moderate interference regime, it is easy to see that the outer bound in Lemma 2 is loose compared to the outer bound in \ref{lemma-outer1}. Lemma 3 is tighter than the outer bound in (\ref{eq:minouter24a}), when
\begin{eqnarray}
& & M\alpha + \frac{1}{K-1}(N-M)(1-\alpha) \leq M(1-\alpha) + \frac{N\alpha}{K}, \nonumber \\
\text{or } & & \alpha \leq \frac{K}{2K-1}. \label{eq:minouter36}
\end{eqnarray}
Consequently, Lemma \ref{lemma-outer1} is active when $\frac{K}{2K-1} < \alpha \leq 1$. Taking the minimum of these two outer bounds results in (\ref{eq:minouter5}).\\
\textit{High interference regime $(\alpha \geq 1)$:} By employing the similar procedure as followed in the weak/moderate interference regime, it can be shown that the outer bound in Lemma \ref{lemma-outer1} is minimized by choosing $L_1=1$ and $L_2 = K-1$. Also, the outer bound in Lemma \ref{lemma-outer2} and \ref{lemma-outer3} are loose compared to Lemma \ref{lemma-outer1}, as they exceed the interference free GDOF per user, i.e., $M$. In this case, Lemma \ref{lemma-outer1} reduces to
\begin{equation}
d(\alpha) \leq \frac{1}{K}\lsqb N + (K-1)M(\alpha-1)\rsqb. \label{eq:minouter37}
\end{equation}
Finally, taking the minimum of (\ref{eq:minouter37}) and $M$ results in (\ref{eq:minouter6}), which completes the proof.\\
\emph{Remark:} There are a few other cases where it is possible to exactly characterize the performance of these outer bounds. For example, when $K < M + N < aK$, and integer $a \geq 2$, $\frac{M}{a}$ and $\frac{N}{a}$ are integers, choosing $L_1 = \frac{M}{a}$ and $L_2 = \frac{N}{a}$ minimizes the outer bound in Lemma~\ref{lemma-outer1}, and the outer bound is the same as given in the first case of the Theorem. In this case, $\frac{N}{M}$ need not be an integer. 

\subsection{Proof of Theorem \ref{th:theorem1_intnoise}}\label{sec:th-intasnoise}
When interference is treated as noise, the rate achieved by the individual user in case of MIMO GIC is
\begin{eqnarray}
& R_{j} & \geq \log \labs \Iden_{N} + \lcb \Iden_{N} + \rho^{\alpha}\sumneqmod\Hnew_{ji}\Pnew_{i}\Hnew_{ji}^{H}\rcb^{-1}\rho\Hjj\Pnew_{j}\Hjj^{H}\rabs, \nonumber \\
& & = \log\labs \Iden_{N} + \rho\Hjj\Pnew_{j}\Hjj^{H} + \rho^{\alpha}\sumneqmod\Hji\Pnew_{i}\Hji^{H}\rabs - \log\labs \Iden_{N} + \rho^{\alpha}\sumneqmod\Hji\Pnew_{i}\Hji^{H}\rabs, \nonumber \\
& & = r\log\rho + \min\lcb r^{'}, N- r\rcb\alpha\log\rho - \alpha r^{'}\log\rho + \oone, \label{eq:inner-intasnoise1}
\end{eqnarray}
where $r = \text{rank}(\Hjj\Pnew_{j}\Hjj^{H})$ and $r^{'} = \text{rank}\lb \sumneqmod\Hji\Pnew_{i}\Hji^{H} \rb$. 
The last equation is obtained using Lemma \ref{lemmaused1}. As the input covariance matrix is full rank, (\ref{eq:inner-intasnoise1}) becomes:
\begin{eqnarray}
& R_{j} & \geq M\log\rho + \min\lcb \min\lcb(K-1)M,N \rcb, N-M\rcb\alpha\log\rho - \min\lcb(K-1)M,N\rcb\alpha\log\rho + \oone. \label{eq:inner-intasnoise1a}
\end{eqnarray}
The achievable rate in (\ref{eq:inner-intasnoise1a}) can be further simplified by considering the following cases:
\subsubsection{Case 1 $\lb \frac{N}{M} < K \leq \frac{N}{M} + 1\rb$}
Here, $N-M \leq (K-1)M \leq N$, and hence, the rate in (\ref{eq:inner-intasnoise1a}) reduces to
\begin{eqnarray}
& R_{j} & \geq  M\log\rho + (N-M)\alpha\log\rho -  (K-1)M\alpha\log\rho + \oone. \label{eq:inner-intasnoise2}
\end{eqnarray}
Thus, the per user GDOF that can be achieved in this case is
\begin{eqnarray}
& d(\alpha) & \geq M + (N-M)\alpha -  (K-1)M\alpha, \nonumber \\
& & = M + (N - KM)\alpha.\label{eq:inner-intasnoise4}
\end{eqnarray}
\subsubsection{Case 2 $\lb K > \frac{N}{M} + 1\rb$}
Here, $\min\lcb (K-1)M,N\rcb = N$, and hence, the rate in (\ref{eq:inner-intasnoise1a}) becomes:
\begin{eqnarray}
& R_{j} & \geq M\log\rho + \min\lcb N , N-M\rcb\alpha\log\rho - N\alpha\log\rho + \oone, \nonumber \\
& &= M(1-\alpha)\log\rho + \oone. \label{eq:inner-intasnoise6}
\end{eqnarray}
The achievable per user GDOF in this case is:
\begin{equation}
d(\alpha) \geq M(1-\alpha). \label{eq:inner-intasnoise7}
\end{equation}
By combining (\ref{eq:inner-intasnoise4}) and (\ref{eq:inner-intasnoise7}) results in Theorem \ref{th:theorem1_intnoise}.
\subsection{Proof of Theorem \ref{th-highint1}}\label{sec:appendhigint}
Due to the symmetry of the problem, it is sufficient to consider the GDOF achieved by any particular user, say user $1$. Also consider the user subset $S \subseteq \{ 2, \ldots, K\}$, and let $S' \triangleq S \cup \{1\}$, i.e., $S$ is a subset of users excluding user $1$, while $S'$ always includes user $1$. The number of users in the set $S$ is denoted by $|S| \leq K-1$ and number of users in $S'$ is $|S| + 1$. The achievable GDOF is obtained by considering two different cases:
\subsubsection{Case 1 $\lb \frac{N}{M} < K \leq \frac{N}{M} + 1 \rb$}
Now, using the MAC channel formed at the receiver of user $1$ with the signals from the user set $S$, the achievable sum rate is bounded as: 
\begin{eqnarray}
& \sums R_{j} & \leq \log | \Iden_{N}  + \rho^{\alpha}\sums\Honej\Pj\Honej^{H} |, \nonumber \\
& & = \min \lcb |S|M, N \rcb\alpha\log\rho + \oone, \nonumber \\
& \text{or } R_{j} & \leq  M\alpha\log\rho + \oone  \label{eq:myinnerstrong1}
 \end{eqnarray}
In the above equation, it can be noticed that $|S|_{\max} = K-1$ and hence, $\min \lcb |S|M, N \rcb = |S|M$. Similarly, using the MAC channel formed at the receiver of user $1$ with the signals from the user set $S'$, the achievable sum rate is bounded as:
\begin{eqnarray}
& \displaystyle\sum_{j \in S'} R_{j} & \leq \log | \Iden_{N} + \rho\Honeone\Pnew_{1}\Honeone^{H} + \rho^{\alpha}\displaystyle\sum_{j\in S}\Honej \Pj\Honej^{H}|, \nonumber \\
& & = \min(|S|M,N)\alpha\log\rho + \min\lcb M, N -  \min(|S|M,N)\rcb\log\rho + \oone, \nonumber \\
& & = |S|M\alpha\log\rho + \min\lcb M, N - |S|M\rcb \log\rho + \oone.  \label{eq:myinnerstrong2}
\end{eqnarray}
Above equation is obtained by using Lemma \ref{lemmaused1} and simplified further based on following conditions.

When $\min\lcb M, N - |S|M\rcb = N - |S|M$, then following condition is obtained:
\begin{eqnarray}
& & N - |S|M \leq M, \nonumber \\
& & \text{or } \frac{N}{M} \leq 1 + |S| \leq  \frac{N}{M} + 1, \qquad (\because K \leq \frac{N}{M} + 1) \label{eq:myinnerstrong3}
\end{eqnarray}
When the condition in (\ref{eq:myinnerstrong3}) is satisfied, (\ref{eq:myinnerstrong2}) reduces to following form:
\begin{eqnarray}
& \sum_{j \in S'} R_{j} & \leq |S|M\alpha\log\rho + (N - |S|M) \log\rho + \oone, \nonumber \\
&  \text{or }R_{j} &\leq \frac{|S|M(\alpha -1) + N}{1 + |S|} \log\rho + \oone. \label{eq:myinnerstrong4}
\end{eqnarray}
The right hand side above is minimized when $|S| = |S|_{\text{max}} = K-1$; and recall that $K \leq \frac{N}{M} + 1$. Hence, (\ref{eq:myinnerstrong4}) becomes:
\begin{eqnarray}
 R_{j} \leq \frac{(K-1)M(\alpha -1) + N}{K} \log\rho + \oone. \label{eq:myinnerstrong5}
\end{eqnarray}
When $\min\lcb M, N - |S|M\rcb = M$, then following condition is obtained:
\begin{eqnarray}
& & M \leq N - |S|M, \nonumber \\
& & \text{or } 1 + |S| \leq \frac{N}{M}. \label{eq:myinnerstrong6}
\end{eqnarray}
When above condition is satisfied, (\ref{eq:myinnerstrong2}) becomes:
\begin{eqnarray}
& \sum_{j \in S'} R_{j} & \leq |S|M\alpha\log\rho + M\log\rho + \oone, \nonumber \\
& \text{or } R_{j} & \leq \frac{|S|\alpha + 1}{|S|+1}M\log\rho + \oone. \label{eq:myinnerstrong7}
\end{eqnarray}
The term in the right hand side of the above equation is minimized when $|S|=0$ and it results in following equation.
\begin{eqnarray}
R_{j} \leq M\log\rho + \oone. \label{eq:myinnerstrong7b}
\end{eqnarray}

The achievable rate is obtained by taking minimum of (\ref{eq:myinnerstrong1}), (\ref{eq:myinnerstrong5}) and (\ref{eq:myinnerstrong7b}). It can be observed that (\ref{eq:myinnerstrong1}) becomes superfluous given (\ref{eq:myinnerstrong7b}). Finally, taking minimum of (\ref{eq:myinnerstrong5}) and (\ref{eq:myinnerstrong7b}) results in case 1 of Theorem \ref{th-highint1}.

\subsubsection{Case  2 $\lb K > \frac{N}{M} + 1\rb$}
Now, using the MAC channel formed at the receiver of user $1$ with the signals from the user set $S$, the achievable sum rate is: 
\begin{eqnarray}
& \sums R_{j} & \leq \log | \Iden_{N}  + \rho^{\alpha}\sums\Honej\Pj\Honej^{H} |, \nonumber \\
& & = \min \lcb |S|M, N \rcb\alpha\log\rho + \oone. \label{eq:innerstrongint2}
 \end{eqnarray}
If $\min \lcb |S|M, N \rcb = |S|M$, then the above equation simplifies to:
\begin{equation}
R_{j} \leq  M\alpha\log\rho + \oone \label{eq:innerstrongint3}
\end{equation}
If $\min \lcb |S|M, N \rcb = N$, then (\ref{eq:innerstrongint2}) simplifies to:
\begin{equation}
R_{j} \leq \frac{N\alpha}{|S|}\log\rho + \oone,  \label{eq:innerstrongint4} \text{ where } |S| \leq K-1.
\end{equation}
Similarly, using the MAC channel formed at the receiver of user $1$ with the signals from the user set $S'$, the achievable sum rate is bounded as:
\begin{eqnarray}
 \sum_{j \in S'} R_{j} \leq \log | \Iden_{N} + \rho\Honeone\Pnew_{1}\Honeone^{H} + \rho^{\alpha}\displaystyle\sum_{j\in S}\Honej \Pj\Honej^{H}|. \label{eq:innerstrongint5}
\end{eqnarray}
Again, using Lemma \ref{lemmaused1}, above simplifies to:
 \begin{eqnarray}
 \sum_{j \in S'} R_j \leq  \min\{|S|M,N\}\alpha\log\rho + \min\{M, N - \min(|S|M,N)\}\log\rho + \oone, \label{eq:innerstrongint6}
\end{eqnarray}
The above equation is simplified further under following cases.\\
\textbf{Case a}:\\
If $\min\{|S|M,N\} = |S|M$, (\ref{eq:inner-intasnoise6}) becomes:
\begin{eqnarray}
 \sum_{j \in S'} R_j \leq  |S|M\alpha\log\rho + \min\{M, N - |S|M\}\log\rho + \oone, \label{eq:innerstrongint8}
\end{eqnarray}
If $\min\{M, N - |S|M\} = N - |S|M$, then we have the following condition:
\begin{eqnarray}
& & N - |S|M \leq M, \nonumber \\
\text{or } & &  \frac{N}{M} - 1 \leq |S|. \label{eq:innerstrongint9}
\end{eqnarray}
Hence, we obtain the following condition:
\begin{eqnarray}
& & \frac{N}{M} - 1 \leq |S| \leq \frac{N}{M}. \label{eq:innerstrongint10}
\end{eqnarray}
When the condition in (\ref{eq:innerstrongint10}) is satisfied, (\ref{eq:innerstrongint6}) becomes:
\begin{eqnarray}
 &  \sum_{j \in S'} R_j & \leq  |S|M\alpha\log\rho + (N - |S|M)\log\rho + \oone, \nonumber \\
& & = \lsqb |S|M(\alpha - 1) + N \rsqb \log\rho + \oone, \nonumber \\
& \text{or } R_{j} & \leq \frac{|S|M(\alpha - 1) + N }{1 + |S|}\log\rho + \oone. \label{eq:innerstrongint11}
\end{eqnarray}
The value of  $|S|$ is required to be chosen such that (\ref{eq:innerstrongint11}) is minimized, and the condition in (\ref{eq:innerstrongint10}) is also satisfied. Above equation is simplified in the later part of derivation.

When $\min\{M, N - |S|M\} = M$, then following condition is obtained:
\begin{eqnarray}
& & M \leq N - |S|M, \nonumber \\
& & \text{or } |S| \leq \frac{N}{M} - 1. \label{eq:innerstrongint12}
\end{eqnarray}
Hence, following condition is deduced:
\begin{eqnarray}
|S| \leq \frac{N}{M} - 1 < \frac{N}{M}. \label{eq:innerstrongint13}
\end{eqnarray}
Under this condition, (\ref{eq:innerstrongint8}) becomes:
\begin{eqnarray}
& \sum_{j \in S'} R_j & \leq  |S|M\alpha\log\rho + M \log\rho + \oone, \nonumber \\
& & = (|S|\alpha + 1)M\log\rho + \oone, \nonumber \\
\text{or } & R_{j} &\leq \frac{|S|\alpha + 1}{|S| + 1}M\log\rho + \oone. \label{eq:innerstrongint14}
\end{eqnarray}
The above needs to be minimized for $|S| \leq \frac{N}{M} - 1$, and $|S| = 0$ minimizes it. This results in 
\begin{eqnarray}
R_{j} \leq M\log\rho + \oone. \label{eq:innerstrongint15}
\end{eqnarray}

\textbf{Case b}:\\
When $\min\lcb|S|M, N\rcb = N$, we have $|S|M \geq N$ and (\ref{eq:innerstrongint6}) reduces to:
\begin{eqnarray}
& \sum_{j \in S'} R_j & \leq N\alpha\log\rho + \oone, \nonumber \\
& \text{or } R_{j} & \leq \frac{N\alpha}{|S| + 1}\log\rho + \oone, \label{eq:innerstrongint17}
\end{eqnarray}
Above equation is minimized when $|S|$ takes the maximum value. As $K > \frac{N}{M} + 1$, the maximum value of $|S|$ is $|S|_{\text{max}} = K -1$. Hence, (\ref{eq:innerstrongint17}) becomes:
\begin{eqnarray}
 R_{j} \leq \frac{N\alpha}{K}\log\rho + \oone \label{eq:innerstrongint18}
\end{eqnarray}
Finally taking minimum of (\ref{eq:innerstrongint3}), (\ref{eq:innerstrongint4}), (\ref{eq:innerstrongint11}), (\ref{eq:innerstrongint15}) and (\ref{eq:innerstrongint18}) the achievable GDOF is obtained as described below.

Given (\ref{eq:innerstrongint15}), the equation in (\ref{eq:innerstrongint3}) becomes superfluous as $\alpha > 1$. Similarly given (\ref{eq:innerstrongint18}), the equation in (\ref{eq:innerstrongint4}) is redundant. It can also be observed that (\ref{eq:innerstrongint11}) is redundant given (\ref{eq:innerstrongint18}). It can be proved as follows:

Let assume that $\frac{N}{M}$ is an integer. Then there are two possible values of $|S|$ which satisfies the condition in (\ref{eq:innerstrongint10}). When $|S| = \frac{N}{M} - 1$, (\ref{eq:innerstrongint11}) reduces to following form.
\begin{eqnarray}
& R_{j} & \leq \frac{\lb \frac{N}{M} -1\rb M(\alpha - 1) + N}{\frac{N}{M}}\log\rho + \oone  \label{eq:innerstrongint19a}
\end{eqnarray}
At $\alpha=1$, the right hand side reduces to $M$ and with increase in $\alpha$, the right hand side in (\ref{eq:innerstrongint19a}) also increases. Hence, given (\ref{eq:innerstrongint15}), (\ref{eq:innerstrongint19a}) is redundant. When $|S| = \frac{N}{M}$, (\ref{eq:innerstrongint11}) becomes
\begin{eqnarray}
R_{j} \leq \frac{N\alpha}{\frac{N}{M} + 1}\log\rho + \oone \label{eq:innerstrongint19b}
\end{eqnarray}
As $K > \frac{N}{M} + 1$, given (\ref{eq:innerstrongint18}), the rate in (\ref{eq:innerstrongint19b}) becomes superfluous. When $\frac{N}{M}$ is not an integer, then there is only one value of $|S|$ i.e., $|S| = \lfloor \frac{N}{M}\rfloor$ satisfies the condition in (\ref{eq:innerstrongint10}). With some algebraic manipulations, it can be shown that (\ref{eq:innerstrongint11}) is redundant given (\ref{eq:innerstrongint18}). Finally by taking minimum of (\ref{eq:innerstrongint15}) and (\ref{eq:innerstrongint18}) results in second part of Theorem  \ref{th-highint1}.
\subsection{Proof of Theorem \ref{th:moderateint1}}\label{sec:appendmoderateint1}
First we calculate the rate obtained due to the private part of the message. As the private message is decoded last, the rate of the private message is obtained by treating all remaining users' private messages as noise. Due to symmetry of the problem, it is sufficient to consider only one particular user. The rate achieved by the private part is
\begin{eqnarray}
& R_{p,j} &\leq \log\labs\Iden_{N} + \lb \Iden_{N} + \sumneq\Hji\Pdashi\Hji^{H}\rb^{-1}\rho^{1-\alpha} \Hjj\Pj\Hjj^{H}\rabs \nonumber \\
& & = M(1-\alpha)\log\rho + \oone \label{eq:innermodrateint1}
\end{eqnarray}
Next, we calculate the rate obtained due to the common part of the message. The following two cases are considered to obtain the achievable GDOF.
\subsubsection{Case 1 $\lb \frac{N}{M} < K \leq \frac{N}{M} + 1\rb$}
For the common part, different subsets of users are considered as in Appendix \ref{sec:appendhigint}. Consider the set $S^{'} \subseteq \{1,2,\ldots,K\}$, where user  1 is always included in the subset. Since, common messages need to be decodable at every receiver, user 1 should be able to decode the other users' common messages as well as its own common message. While decoding the common message, it should treat all other users' private messages as well as its own private message as noise. The common messages form a MAC channel at Receiver 1. Similarly, $K-1$ MAC channel will be formed at every receiver. The achievable rate is the intersection of $K$ such MAC regions. Due to symmetry of the problem, it is sufficient to consider only one specific receiver. The achievable rate due to the signals from $S^{'}$ is:
 \begin{equation}
\displaystyle\sum_{j \in S'} R_{c,j} \leq \log\labs  \Iden_{N} + \lb \Iden_{N} + \displaystyle\sum_{j \in S}\Honej\Pj\Honej^{H} + \rho^{1-\alpha}\Honeone\Pone\Honeone^{H} \rb^{-1}\displaystyle\sum_{j \in S'} P_{c,j}\Honej\Pj\Honej^{H}\rabs \label{eq:innermodrateint2}
\end{equation}
Here
\begin{eqnarray}
P_{c,j} = \left\{ 
\begin{array}{l l}
  \rho^{\alpha}-1 & \quad \mbox{for $ j\neq 1$ }\\
  \rho - \rho^{1-\alpha} & \quad \mbox{for $j=1$}\\ \end{array} \right. \label{eq:innermodrateint3}
\end{eqnarray}
Equation (\ref{eq:innermodrateint2}) becomes:
\begin{eqnarray}
 & \sum_{j \in S'} R_{c,j} &\leq \log|\Iden_{N} + \displaystyle\sum_{j\in S }\Honej\Pj\Honej^{H} + \rho^{1-\alpha}\Honeone\Pone \Honeone^{H} + \lb\rho -\rho^{1-\alpha}\rb\Honeone\Pone\Honeone^{H} \nonumber \\
& & \quad + \lb\rho^{\alpha} -1 \rb \displaystyle\sum_{j\in S}\Honej\Pj\Honej^{H}|- \log\labs \Iden_{N} + \rho^{1-\alpha}\Honeone\Pone\Honeone^{H}\rabs, \nonumber \\
& & = \log|\Iden_{N} + \rho\Honeone\Pone\Honeone^{H} + \rho^{\alpha}\displaystyle\sum_{j\in S }\Honej\Pj\Honej^{H}|- \log\labs \Iden_{N} + \rho^{1-\alpha}\Honeone\Pone\Honeone^{H}\rabs + \oone, \nonumber \\
& & = M\log\rho + \min\lcb\min\lcb N, |S|M \rcb, N-M \rcb\alpha\log\rho - M(1-\alpha)\log\rho + \oone, \nonumber \\
& & =  M\alpha\log\rho + \min\lcb\min\lcb N, |S|M \rcb, N-M \rcb\alpha\log\rho + \oone, \nonumber \\
&  \text{or } \sum_{j \in S'} R_{c,j} & \leq  M\alpha\log\rho + \min\lcb |S|M , N-M \rcb\alpha\log\rho + \oone. \label{eq:innermoderateint4}
\end{eqnarray}
Equation (\ref{eq:innermoderateint4}) is obtained by using Lemma \ref{lemmaused1} and $|S|M \leq (K-1)M \leq N$.
The above equation is simplified under following cases.\\
\textbf{Case  a}:\\
When $\min\lcb |S|M , N-M \rcb = N-M$, then we have the following condition:
\begin{eqnarray}
& & N -M \leq |S|M, \nonumber \\
& & \text{or } N \leq (1 + |S|)M. \label{eq:innermoderateint5}
\end{eqnarray}
Since $\frac{N}{M} < K \leq \frac{N}{M}+1$ and $|S|\leq K-1$, the above equation can only be satisfied for $|S| = K-1$, and hence (\ref{eq:innermoderateint4}) becomes:
\begin{eqnarray}
& \sum_{j \in S'} R_{c,j} & \leq M\log\rho + (N-M)\alpha\log\rho + \oone \nonumber \\
& & = N\alpha\log\rho + \oone \nonumber \\
& \text{or }R_{c,j} & \leq \frac{N\alpha}{1 + |S|}\log\rho + \oone \nonumber \\
& & = \frac{N\alpha}{K}\log\rho + \oone \qquad  \label{eq:innermoderateint6}
\end{eqnarray}
\textbf{Case b}:\\
When $\min\lcb |S|M , N-M \rcb = |S|M$, then we have $(1 + |S|)M \leq N$.
This condition is satisfied when $|S| < K-1$ and (\ref{eq:innermoderateint4}) simplifies to:
\begin{eqnarray}
& \sum_{j \in S'} R_{c,j} & \leq  M\alpha\log\rho + |S|M\alpha\log\rho + \oone, \nonumber \\
& \text{or } R_{c,j} & \leq  M \alpha\log\rho + \oone.  \label{eq:innermoderateint6b}
\end{eqnarray}
Now consider the user subset $S \subseteq \{ 2, \ldots, K\}$. A MAC channel is formed at Receiver 1 due to the signals from users in $S$. The achievable sum rate in this case is:
\begin{eqnarray}
& \sums R_{c,j} & \leq \log\labs\Iden_{N}  + \lb \Iden_{N} +\displaystyle\sums\Honej\Pj\Honej^{H} + \rho^{1-\alpha}\Honeone\Pone\Honeone^{H}\rb^{-1} (\rho^{\alpha}-1)\sums\Honej\Pj\Honej^{H}\rabs, \nonumber \\
& & = |S|M\alpha\log\rho + \min\lcb M, N- |S|M\rcb(1- \alpha)\log\rho - M(1-\alpha)\log\rho + \oone, \nonumber \\
& & (\because \alpha \geq 1- \alpha) \label{eq:innermoderateint7}
\end{eqnarray}
Equation (\ref{eq:innermoderateint7}) is simplified under following cases:\\
\textbf{Case a}:\\
When $\min\lcb M, N- |S|M\rcb = N - |S|M$, then $N - |S|M \leq M$.
This condition is satisfied when $|S| = K-1$, and (\ref{eq:innermoderateint7}) reduces to:
\begin{eqnarray}
& \sums R_{c,j} & \leq (K-1)M\alpha\log\rho + (N - (K-1)M)(1- \alpha)\log\rho - M(1-\alpha)\log\rho + \oone, \nonumber \\
& & = \lsqb M \lcb \alpha\lb 2K-1 \rb - K \rcb + N(1- \alpha) \rsqb \log\rho + \oone, \nonumber \\
& \text{or } R_{c,j} &\leq \frac{1}{K-1}\lsqb M \lcb \alpha\lb 2K-1 \rb - K \rcb + N(1- \alpha) \rsqb \log\rho + \oone. \label{eq:innermoderateint9}
\end{eqnarray}
\textbf{Case b}:\\
When $\min\lcb M, N- |S|M\rcb = M$, it results in $(1 + |S|)M \leq N$.
Under this condition, (\ref{eq:innermoderateint7}) reduces to
\begin{eqnarray}
& \sums R_{c,j} & \leq |S|M\alpha\log\rho + \oone, \nonumber \\
& \text{or }  R_{c,j} &\leq  M\alpha\log\rho + \oone. \label{eq:innermoderateint11}
\end{eqnarray}
The achievable rate is obtained by taking minimum of (\ref{eq:innermoderateint6}), (\ref{eq:innermoderateint6b}), (\ref{eq:innermoderateint9}) and (\ref{eq:innermoderateint11}). As $N < KM$, (\ref{eq:innermoderateint6b}) and (\ref{eq:innermoderateint11}) becomes superfluous given (\ref{eq:innermoderateint6}). The achievable GDOF by common part of the message is thus given by
\begin{eqnarray}
d_c(\alpha) \geq \min\lcb \frac{N\alpha}{K}, \frac{1}{K-1}\lsqb M \lcb \alpha\lb 2K-1 \rb - K \rcb + N(1- \alpha) \rsqb \rcb \label{eq:innermoderateint12}
\end{eqnarray}
The total GDOF achievable by private part and common part together is:
\begin{eqnarray}
& d(\alpha) & = d_p(\alpha) + d_c(\alpha) \nonumber \\
& & \geq M(1-\alpha) +  \min\lcb \frac{N\alpha}{K}, \frac{1}{K-1}\lsqb M \lcb \alpha\lb 2K-1 \rb - K \rcb + N(1- \alpha) \rsqb \rcb \label{eq:innermoderateint13}
\end{eqnarray}
This completes the proof for first case of Theorem \ref{th:moderateint1}.
\subsubsection{Case 2 $\lb K > \frac{N}{M} + 1 \rb$}
We consider the set $S^{'} \subseteq \{1,2,\ldots,K\}$ as in Case 1. The achievable rate in this case using (\ref{eq:innermodrateint2}) and (\ref{eq:innermodrateint3}) simplifies  to:
\begin{eqnarray}
& \sum_{j \in S'} R_{c,j} &\leq \log|\Iden_{N} + \rho\Honeone\Pone\Honeone^{H} + \rho^{\alpha}\displaystyle\sum_{j\in S }\Honej\Pj\Honej^{H}|- \log\labs \Iden_{N} + \rho^{1-\alpha}\Honeone\Pone\Honeone^{H}\rabs + \oone, \nonumber \\
& & = M\log\rho + \min\lcb\min\lcb N, |S|M \rcb, N-M \rcb\alpha\log\rho - M(1-\alpha)\log\rho + \oone. \label{eq:innermoderateint14}
\end{eqnarray}
Above equation is simplified under following cases.\\
\textbf{Case a}:\\
When $\min\{N,|S|M\} = N$, then $ \frac{N}{M} \leq |S|$.
The maximum value of $|S|$ which satisfies this condition is $K-1$. Under this condition, (\ref{eq:innermoderateint14}) becomes:
\begin{eqnarray}
& \sum_{j \in S'} R_{c,j} &\leq M\log\rho + \min\lcb N , N-M \rcb\alpha\log\rho - M(1-\alpha)\log\rho + \oone, \nonumber \\
& & = M\log\rho + (N-M)\alpha\log\rho - M(1-\alpha)\log\rho + \oone, \nonumber \\
& & = N\alpha\log\rho + \oone, \nonumber \\
& \text{or } R_{c,j} &\leq \frac{N\alpha}{1 + |S|}\log\rho + \oone. \label{eq:innermoderateint16}
\end{eqnarray}
Above equation is minimized when $|S|$ takes its maximum value i.e., $|S| = K-1$ and (\ref{eq:innermoderateint16}) becomes:
\begin{eqnarray}
R_{c,j} \leq \frac{N\alpha}{K}\log\rho + \oone. \label{eq:innermoderateint17}
\end{eqnarray}
\textbf{Case b}:\\
When $\min\{N,|S|M\} = |S|M$, then $|S| \leq \frac{N}{M}$.
Equation (\ref{eq:innermoderateint14}) becomes:
\begin{eqnarray}
& \sum_{j \in S'} R_{c,j} &\leq  M\log\rho + \min\lcb |S|M , N-M \rcb\alpha\log\rho - M(1-\alpha)\log\rho + \oone. \label{eq:innermoderateint19}
\end{eqnarray}
When $\min\lcb |S|M , N-M \rcb = N- M$, then we have
\begin{eqnarray}
& & \frac{N}{M} -1 \leq  |S|. \label{eq:innermoderateint20}
\end{eqnarray}
Hence, we have
\begin{eqnarray}
& & \frac{N}{M} \leq |S| + 1 \leq \frac{N}{M} + 1. \label{eq:innermoderateint21}
\end{eqnarray}
The achievable rate in this case is
\begin{eqnarray}
& \sum_{j \in S'} R_{c,j} &\leq  M\log\rho + (N-M)\alpha\log\rho - M(1-\alpha)\log\rho + \oone, \nonumber \\
& & = N\alpha\log\rho + \oone, \nonumber \\
& \text{or } R_{c,j} & \leq \frac{N\alpha}{1 + |S|}\log\rho + \oone. \label{eq:innermoderateint22}
\end{eqnarray}
The above equation is minimized by taking largest integer value of $|S|$ which satisfies the condition in (\ref{eq:innermoderateint21}).

When $\min\lcb |S|M , N-M \rcb = |S|M$, then we have
\begin{eqnarray}
& & 1 + |S| \leq \frac{N}{M}. \label{eq:innermoderateint23}
\end{eqnarray}
Then the following condition is deduced:
\begin{eqnarray}
|S| < 1 + |S| \leq \frac{N}{M}. \label{eq:innermoderateint24}
\end{eqnarray}
Under this condition, (\ref{eq:innermoderateint19}) becomes:
\begin{eqnarray}
& \sum_{j \in S'} R_{c,j} &\leq  M\alpha\log\rho + |S|M\alpha\log\rho + \oone, \nonumber \\
& & =  (1 + |S|)M\alpha\log\rho + \oone, \nonumber \\
& \text{or } R_{c,j} &\leq  M\alpha\log\rho + \oone. \label{eq:innermoderateint25}
\end{eqnarray}

Now consider the user set $S \subseteq \{2, 3, \ldots, K\}$. As this forms a MAC channel at receiver 1, from (\ref{eq:innermoderateint7}), the following rate equation is obtained:
\begin{eqnarray}
& \sums R_{c,j} & \leq \log|\Iden_{N}  + \rho^{\alpha}\displaystyle\sum_{j\in S }\Honej\Pj\Honej^{H} + \rho^{1-\alpha}\Honeone\Pone\Honeone^{H}| - \log| \Iden_{N} + \rho^{1-\alpha}\Honeone\Pone\Honeone^{H}|, \nonumber \\
& & = \min\lcb N, |S|M\rcb \alpha\log\rho + \min\lcb M, N- \min\lb N, |S|M\rb \rcb(1- \alpha)\log\rho - M(1-\alpha)\log\rho + \oone. \nonumber \\ \label{eq:innermoderateint26}
\end{eqnarray}
where the above uses the fact that $\alpha > 1-\alpha$ in the moderate interference regime. The above equation is simplified under following cases.\\
\textbf{Case a}:\\
When $\min\lcb N, |S|M\rcb = |S|M$, we have $|S| \leq \frac{N}{M}$.
When this condition is satisfied, (\ref{eq:innermoderateint26}) becomes
\begin{eqnarray}
& \sums R_{c,j} & \leq |S|M\alpha\log\rho + \min\lcb M, N- |S|M \rcb(1- \alpha)\log\rho - M(1-\alpha)\log\rho + \oone. \label{eq:innermoderateint27}
\end{eqnarray}
Above equation is further simplified by considering following two cases.

When $\min\lcb M, N- |S|M \rcb = N - |S|M$, we have $\frac{N}{M} - 1 \leq |S|$.
Hence, the following condition is obtained.
\begin{equation}
\frac{N}{M} - 1 \leq |S| \leq \frac{N}{M} \label{eq:innermoderateint29}
\end{equation}
When the above condition is satisfied, (\ref{eq:innermoderateint27}) becomes:
\begin{eqnarray}
& \sums R_{c,j} &\leq |S|M(2\alpha - 1)\log\rho + (N-M)(1-\alpha)\log\rho + \oone, \nonumber \\
& \text{or } R_{c,j} &\leq M(2\alpha-1)\log\rho + \frac{(N-M)(1-\alpha)}{|S|}\log\rho + \oone. \label{eq:innermoderateint30}
\end{eqnarray}
The above equation is required to be minimized by taking largest possible integer value of $|S|$, which also satisfies the condition in (\ref{eq:innermoderateint29}). Assume that $\frac{N}{M}$ is an integer. Equation (\ref{eq:innermoderateint30}) is minimized by taking $|S| = \frac{N}{M}$ and (\ref{eq:innermoderateint30}) becomes:
\begin{eqnarray}
&  R_{c,j} & \leq M(2\alpha-1)\log\rho + \frac{(N-M)(1-\alpha)}{\frac{N}{M}}\log\rho + \oone.\label{eq:innermoderateint31} 
\end{eqnarray}
When $\min\lcb M, N- |S|M \rcb = M$, then it results in following condition:
\begin{eqnarray}
|S| \leq \frac{N}{M} - 1 \label{eq:innermoderateint32} 
\end{eqnarray}
From (\ref{eq:innermoderateint26}) and (\ref{eq:innermoderateint32}), following condition is obtained:
\begin{eqnarray}
|S| \leq \frac{N}{M} - 1 < \frac{N}{M} \label{eq:innermoderateint33} 
\end{eqnarray}
Under this condition, (\ref{eq:innermoderateint27}) becomes:
\begin{eqnarray}
&  R_{c,j} &\leq |S|M\alpha\log\rho + M (1- \alpha)\log\rho - M (1 - \alpha)\log\rho + \oone, \nonumber \\
& & = |S|M\alpha\log\rho + \oone, \nonumber \\
& \text{or } R_{c,j} & \leq  M\alpha\log\rho + \oone. \label{eq:innermoderateint34}
\end{eqnarray}
\textbf{Case b}:\\
 When $\min \lcb N, |S|M\rcb = N$, then following condition is obtained:
 \begin{eqnarray}
 \frac{N}{M} \leq |S| \label{eq:innermoderateint35}
 \end{eqnarray}
 When the above equation is satisfied, (\ref{eq:innermoderateint26}) becomes:
 \begin{eqnarray}
 & \sums R_{c,j} & \leq N\alpha\log\rho - M(1-\alpha)\log\rho + \oone, \nonumber \\
 & \text{or } R_{c,j} & \leq \frac{N\alpha - M(1-\alpha)}{|S|}\log\rho + \oone. \label{eq:innermoderateint36}
 \end{eqnarray}
The above equation is minimized when $|S|$ takes the maximum value, and the maximum value of $|S|$ which satisfies the condition in (\ref{eq:innermoderateint35}) is $K-1$. As a result, (\ref{eq:innermoderateint36}) becomes:
\begin{eqnarray}
R_{c,j} & \leq \frac{N\alpha - M(1-\alpha)}{K-1}\log\rho + \oone. \label{eq:innermoderateint37}
\end{eqnarray}
The achievable rate by the common part when $K > \frac{N}{M} + 1$ of the message is obtained by taking minimum of (\ref{eq:innermoderateint17}), (\ref{eq:innermoderateint22}), (\ref{eq:innermoderateint25}), (\ref{eq:innermoderateint31}), (\ref{eq:innermoderateint34}) and (\ref{eq:innermoderateint37}). It can be observed that (\ref{eq:innermoderateint25}) and (\ref{eq:innermoderateint34}) are redundant given (\ref{eq:innermoderateint17}) as $N < KM$. As $K > \frac{N}{M} + 1$,  (\ref{eq:innermoderateint31}) is redundant given (\ref{eq:innermoderateint37}). Also, (\ref{eq:innermoderateint22}) is redundant given (\ref{eq:innermoderateint17}). It is important to notice that when $\frac{N}{M}$ is not an integer, the above argument remains valid as $\lfloor\frac{N}{M}\rfloor < \frac{N}{M}$. Now the achievable GDOF obtained by the common part of the message is:
\begin{eqnarray}
d_{c}(\alpha) \geq \min\lcb \frac{N\alpha}{K}, \frac{N\alpha - M(1-\alpha)}{K-1}\rcb. \label{eq:innermoderateint37b}
\end{eqnarray}
The per user GDOF achievable in this case is:
\begin{eqnarray}
& d(\alpha) & = d_{p}(\alpha) + d_{c}(\alpha) \nonumber \\
& & \geq M(1-\alpha) + \min\lcb \frac{N\alpha}{K}, \frac{N\alpha - M(1-\alpha)}{K-1}\rcb. \label{eq:innermoderateint38}
\end{eqnarray}
This completes the proof for case 2 of Theorem \ref{th:moderateint1}.
\subsection{Proof of Theorem \ref{th:lowinter1}}\label{sec:appendlowint1}
The rate achieved by the private part is same as that in case of moderate interference case and is given by:
\begin{eqnarray}
R_{p,j} \leq M(1-\alpha)\log\rho + \oone. \label{eq:innerlowint1}
\end{eqnarray}
In order to obtain the rate for common part the same procedure is followed as described in the moderate interference case. The following two cases are considered:
\subsubsection{Case 1 $\lb \frac{N}{M} < K \leq \frac{N}{M} + 1\rb$}
First consider the MAC channel formed at Receiver 1 due to the users in $S \subseteq \{2,\ldots,K\}$. The sum rate constraint in this case is:
\begin{eqnarray}
& \sums R_{c,j} & \leq \log|\Iden_{N}  + \rho^{\alpha}\displaystyle\sum_{j\in S }\Honej\Pj\Honej^{H} + \rho^{1-\alpha}\Honeone\Pone\Honeone^{H}| - \log| \Iden_{N} + \rho^{1-\alpha}\Honeone\Pone\Honeone^{H}| \nonumber \\
& & = M(1-\alpha)\log\rho + \min\lcb \min\lcb N, |S|M\rcb, N- M\rcb\alpha\log\rho - M(1-\alpha)\log\rho + \oone \nonumber \\
& & = \min\lcb |S|M, N- M\rcb\alpha\log\rho + \oone.  \label{eq:innerlowint2}
\end{eqnarray}
When $\min\lcb |S|M, N- M\rcb = N - M$, then $N \leq (1 + |S|)M $. This is possible when $|S| = K-1$ and (\ref{eq:innerlowint2}) becomes:
\begin{eqnarray}
& \sums R_{c,j} & \leq (N-M)\alpha\log\rho + \oone, \nonumber \\
& \text{or } R_{c,j} & \leq \frac{N-M}{K-1}\alpha\log\rho + \oone. \label{eq:innerlowint4}
\end{eqnarray}
When  $\min\lcb |S|M, N- M\rcb = |S|M$, $(1 + |S|)M  \leq N$. This condition results when $|S| < K-1$, and under this condition (\ref{eq:innerlowint2}) reduces to:
\begin{eqnarray}
& \sums R_{c,j} & \leq |S|M\alpha\log\rho + \oone, \nonumber \\
& \text{or } R_{c,j} & \leq  M\alpha\log\rho + \oone. \label{eq:innerlowint6}
\end{eqnarray}
Now consider the user set $S^{'} = S \cup \{1\}$, where user 1 is always included. The sum rate constraint for common part of the message:
\begin{eqnarray}
& \sum_{j \in S'} R_{c,j} & \leq \log|\Iden_{N} + \rho\Honeone\Pone\Honeone^{H} + \rho^{\alpha}\displaystyle\sum_{j\in S }\Honej\Pj\Honej^{H}|- \log\labs \Iden_{N} + \rho^{1-\alpha}\Honeone\Pone\Honeone^{H}\rabs + \oone \nonumber \\
& & = M\log\rho + \min\lcb \min\lcb N, |S|M\rcb, N- M\rcb\alpha\log\rho - M(1-\alpha)\log\rho + \oone \nonumber \\
& & = \alpha M\log\rho + \min\lcb \min\lcb N, |S|M\rcb, N- M\rcb\alpha\log\rho + \oone. \label{eq:innerlowint7}
\end{eqnarray}
As $K \leq \frac{N}{M} + 1$, we have $(K-1)M \leq N$ or $|S|M \leq N$. Equation (\ref{eq:innerlowint7}) further simplifies to:
\begin{eqnarray}
& \sum_{j \in S'} R_{c,j} & \leq  M\alpha\log\rho + \min\lcb  |S|M, N- M\rcb\alpha\log\rho + \oone. \label{eq:innerlowint8}
\end{eqnarray}
It can be noticed that (\ref{eq:innermoderateint4}) and (\ref{eq:innerlowint8}) are same. and hence can be simplified as in case of (\ref{eq:innermoderateint4}) to obtain following equations.

When $\min\lcb  |S|M, N- M\rcb = N-M$, then (\ref{eq:innerlowint8}) becomes
\begin{equation}
R_{c,j} \leq \frac{N\alpha}{K}\log\rho + \oone. \label{eq:innerlowint11}
\end{equation}
When $\min\lcb  |S|M, N- M\rcb = |S|M$, then (\ref{eq:innerlowint8}) becomes
\begin{equation}
R_{c,j}  \leq  M\alpha\log\rho + \oone. \label{eq:innerlowint13}
\end{equation}
The achievable rate by common part of the message is obtained by taking minimum of (\ref{eq:innerlowint4}), (\ref{eq:innerlowint6}), (\ref{eq:innerlowint11}) and (\ref{eq:innerlowint13}). With some algebraic manipulation, it can be shown that given (\ref{eq:innerlowint4}), all the remaining equations become superfluous. The achievable GDOF due to common part of the message is:
\begin{eqnarray}
d_{c}(\alpha) \geq \frac{N-M}{K-1}\alpha. \label{eq:innerlowint14} 
\end{eqnarray}
The per user GDOF achievable in this case is:
\begin{eqnarray}
d(\alpha) \geq M(1-\alpha) + \frac{N-M}{K-1}\alpha. \label{eq:innerlowint15} 
\end{eqnarray}
\subsubsection{Case 2 $\lb K > \frac{N}{M} + 1\rb$}
As in the previous case, first consider the MAC channel formed at Receiver 1, due to the users in $S \subseteq \{2,\ldots,K\}$. From (\ref{eq:innerlowint2}), the sum rate constraint in this case becomes:
\begin{eqnarray}
& \sums R_{c,j} & \leq \min\lcb \min\lcb N, |S|M\rcb, N- M\rcb\alpha\log\rho + \oone. \label{eq:innerlowint16} 
\end{eqnarray}
When $\min\lcb N, |S|M\rcb = N$, then we have $ N \leq |S|M$. Under this condition, (\ref{eq:innerlowint16}) reduces to:
\begin{equation}
 R_{c,j} \leq \frac{N-M}{|S|}\alpha\log\rho + \oone.  \label{eq:innerlowint18}
\end{equation}
Above equation is minimized when $|S| = |S|_{\text{max}} = K-1$ and (\ref{eq:innerlowint18}) becomes:
\begin{eqnarray}
R_{c,j} \leq \frac{N-M}{K-1}\alpha\log\rho + \oone. \label{eq:innerlowint19}
\end{eqnarray}
When $\min\lcb N, |S|M\rcb = |S|M$, then we have $|S| \leq \frac{N}{M}$, and (\ref{eq:innerlowint16}) becomes:
\begin{eqnarray}
& \sums R_{c,j} &\leq \min\lcb |S|M, N-M \rcb\alpha\log\rho + \oone. \label{eq:innerlowint21}
\end{eqnarray}
When $\min\lcb |S|M, N-M \rcb = N-M$, then we have $N \leq (1 + |S|)M$ and (\ref{eq:innerlowint16}) becomes
\begin{eqnarray}
& \sums R_{c,j} &\leq (N-M)\alpha\log\rho + \oone, \nonumber \\
& \text{or } R_{c,j} &\leq \frac{N-M}{K-1}\alpha\log\rho + \oone. \:\:(\because \text{for } |S| = K-1, \text{ RHS is minimized} )\label{eq:innerlowint22}
\end{eqnarray}
When $\min\lcb |S|M, N-M \rcb = |S|M$, then $(1 + |S|)M \leq N$, and (\ref{eq:innerlowint16}) becomes
\begin{eqnarray}
R_{c,j} \leq  M\alpha\log\rho + \oone. \label{eq:innerlowint24}
\end{eqnarray}
Now consider the user set $S^{'} =  S \cup \{1\}$, where user 1 is always included. By following the same procedure as in the previous case, the following equation similar to (\ref{eq:innerlowint7}) is obtained
\begin{eqnarray}
& \sum_{j \in S'} R_{c,j} & \leq  M\alpha\log\rho + \min\lcb \min\lcb N, |S|M\rcb, N- M\rcb\alpha\log\rho  + \oone. \label{eq:innerlowint25}
\end{eqnarray}
When $\min\lcb N, |S|M\rcb = |S|M$, then following equation is obtained:
\begin{eqnarray}
& \sum_{j \in S'} R_{c,j} & \leq  M\alpha\log\rho + \min\lcb |S|M, N-M\rcb\alpha\log\rho + \oone. \label{eq:innerlowint26} 
\end{eqnarray}
By following the same procedure as in previous case, above equation is further simplified and following rate constraints are obtained under following conditions:

When $\frac{N}{M} - 1 \leq |S| \leq \frac{N}{M}$, then we have:
\begin{eqnarray}
R_{c,j} \leq \frac{N\alpha}{1 + |S|}\log\rho + \oone. \label{eq:innerlowint27} 
\end{eqnarray}
The above equation is minimized when $|S|$ takes its maximum value, i.e., $|S| = \frac{N}{M}$ (assume $\frac{N}{M}$ is an integer) and (\ref{eq:innerlowint27}) becomes:
\begin{eqnarray}
R_{c,j} \leq \frac{N\alpha}{1 + \frac{N}{M}}\log\rho + \oone.  \label{eq:innerlowint28}
\end{eqnarray}
When $|S| < 1 + |S| \leq \frac{N}{M}$, (\ref{eq:innerlowint26}) reduces to:
\begin{eqnarray}
R_{c,j} \leq  M\alpha\log\rho + \oone. \label{eq:innerlowint29}
\end{eqnarray}
When $N -M\leq |S|M$, using $|S|_{\max} = K-1$ in this case, (\ref{eq:innerlowint26}) becomes
\begin{eqnarray}
R_{c,j} \leq \frac{N\alpha}{K}\log\rho + \oone. \label{eq:innerlowint30}
\end{eqnarray}
When $\min\lcb N,|S|M\rcb = N$, then we have $N \leq |S|M$ and (\ref{eq:innerlowint25}) becomes
\begin{eqnarray}
 & \sum_{j \in S'} R_{c,j} & \leq  M\alpha\log\rho + \min\lcb N, N- M\rcb\alpha\log\rho  + \oone, \nonumber \\
 & \text{or } R_{c,j} & \leq \frac{N\alpha}{1 + |S|}\log\rho + \oone. \label{eq:innerlowint30a}
\end{eqnarray}
In the above equation, the right hand side is minimized when $|S|$ takes the maximum value. In this case, its maximum value is $|S| = |S|_{\max} = K-1$ and (\ref{eq:innerlowint30a}) becomes
\begin{equation}
R_{c,j}  \leq \frac{N\alpha}{K}\log\rho + \oone. \label{eq:innerlowint30b}
\end{equation}
Finally, the achievable rate by common part of the message is obtained by taking minimum of (\ref{eq:innerlowint19}), (\ref{eq:innerlowint22}), (\ref{eq:innerlowint24}), (\ref{eq:innerlowint28}), (\ref{eq:innerlowint29}), (\ref{eq:innerlowint30}) and (\ref{eq:innerlowint30b}). Given (\ref{eq:innerlowint30}), (\ref{eq:innerlowint28}) becomes redundant as $K > \frac{N}{M} + 1$. It is important to note that when $\frac{N}{M}$ is not an integer, then (\ref{eq:innerlowint27}) is minimized when $|S| = \lfloor \frac{N}{M}\rfloor$  and as $K > \frac{N}{M}+1 >  \lfloor \frac{N}{M}\rfloor + 1$, the above mentioned result still remains valid. Given (\ref{eq:innerlowint19}) and (\ref{eq:innerlowint22}), (\ref{eq:innerlowint24}), (\ref{eq:innerlowint30}) and (\ref{eq:innerlowint30b}) become redundant. Finally, the GDOF achievable by the common part of the message:
\begin{eqnarray}
d_{c}(\alpha) \geq \frac{N-M}{K-1}\alpha. \label{eq:innerlowint31} 
\end{eqnarray}
The per user GDOF achievable in this case is:
\begin{eqnarray}
d(\alpha) \geq M(1-\alpha) + \frac{N-M}{K-1}\alpha. \label{eq:innerlowint32} 
\end{eqnarray}
The above equation is same as that in previous case. This completes the proof.
\subsection{Proof of Theorem \ref{innermaxhigh-th}}\label{sec:innermax-high}
Following two cases are considered.\\
\textbf{Case 1 $\lb \frac{N}{M} < K \leq \frac{N}{M} + 1\rb$}:\\
In this case, the HK-scheme achieves GDOF as given by (\ref{eq:strongth1a}). The HK scheme achieves the interference free GDOF provided
\begin{eqnarray}
& & \frac{1}{K}[\alpha(K-1)M+N-(K-1)M] \geq M, \nonumber \\
& & \text{or } \alpha \geq \frac{M(2K-1)-N}{M(K-1)}. \label{eq:achhighint1}
\end{eqnarray}
Hence, (\ref{eq:strongth1a}) can also be expressed as:
\begin{eqnarray}
& & d_{\text{HK}}(\alpha) \geq \lcb\begin{array}{ll}
    \frac{1}{K}[\alpha(K-1)M+N-(K-1)M] &\mbox{for $ 1 < \alpha < \frac{M(2K-1)-N}{M(K-1)}$} \\ 	
    M & \mbox{for $ \alpha \geq \frac{M(2K-1)-N}{M(K-1)}.$}
 \end{array}\right. \label{eq:achhighint2}
\end{eqnarray}
Comparing (\ref{eq:achhighint2}) with (\ref{eq:innercombine2}), it is easy to see that the HK-scheme outperforms ZF-receiving for all $\alpha > 1$. The HK-scheme also outperforms treating interference as noise.\\
\textbf{Case 2 $\lb K \geq \frac{N}{M} + 1 \rb$}:\\
In this case, from Theorem \ref{th-highint1}, the HK-scheme achieves a per user GDOF given by (\ref{eq:strongth1b}). Comparing the HK-scheme with IA, it is easy to show that the former outperforms the latter for $\alpha > \frac{KM}{N+M}$. Hence, we obtain the following result.
\begin{eqnarray}
& & d(\alpha) \geq \left\{\begin{array}{lll}
     \frac{MN}{M+N }    &\mbox{for $ 1 \leq \alpha \leq \frac{KM}{M+N}$} \\ 	
     \frac{\alpha N}{K} & \mbox{for $ \frac{KM}{M+N} < \alpha < \frac{KM}{N} $} \\
     M                   & \mbox{for $\alpha \geq \frac{KM}{N}. $}
 \end{array}\right. \label{eq:achhighint7}
\end{eqnarray}
Also the HK-scheme always outperforms ZF-receiving and treating interference as noise.
This completes the proof.
\subsection{Proof of Theorem \ref{innermaxmoderate-th}}\label{sec:innermax-moderate}
The following two cases are considered in this regime.\\
\textbf{Case 1 $\lb \frac{N}{M} < K \leq \frac{N}{M} + 1\rb$}:\\
In this case, from Theorem \ref{th:moderateint1}, the per user GDOF achievable by the HK-scheme can be expressed as:
\begin{eqnarray}
& & d_{\text{HK}}(\alpha) = \lcb\begin{array}{ll}
    M(1-\alpha) + \frac{1}{K-1}\lsqb M\lcb \alpha(2K-1)-K\rcb+N(1-\alpha)\rsqb&\mbox{for $\frac{1}{2} \leq \alpha \leq \frac{K}{2K-1} $} \\ 	
    M(1-\alpha) + \frac{N\alpha}{K}& \mbox{for $ \frac{K}{2K-1} \leq \alpha \leq 1. $} \end{array}\right. \label{eq:innercombine13a}
\end{eqnarray}
It can be shown that the HK-scheme performs better than treating interference as noise and ZF-receiving, with their performance coincide at $\alpha = 1$. In this case, IA is not applicable.\\
\textbf{Case 2 $\lb K > \frac{N}{M} + 1\rb$}:\\
In this case, the HK-scheme as well as IA perform better than ZF-receiving and treating interference as noise and at $\alpha=1$, the HK-scheme coincides with ZF-receiving. In this regime, the achievable per GDOF in Theorem \ref{th:moderateint1} simplifies to
\begin{eqnarray}
& & d_{\text{HK}}(\alpha) = \lcb\begin{array}{ll}
    M(1-\alpha) +  \frac{N\alpha - M(1-\alpha)}{K-1} &\mbox{for $\frac{1}{2} \leq \alpha \leq \frac{KM}{N+KM} $} \\ 	
    M(1-\alpha) + \frac{N\alpha}{K} &\mbox{for $ \frac{KM}{N+KM} < \alpha \leq 1 .$} \end{array}\right. \label{eq:innercombine13}
\end{eqnarray}
When $\frac{1}{2}\leq \alpha \leq \frac{KM}{N+KM}$, the HK-scheme outperforms IA when
\begin{eqnarray}
& & M (1-\alpha) + \frac{1}{K-1}\lsqb N\alpha - M(1-\alpha)\rsqb \geq \frac{MN}{M+N}, \label{eq:innercombine14} \\
 \text{or }& & \alpha \lsqb N - M(K-2)\rsqb \geq M \lsqb \frac{N - M(K-2)}{M+N}\rsqb. \label{eq:innercombine15}
\end{eqnarray}
When $N - M(K-2) \geq 0$, then following condition on $\alpha$ is obtained:
\begin{eqnarray}
& & \alpha \geq \frac{M}{M+N}, \label{eq:innercombine16}
\end{eqnarray}
which is satisfied for all $\alpha$ in the moderate interference regime and hence the HK-scheme always performs better than IA. 

When $N - M(K-2) < 0$, the following condition is obtained:
\begin{eqnarray}
& & \alpha < \frac{M}{M+N} \leq \frac{1}{2}.\label{eq:innercombine17}
\end{eqnarray}
In this case, it is not possible to find an $\alpha$ which satisfies the above condition, and hence, IA always outperforms the HK-scheme. 

When $\frac{KM}{N+KM} < \alpha \leq 1$, from (\ref{eq:innercombine13}), the HK-scheme outperforms IA when
\begin{eqnarray}
& & M(1-\alpha) + \frac{N\alpha}{K} \geq \frac{MN}{M+N}, \label{eq:innercombine18} \\
\text{or } & & \alpha \leq \frac{KM^{2}}{(M+N)(KM-N)}. \label{eq:innercombine19}
\end{eqnarray}
This case does not arise if $N=M(K-2)$. Rather, IA outperforms HK for $\frac{KM}{N+KM} < \alpha \leq 1$.

From (\ref{eq:innercombine16}), (\ref{eq:innercombine17}) and (\ref{eq:innercombine19}), following conditions are obtained.
\begin{enumerate}
 \item When $N - M(K-2)\geq 0$ i.e., $K< 2 + \frac{N}{M}$, then we have following conditions:
\begin{enumerate}
\item When $\frac{1}{2} \leq \alpha \leq \frac{KM}{N+KM}$, the HK-scheme performs better than IA and it achieves a per user GDOF of
\begin{eqnarray}
d(\alpha) \geq M(1-\alpha) +  \frac{N\alpha - M(1-\alpha)}{K-1}. \label{eq:innercombine22}
\end{eqnarray}
\item When $\frac{KM}{N+KM} < \alpha \leq \frac{KM^{2}}{(M+N)(KM-N)}$, the HK-scheme outperforms IA and achieves a per user GDOF of
\begin{eqnarray}
d(\alpha) \geq M(1-\alpha) + \frac{N\alpha}{K}. \label{eq:innercombine23}
\end{eqnarray}
\item When $\frac{KM^{2}}{(M+N)(KM-N)} < \alpha \leq 1$, IA performs the best and the following per user GDOF is achievable:
\begin{eqnarray}
d(\alpha) \geq \frac{MN}{M+N}. \label{eq:innercombine25}
\end{eqnarray}
\end{enumerate}
\item  When $N-M(K-2)<0$, i.e., $K > 2 + \frac{N}{M}$, IA performs better than the HK-scheme for $\frac{1}{2} \leq \alpha \le 1$, and the following per user GDOF is achievable:
\begin{eqnarray}
d(\alpha) \geq \frac{NM}{N+M}. \label{eq:innercombine26}
\end{eqnarray}
\end{enumerate}
This completes the proof.
\subsection{Proof of Theorem \ref{innermaxweak-th}}\label{sec:innermax-weak}
From Theorem \ref{th:lowinter1} and \ref{th:theorem1_intnoise}, the HK-scheme and treating interference as noise achieve following per user GDOF:
\begin{eqnarray}
& &  d_{\text{HK}}(\alpha) = M(1-\alpha) + \frac{1}{K-1}(N-M)\alpha, \label{eq:innercombine3} \\
\text{ and }& & d_{\text{int-noise}}(\alpha) = 
  \left\{\begin{array}{ll} 
    M + \alpha(N-KM) &\mbox{for $ \frac{N}{M} < K \leq \frac{N}{M} + 1 $} \\ 	
    M(1-\alpha) & \mbox{for $ K > \frac{N}{M} + 1.$}
 \end{array}\right. \label{eq:innercombine4}
\end{eqnarray}
When $K > \frac{N}{M} + 1$, from (\ref{eq:innercombine3}) and (\ref{eq:innercombine4}), it can be observed that the HK-scheme performs better than treating interference as noise. The per user GDOF achievable by the HK-scheme can also be expressed as follows:
\begin{equation}
d_{\text{HK}}(\alpha) = M + \frac{1}{K-1}(N-KM)\alpha.  \label{eq:innercombine5}
\end{equation}
When $\frac{N}{M} < K \leq \frac{N}{M} + 1$, since $\frac{1}{K-1}(N-KM)\alpha > (N-KM)\alpha$ and hence HK-scheme performs better than treating interference as noise in this case also. Now HK-scheme outperforms the IA scheme whenever
\begin{eqnarray}
& & M(1-\alpha) + \frac{1}{K-1}(N-M)\alpha > \frac{MN}{M+N}, \nonumber \\
\text{i.e., } & &\alpha > \frac{M^{2}}{M(N+M)-\frac{N^{2}-M^{2}}{K-1}}. \label{eq:innercombine6}
\end{eqnarray}
Since $\alpha < \frac{1}{2}$, the right hand side is less than $ \frac{1}{2}$, which requires
\begin{eqnarray}
 K > 2 + \frac{N}{M}. \label{eq:innercombine7}
\end{eqnarray}
Thus, when $K > \frac{N}{M}+2$ and (\ref{eq:innercombine6}) are satisfied, the HK-scheme performs better than IA.
Comparing the HK-scheme with ZF-receiving, it is easy to show that HK-scheme outperforms ZF-receiving for $\alpha \leq \frac{1}{2}$. The two scheme coincide at $\alpha = \frac{1}{2}$ when $K=2$.

To summarize, when $K > \frac{N}{M} + 2$, the per user GDOF that can be achieved in the weak interference regime is:
\begin{eqnarray}
& d(\alpha) & \geq \left\{\begin{array}{ll}
    M(1-\alpha) + \frac{1}{K-1}(N-M)\alpha &\mbox{for $0 \leq \alpha \leq \frac{M^{2}}{M(N+M)-\frac{N^{2}-M^{2}}{K-1}} $} \\ 	
    \frac{NM}{N+M} & \mbox{for $ \frac{M^{2}}{M(N+M)-\frac{N^{2}-M^{2}}{K-1}} < \alpha \leq \frac{1}{2}.$}
 \end{array}\right.\nonumber \\ \label{eq:innercombine11}
\end{eqnarray}
When $\frac{N}{M} < K \leq \frac{N}{M}+2$, HK-scheme alone performs better than the other schemes and the per user GDOF achievable by this scheme is:
\begin{eqnarray}
& d(\alpha) & \geq  M(1-\alpha) + \frac{1}{K-1}(N-M)\alpha. \label{eq:innercombine12}
\end{eqnarray}
This completes the proof.

\subsection{Proof of Theorem \ref{thm:NbyMnotInteger}}\label{sec:new result}
First, recall that the maximum of the achievable GDOF from the HK-scheme and IA outperforms the achievable GDOF from treating interference as noise or ZF-receiving for all values of $M$, $N$, $K$ and $\alpha$. Hence, the above result follows from carefully comparing the achievable GDOF from the HK-scheme and IA in the weak, moderate, and strong interference cases. 

 \emph{Weak interference case $(0 \le \alpha \le \frac{1}{2})$:} Comparing the achievable GDOF using IA, given by (\ref{eq:inneralign1}), with that achievable using the HK-scheme, given by (\ref{eq:weakth1}), it follows that the HK-scheme is active when 
\begin{equation}
\alpha \le \frac{(K-1)}{\lb R+1\rb \lb K - \tfrac{N}{M} \rb}.
\end{equation}
When $R = 1$, since $\frac{N}{M} \ge 1$, it is clear that the right hand side above exceeds $\frac{1}{2}$. Hence, the HK-scheme is active throughout the weak interference case. When $R >1$, the right hand side above is $\le \frac{1}{2}$, provided
\begin{equation}
K \geq \frac{N}{M} + 2 \frac{\frac{N}{M} - 1}{R-1}.
\end{equation} 
Notice that, in the last term above, the denominator is the floor of the numerator. Hence, the ratio is bounded above by $2$. Hence, for $K \geq \frac{N}{M} + 4$, the HK-scheme is active for the initial part of the weak interference case. IA is active in the later part of the weak interference case.  This completes the proof in the weak interference case.

 \emph{Moderate interference case $(\frac{1}{2} \le \alpha \le 1)$:} Consider the achievable GDOF using the HK-scheme given by (\ref{eq:moderateth1b}) for $K > \frac{N}{M} + 1$. The expression can be equivalently written as
\begin{equation}
d(\alpha) \geq \lcb\begin{array}{l l}
   M(1-\alpha) + \frac{N\alpha + M(1-\alpha)}{K-1} &\mbox{for $\frac{1}{2} \le  \alpha < \frac{1}{1 + \frac{N}{MK}}$} \\ 	
    M (1-\alpha) + \frac{N \alpha}{K} & \mbox{for $ \frac{1}{1 + \frac{N}{MK}} \le \alpha < 1.$}
 \end{array}\right. 
\end{equation}
Consider the first case above, i.e., when $\frac{1}{2} \le  \alpha < \frac{1}{1 + \frac{N}{MK}}$. It can be shown that the above achievable GDOF exceeds that achievable by IA, provided 
\begin{equation}
\alpha \le \frac{(K-1) - (R+1)}{\lb R+1\rb \lb \lb K-1\rb - \lb\tfrac{N}{M} + 1\rb \rb}.
\end{equation}
Now, the right hand side above is smaller than $\frac{1}{1 + \frac{N}{MK}}$ when
$K \geq \frac{N}{M} + 2 \frac{N}{RM}$,
which is always satisfied when $K \geq \frac{N}{M} + 4$. When $R = 1$, it is immediate to see that the right hand side above exceeds $\frac{1}{2}$, and hence, the HK-scheme is active for an initial portion of $\frac{1}{2} \le  \alpha < \frac{1}{1 + \frac{N}{MK}}$. When $R > 1$, the right hand side above is smaller than $\frac{1}{2}$ and hence IA is active throughout this range of $\alpha$, provided
\begin{equation}
K \geq \frac{N}{M} + 2 \frac{\frac{N}{M} - 1}{R - 1},
\end{equation}
which is satisfied when $K \geq \frac{N}{M} + 4$. \\
Next, consider the second case above, i.e., when $ \frac{1}{1 + \frac{N}{MK}} \le \alpha < 1.$ In this case, the HK-scheme outperforms IA when 
\begin{equation}
\alpha \le \frac{1}{(R+1) \lb 1 - \frac{N}{MK} \rb}.
\end{equation}
When the right hand side above is $\le \frac{1}{1 + \frac{N}{MK}}$, IA is active throughout this range of $\alpha$. This leads to 
\begin{equation}
K \ge \frac{N}{M} + 2 \frac{N}{RM}
\end{equation}
which is satisfied when $K \geq \frac{N}{M} + 4$. This completes the proof in the moderate interference case. 

 \emph{Strong interference case $(\alpha \ge 1)$:} In this case, from (\ref{eq:strongth1b}), the achievable GDOF from the HK scheme when $K \ge \frac{N}{M} + 4$ is given by 
\begin{equation}
d(\alpha) \geq \lcb\begin{array}{l l}
    \frac{N\alpha}{K} &\mbox{for $ 1 \le  \alpha < \frac{MK}{N}$} \\ 	
    M & \mbox{for $ \alpha \ge \frac{MK}{N}.$}
 \end{array}\right. 
\end{equation}
Comparing the above achievable GDOF using IA given by (\ref{eq:inneralign1}), one obtains
\begin{equation}
d(\alpha) \geq \lcb\begin{array}{l l}
\frac{RM}{R+1} & \mbox{for $1 \le \alpha \le \frac{MKR}{N(R+1)}$} \\
    \frac{N\alpha}{K} &\mbox{for $ \frac{MKR}{N(R+1)} <  \alpha \le \frac{MK}{N}$} \\ 	
    M & \mbox{for $ \alpha > \frac{MK}{N}.$}
 \end{array}\right. 
 \end{equation}
The statements of the theorem are now easily obtained by consolidating the above results. 
\subsection{Proof of Corollary \ref{tightcor}}\label{sec:tightoutercorl}
When $M=N$, treating interference as noise achieves a per user GDOF of $M(1-\alpha)$ (Theorem~\ref{th:lowinter1}). In Lemma~\ref{lemma-outer2}, when $M=N$, the outer bound reduces to $M(1-\alpha)$ and hence, treating interference as noise is GDOF optimal in the weak interference regime $(0 \leq \alpha \leq \frac{1}{2})$. 

In order to prove the tightness of the outer bounds when $\frac{N}{M} < K \leq \frac{N}{M}+1$, the following interference regimes are considered. \\
\textit{Weak interference regime $(0 \leq \alpha \leq \frac{1}{2})$:} In the weak interference regime, the following per user GDOF (Theorem~\ref{th:lowinter1}) is achievable:
\begin{equation}
d(\alpha) \geq M(1-\alpha) + \frac{1}{K-1}(N-M)\alpha \label{eq:thirdrev10}
\end{equation}
The outer bound in (\ref{eq:minouter4}) matches with the achievable GDOF in the above equation,  thus establishing the optimality of the achievable scheme.\\ 
\textit{Moderate interference regime $(\frac{1}{2} \leq \alpha \leq 1)$:} In the moderate interference regime, the following per user GDOF (Theorem~\ref{th:moderateint1}) is achievable:
\begin{eqnarray}
& d(\alpha) & \geq M(1-\alpha) + \min\lcb \frac{N\alpha}{K}, \frac{\lsqb M  \lcb (2K-1)\alpha - K\rcb + N(1-\alpha))\rsqb}{K-1} \rcb, \nonumber \\
& & = \lcb\begin{array}{l l}
	 M\alpha + \frac{1}{K-1}(N-M)(1-\alpha) &\mbox{for $\frac{1}{2} \leq \alpha \leq \frac{K}{2K-1}$} \\
         M(1-\alpha) + \frac{N\alpha}{K} & \mbox{for $\frac{K}{2K-1}< \alpha \leq 1$.}
\end{array}\right. \label{eq:thirdrev11}
\end{eqnarray}
The achievable GDOF matches with the outer bound in (\ref{eq:minouter5}), and hence, the outer bound is tight in this case.\\
\textit{High interference regime $(\alpha \geq 1)$:} In the high interference regime, the following per user GDOF (Theorem \ref{th-highint1}) is achievable:
\begin{eqnarray}
& d(\alpha) & \geq \min\lcb M, \frac{1}{K}\lsqb(K-1)M\alpha + N - (K-1)M \rsqb\rcb \nonumber \\
& & = \lcb\begin{array}{l l}
	 \frac{1}{K}\lsqb N + (K-1)M(\alpha-1)\rsqb &\mbox{for $1 \leq \alpha \leq \frac{2KM-(M+N)}{(K-1)M}$} \\
         M & \mbox{for $\alpha \geq \frac{2KM-(M+N)}{(K-1)M}$,}
\end{array}\right.  \label{eq:thirdrev12}
\end{eqnarray}
which matches with the outer bound in (\ref{eq:minouter6}). Thus, the outer bound is tight in the high interference regime also. This completes the proof.
\subsection{Comparison of the Outer Bound in Lemma~\ref{lemma-outer1} and the $Z$ Channel Outer Bound in \cite{zhengdao1}}\label{sec:compz}
In this subsection, we show that the sum DOF outer bound from Theorem 1 matches with the outer bound on the sum DOF of the MIMO GIC \cite{zhengdao1}. The latter is given by \cite{zhengdao1}:
\begin{equation}
d_{1} + d_{2} \leq \min\lcb\max\lcb N_{1}, M_{2} \rcb, M_{1} + M_{2}, N_{1} + N_{2} \rcb \label{eq:thirdrev0}
\end{equation}
In order to simplify the comparison between the two outer bounds, let us first assume that $M_{1} \leq N_{1} $ and $M_{2} \leq N_{2}$. Recall that, in Theorem 1, two groups with $L_1$ and $L_2$ users each are considered, such that $0 < L_{1}+L_{2} \leq K$. Let $M_{1} \triangleq L_1M$, $M_2 \triangleq L_2M$, $N_1 \triangleq L_1N$ and $N_2 = L_2N$. Then, the sum rate bound in (\ref{eq:lmcoopouter2}) in the paper is expressed as follows:
\begin{eqnarray}
 & R_{1} + R_{2}  &\leq \log \labs \Iden_{N_1} +  \rho\Honebar\Honebar^{H} + \rho^{\alpha} \Htwobar\Htwobar^{H}\rabs \nonumber \\
& & \qquad + \log \labs \Iden_{N_2} + \rho\Hthreebar \lcb \Iden_{M_2} + \rho^{\alpha}\Htwobar^{H}\Htwobar \rcb^{-1}\Hthreebar^{H}\rabs. \label{eq:thirdrev1}
\end{eqnarray}
where the effective channel $\mathbf{\overline{H}}_{ij} \in \Complex^{N_{i} \times M_{j}}$.

Using the procedure employed to obtain (\ref{eq:lmcoopouter3}) in the paper, (\ref{eq:thirdrev1}) reduces to the following form:
\begin{eqnarray}
& & R_1 + R_2 \leq \log \labs \Iden_{N_1} +  \rho\Honebar\Honebar^{H} + \rho^{\alpha} \Htwobar\Htwobar^{H}\rabs \nonumber \\
& & \qquad \quad\quad +\log \labs \Iden_{N_2} + \rho^{1-\alpha} \Htildetwtwo^{(a)} \Sigma_{r}^{-1}\Htildetwtwo^{(a)H} + \rho\Htildetwtwo^{(b)}\Iden_{M_2 -r}\Htildetwtwo^{(b)H}\rabs + \oone. \label{eq:thirdrev2} 
\end{eqnarray}
The symbols/notations defined in the above equation are the same as that defined in the proof of Lemma \ref{lemma-outer1} of the paper. The matrices $\Htildetwtwo^{(a)}$ and $\Htildetwtwo^{(a)}$ are of dimensions $N_2 \times r$ and $N_2 \times (M_2 -r)$, respectively, where $r \triangleq \min\{M_2, N_1\}$.  When $0 \leq \alpha \leq 1$, (\ref{eq:thirdrev2}) becomes
\begin{eqnarray}
& R_{1} + R_{2} &\leq M_{1}\log\rho + \min\lcb r, N_{1} - M_{1} \rcb\alpha\log\rho + (M_2 - r)\log\rho + \nonumber \\
& & \qquad \qquad \min\lcb \min\lcb N_2,r\rcb, N_2 - M_2 + r\rcb(1-\alpha)\log\rho + \oone. \label{eq:thirdrev3}
\end{eqnarray}
When $\alpha=1$, the bound on the sum DOF is
\begin{equation}
d_{1} + d_{2} \leq M_{1} + \min\lcb  r, N_{1} - M_{1} \rcb\ + M_{2} -r.\label{eq:thirdrev4}
\end{equation}

\textbf{Case 1 $(M_2 \leq N_1)$}: Under this condition, $r = M_2$, and (\ref{eq:thirdrev4}) becomes
\begin{equation}
d_{1} + d_{2} \leq M_{1} + \min\lcb M_2, N_1 - M_1\rcb. \label{eq:thirdrev5}
\end{equation}
When $M_1 + M_2 \geq N_{1}$, the above equation becomes
\begin{equation}
d_{1} + d_{2} \leq N_{1}. \label{eq:thirdrev6}
\end{equation}
As $M_{1} + M_{2} \geq N_{1}$, (\ref{eq:thirdrev0}) becomes
\begin{equation}
d_{1} + d_{2} \leq N_{1}. \label{eq:thirdrev7}
\end{equation}
Hence, the two outer bounds match in this case.

When $M_{1} + M_{2} < N_{1}$, then (\ref{eq:thirdrev5}) becomes
\begin{equation}
d_{1} + d_{2} \leq M_{1} + M_{2}. \label{eq:thirdrev8}
\end{equation}
In this case also, the above outer bound matches with the outer bound in (\ref{eq:thirdrev0}).

\textbf{Case 2 $(M_2 > N_1)$} Under this condition, $r = N_1$ and (\ref{eq:thirdrev4}) becomes
\begin{equation}
d_{1} + d_{2} \leq M_{2}. \label{eq:thirdrev9} 
\end{equation}
In this case also, the proposed outer bound coincides with the outer bound in \cite{zhengdao1}.

Proceeding in a similar way, it can be shown that the proposed outer bound matches with the outer bound on the sum DOF for the MIMO $Z$ GIC in the remaining cases $(M_1 > N_1 \text{ or } M_2 > N_2)$ as well.
\bibliographystyle{IEEEtranTCOM}
\bibliography{IEEEabrv,refs}

\begin{thebibliography}{10}
\baselineskip 12pt
\providecommand{\url}[1]{#1}
\csname url@samestyle\endcsname
\providecommand{\newblock}{\relax}
\providecommand{\bibinfo}[2]{#2}
\providecommand{\BIBentrySTDinterwordspacing}{\spaceskip=0pt\relax}
\providecommand{\BIBentryALTinterwordstretchfactor}{4}
\providecommand{\BIBentryALTinterwordspacing}{\spaceskip=\fontdimen2\font plus
\BIBentryALTinterwordstretchfactor\fontdimen3\font minus
  \fontdimen4\font\relax}
\providecommand{\BIBforeignlanguage}[2]{{%
\expandafter\ifx\csname l@#1\endcsname\relax
\typeout{** WARNING: IEEEtran.bst: No hyphenation pattern has been}%
\typeout{** loaded for the language `#1'. Using the pattern for}%
\typeout{** the default language instead.}%
\else
\language=\csname l@#1\endcsname
\fi
#2}}
\providecommand{\BIBdecl}{\relax}
\BIBdecl

\bibitem{etkin1}
R.~Etkin, D.~Tse, and H.~Wang, ``{Gaussian interference channel capacity to
  within one bit},'' \emph{IEEE Transactions on Information Theory}, vol.~54,
  no.~12, pp. 5534--5562, Dec. 2008.

\bibitem{han1}
T.~Han and K.~Kobayashi, ``{A new achievable rate region for the interference
  channel},'' \emph{IEEE Transactions on Information Theory}, vol.~27, no.~1,
  pp. 49--60, Jan. 1981.

\bibitem{maddahali1}
M.~Maddah-Ali, A.~Motahari, and A.~Khandani, ``Signaling over {MIMO} multi-base
  systems: Combination of multi-access and broadcast schemes,'' in \emph{IEEE
  International Symposium on Information Theory Proceedings (ISIT) 2006}, Jul.
  2006, pp. 2104 --2108.

\bibitem{maddahali2}
------, ``Communication over {MIMO} {$X$} channels: Interference alignment,
  decomposition, and performance analysis,'' \emph{IEEE Transactions on
  Information Theory}, vol.~54, no.~8, pp. 3457--3470, Aug. 2008.

\bibitem{cadambe1}
V.~Cadambe and S.~Jafar, ``{Interference alignment and degrees of freedom of
  the $K$-user interference channel},'' \emph{IEEE Transactions on Information
  Theory}, vol.~54, no.~8, pp. 3425--3441, Aug. 2008.

\bibitem{telatar1}
E.~Telatar and D.~Tse, ``Bounds on the capacity region of a class of
  interference channels,'' in \emph{IEEE International Symposium on Information
  Theory Proceedings (ISIT) 2007}, Jun. 2007, pp. 2871 --2874.

\bibitem{tarokh1}
P.~Parker, D.~Bliss, and V.~Tarokh, ``On the degrees-of-freedom of the {MIMO}
  interference channel,'' in \emph{42nd Annual Conference on Information
  Sciences and Systems (CISS) 2008}, Mar. 2008, pp. 62--67.

\bibitem{mahesh3}
\BIBentryALTinterwordspacing
S.~Karmakar and M.~K. Varanasi, ``The generalized degrees of freedom of the
  {MIMO} interference channel,'' \emph{CoRR}, vol. abs/1103.1672, Mar. 2011.
  [Online]. Available: \url{http://arxiv.org/abs/1103.1672}
\BIBentrySTDinterwordspacing

\bibitem{jafar3}
S.~Jafar and S.~Shamai, ``{Degrees of freedom region of the {MIMO} {$X$}
  channel},'' \emph{IEEE Transactions on Information Theory}, vol.~54, no.~1,
  pp. 151--170, Jan. 2008.

\bibitem{gou1}
T.~Gou and S.~Jafar, ``{Degrees of freedom of the $K$ user $M\times N$ {MIMO}
  interference channel},'' \emph{IEEE Transactions on Information Theory},
  vol.~56, no.~12, pp. 6040--6057, Dec. 2010.

\bibitem{cadambe2}
V.~Cadambe and S.~Jafar, ``Interference alignment and the degrees of freedom of
  wireless {$X$} networks,'' \emph{IEEE Transactions on Information Theory},
  vol.~55, no.~9, pp. 3893--3908, Sep. 2009.

\bibitem{cadambe3}
V.~Cadambe, S.~Jafar, and C.~Wang, ``{Interference alignment with asymmetric
  complex signaling$-$settling the H{\o}st-Madsen--Nosratinia conjecture},''
  \emph{IEEE Transactions on Information Theory}, vol.~56, no.~9, pp.
  4552--4565, Sep. 2010.

\bibitem{cadambe4}
V.~Cadambe, S.~Jafar, and S.~Shamai, ``{Interference alignment on the
  deterministic channel and application to fully connected AWGN interference
  networks},'' in \emph{Information Theory Workshop}, 2008, pp. 41--45.

\bibitem{huang1}
\BIBentryALTinterwordspacing
C.~Huang, V.~R. Cadambe, and S.~A. Jafar, ``On the capacity and generalized
  degrees of freedom of the {$X$} channel,'' \emph{CoRR}, vol. abs/0810.4741,
  Oct. 2008. [Online]. Available: \url{http://arxiv.org/abs/0810.4741}
\BIBentrySTDinterwordspacing

\bibitem{jafar2}
S.~Jafar and S.~Vishwanath, ``{Generalized degrees of freedom of the symmetric
  Gaussian $K$ user interference channel},'' \emph{IEEE Transactions on
  Information Theory}, vol.~56, no.~7, pp. 3297--3303, Jul. 2010.

\bibitem{gou3}
T.~Gou and S.~Jafar, ``{Sum capacity of a class of symmetric {SIMO} {Gaussian}
  Interference Channels within $O(1)$},'' \emph{IEEE Transactions on
  Information Theory}, vol.~57, no.~4, pp. 1932--1958, Apr. 2011.

\bibitem{jafar0}
S.~Jafar and M.~Fakhereddin, ``{Degrees of freedom for the {MIMO} interference
  channel},'' \emph{IEEE Transactions on Information Theory}, vol.~53, no.~7,
  pp. 2637--2642, Jul. 2007.

\bibitem{ghasemi1}
\BIBentryALTinterwordspacing
A.~Ghasemi, A.~Motahari, and A.~Khandani, ``Interference alignment for the
  {$K$} user {MIMO} interference channel,'' \emph{CoRR}, vol. abs/0909.4604,
  Sep. 2009. [Online]. Available: \url{http://arxiv.org/abs/0909.4604}
\BIBentrySTDinterwordspacing

\bibitem{mahesh1}
\BIBentryALTinterwordspacing
S.~Karmakar and M.~K. Varanasi, ``Capacity of the {MIMO} interference channel
  to within a constant gap,'' \emph{CoRR}, vol. abs/1102.0267, 2011. [Online].
  Available: \url{http://arxiv.org/abs/1102.0267}
\BIBentrySTDinterwordspacing

\bibitem{poor1}
X.~Shang, B.~Chen, G.~Kramer, and H.~Poor, ``Capacity regions and sum-rate
  capacities of vector gaussian interference channels,'' \emph{IEEE
  Transactions on Information Theory}, vol.~56, no.~10, pp. 5030--5044, Oct.
  2010.

\bibitem{mahesh2}
\BIBentryALTinterwordspacing
C.~S. Vaze and M.~K. Varanasi, ``The degrees of freedom region and interference
  alignment for the {MIMO} interference channel with delayed csi,''
  \emph{CoRR}, vol. abs/1101.5809, 2011. [Online]. Available:
  \url{http://arxiv.org/abs/1101.5809}
\BIBentrySTDinterwordspacing

\bibitem{wang2}
C.~Wang, T.~Gou, and S.~A. Jafar, ``Subspace alignment chains and the degrees
  of freedom of the three-user mimo interference channel,'' \emph{CoRR}, vol.
  abs/1109.4350, 2011.

\bibitem{zhengdao1}
L.~Ke and Z.~Wang, ``Degrees of freedom regions of two-user {MIMO} {$Z$} and
  full interference channels: The benefit of reconfigurable antennas,''
  \emph{CoRR}, vol. abs/1011.2196, 2010.

\bibitem{seber1}
G.~Seber, \emph{A matrix handbook for statisticians}.\hskip 1em plus 0.5em
  minus 0.4em\relax Wiley-Interscience, 2008.

\end{thebibliography}
\newpage
\begin{figure}[ht]
\begin{center}
\setxysizeo
\epsffile{}
\caption{Outer bound on per user GDOF for symmetric MIMO GIC with different antenna configuration and number of users. In the legend, MM stands for the outer bound derived in our work, PBT stands for the outer bound on GDOF result in \cite{tarokh1}, GJ stands for the outer bound on GDOF result in \cite{gou3}, and JV stands for the outer bound on GDOF result in \cite{jafar2}.} 
\label{fig:compare1}
\end{center}
\end{figure}

\begin{figure}[ht]
\begin{center}
\setxysizeo
\epsffile{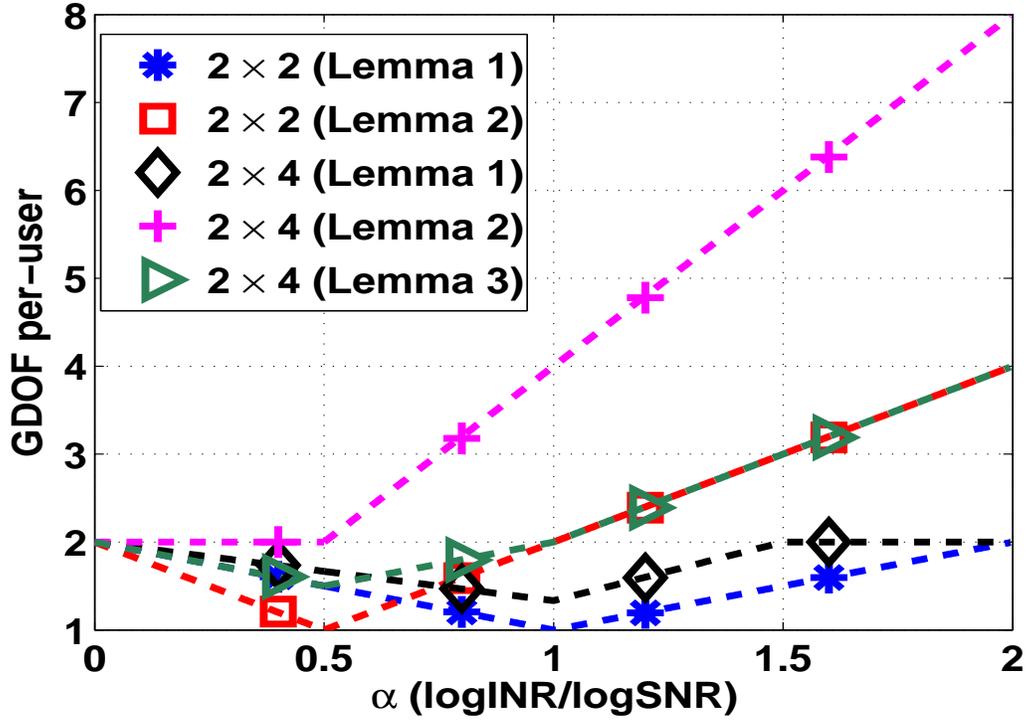}
\caption{Comparison of the different outer bounds on per user GDOF for the $K=3$ user symmetric MIMO GIC with $(M,N) = (2,2)$ and $(2,4)$.} 
\label{fig:outercomp1}
\end{center}
\end{figure}

\begin{figure}[ht]
\begin{center}
\setxysizeo
\epsffile{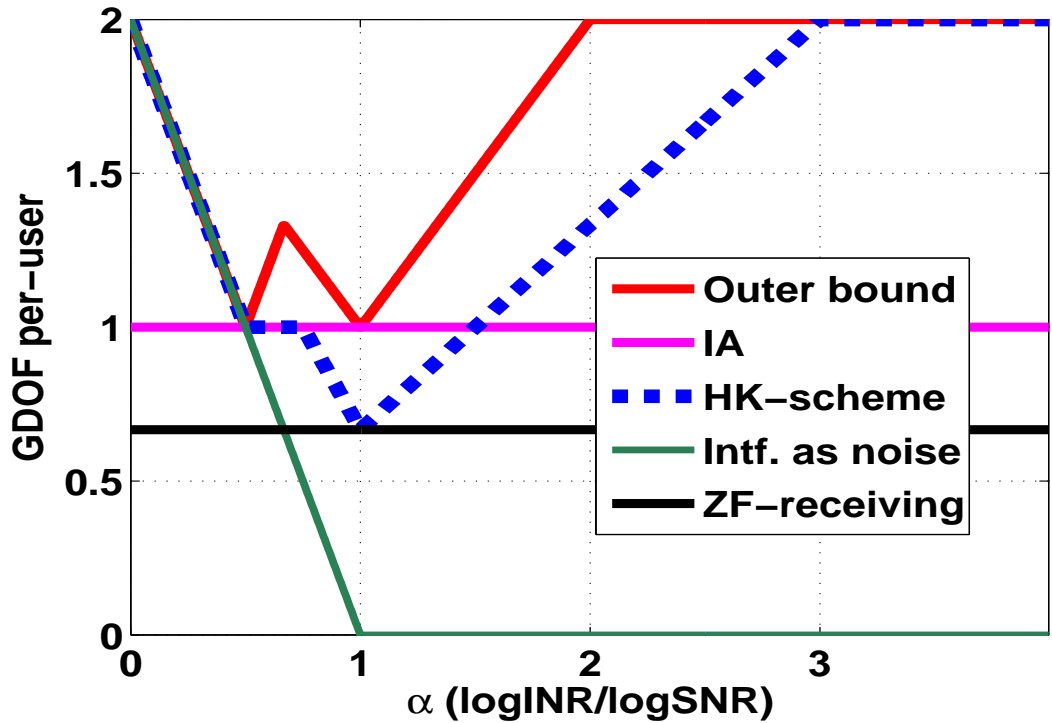}
\caption{GDOF for $K = 3$ user Interference Channel with $M=N=2$} 
\label{fig:P1_K3M2N2}
\end{center}
\end{figure}

%
%
%
%
%

\begin{figure}[ht]
\begin{center}
\setxysizeone
\epsffile{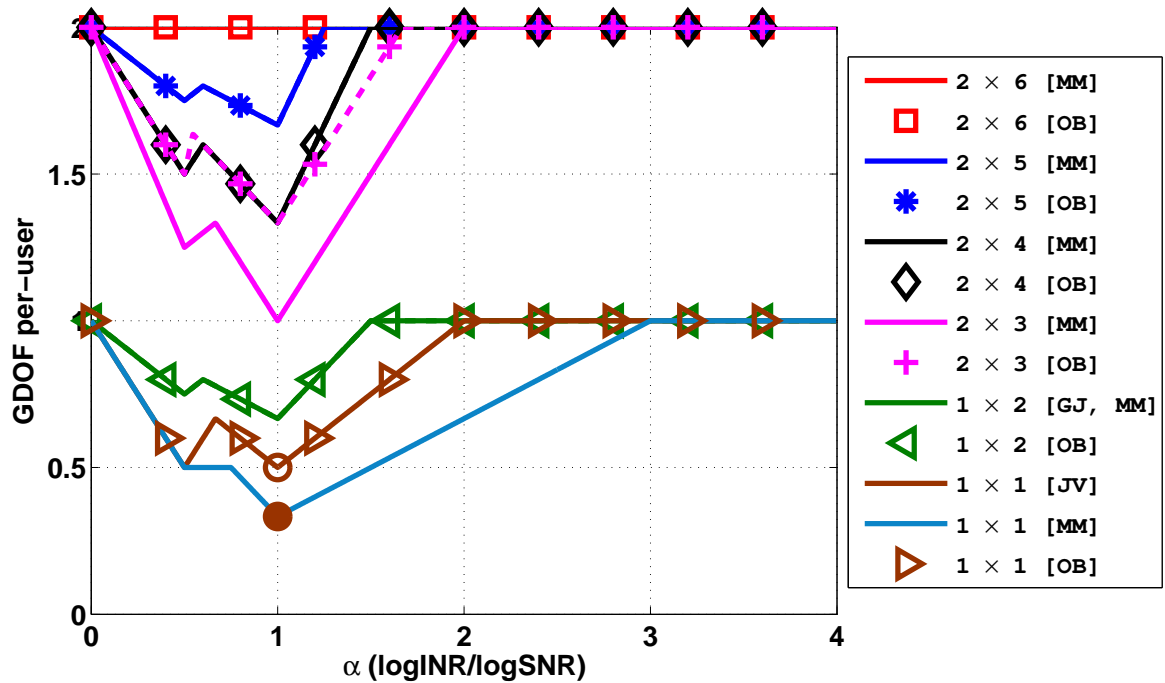}
\caption{The achievable GDOF for the $K = 3$ user symmetric MIMO GIC with different antenna configurations. In the legend, MM stands for inner bound derived in this work, JV stands for the achievable GDOF result in \cite{jafar2}, GJ stands for the achievable GDOF result in \cite{gou3}, and OB stands for the outer bound derived in this work.}
\label{fig:P_KMN}
\end{center}
\end{figure}

\begin{figure}[ht]
\begin{center}
\setxysizeo
\epsffile{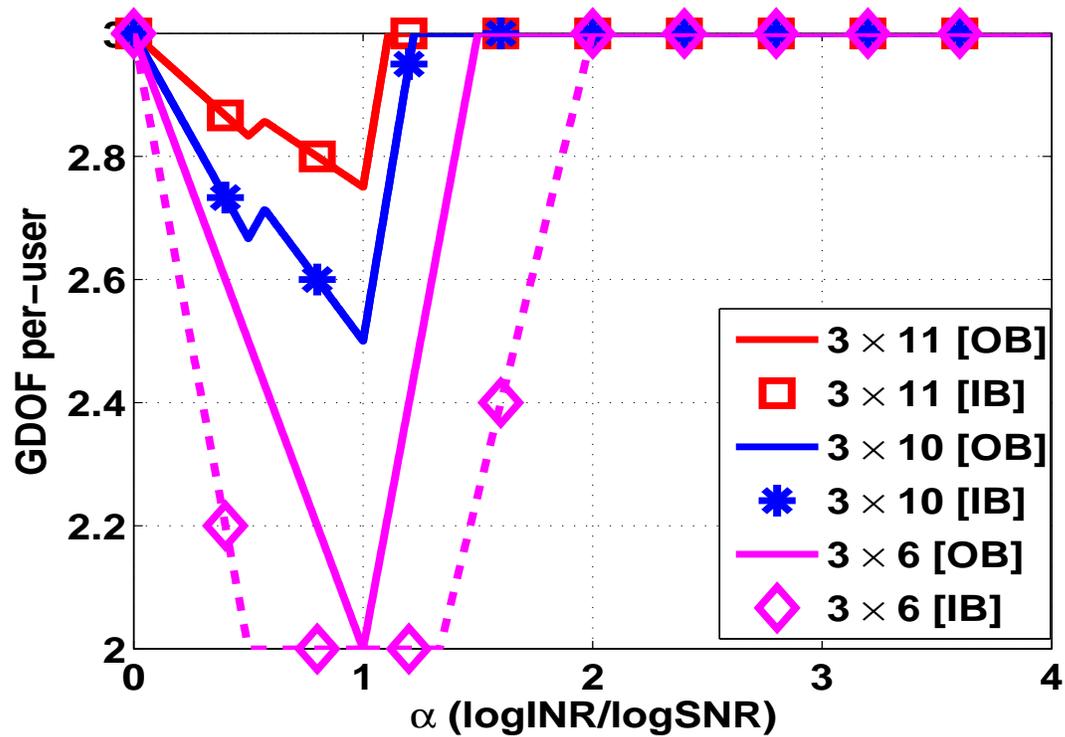}
\caption{Outer bound (OB) and inner bound (IB) on the per user GDOF for $K=4$ user MIMO GSIC with different antenna configurations.} 
\label{fig:P1_K4MN}
\end{center}
\end{figure}

\begin{figure}[ht]
\begin{center}
\setxysizeo
\epsffile{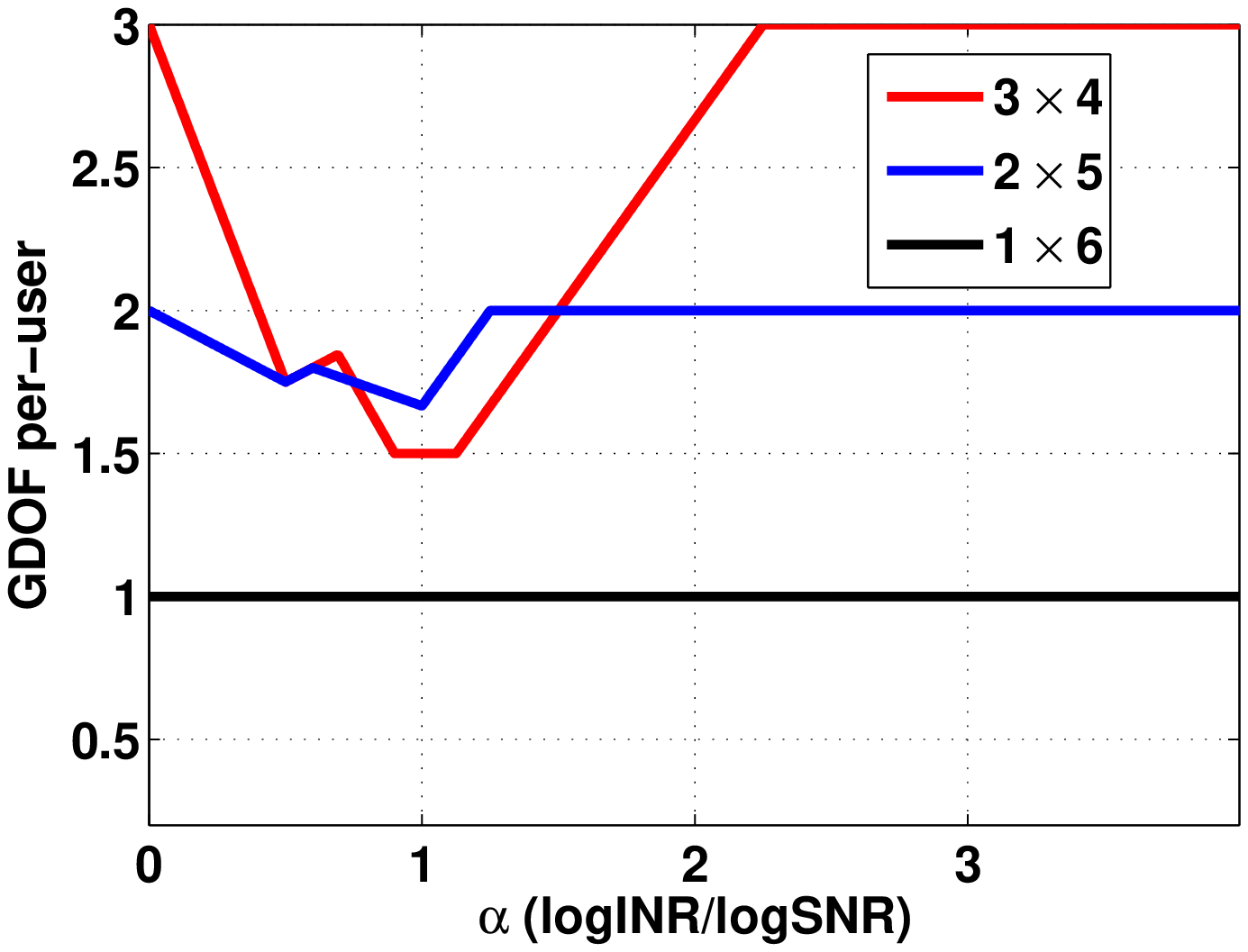}
\caption{The achievable GDOF for the $K = 3$ user symmetric MIMO GIC with different antenna configurations such that $M + N =7$.} 
\label{fig:K3MN7}
\end{center}
\end{figure}

\begin{figure}[ht]
\begin{center}
\setxysizeone
\epsffile{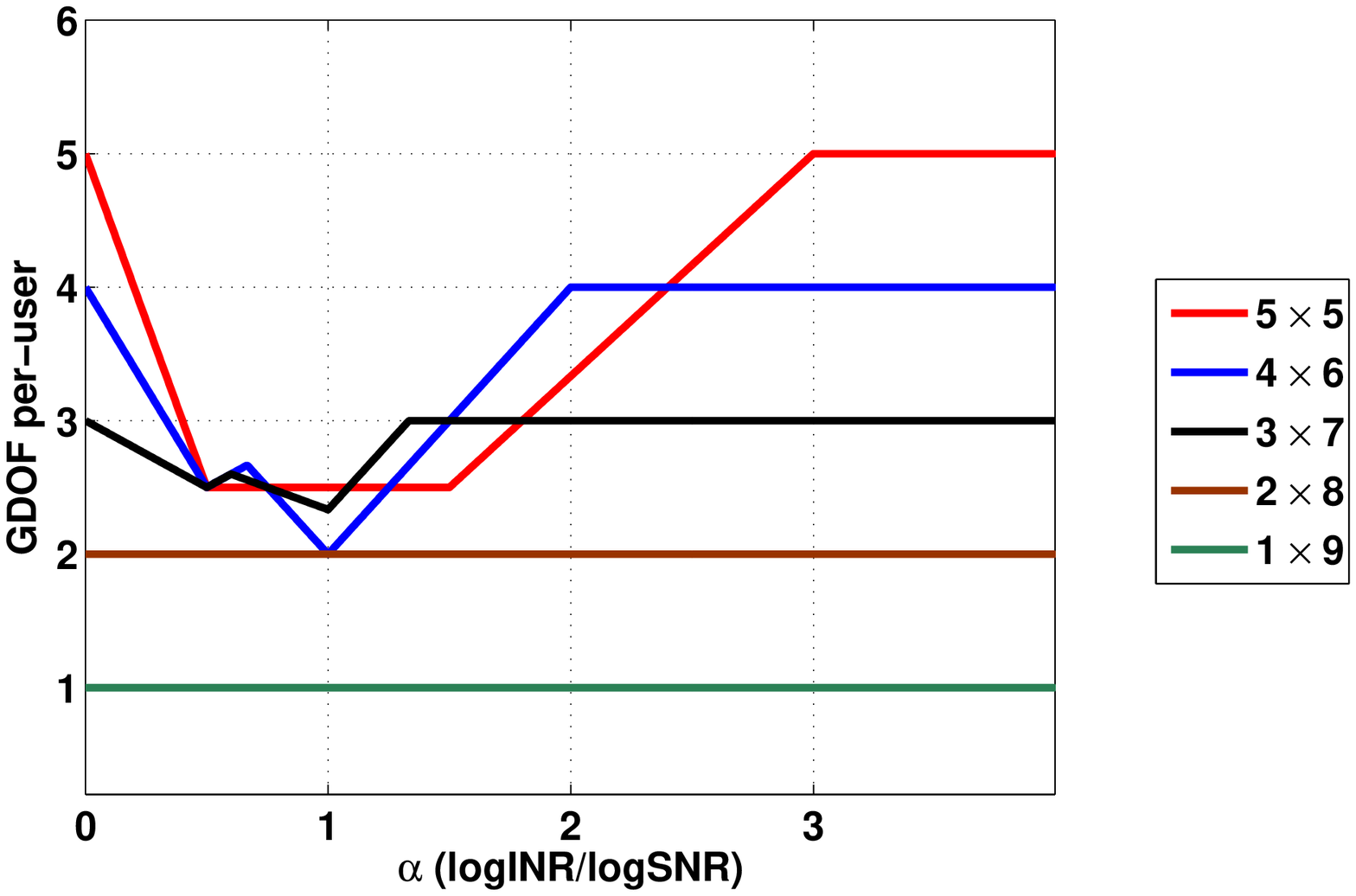}
\caption{The achievable GDOF for the $K = 3$ user symmetric MIMO GIC with different antenna configurations such that $M + N =10$.} 
\label{fig:K3MN10}
\end{center}
\end{figure}
\end{document}